\documentclass[11pt]{article}
\usepackage[left=1in,top=1in,right=1in,bottom=1in,head=.1in,nofoot]{geometry}

\setlength{\footskip}{24pt} 
\usepackage{setspace,url,bm,amsmath} 

\usepackage{titlesec} 
\titlelabel{\thetitle.\quad} 

\usepackage{graphicx} 
\usepackage{bbm}
\usepackage{latexsym}
\usepackage{caption}
\usepackage[margin=20pt]{subcaption}
\usepackage{hyperref}

\usepackage[table]{xcolor}

\newcommand{\GG}[1]{}

\usepackage{amsthm}
\usepackage{undertilde}
\theoremstyle{definition}
\newtheorem{assumption}{Assumption}
\newtheorem*{theorem*}{Theorem}
\newtheorem{theorem}{Proposition}

\newtheorem{lemma}{Lemma}
\newtheorem{example}{Example}

\newtheorem{corollary}{Corollary}
\newtheorem*{corollary*}{Corollary}

\usepackage{natbib} 
\bibpunct{(}{)}{;}{a}{}{,} 

\usepackage{etoolbox} 
\apptocmd{\sloppy}{\hbadness 10000\relax}{}{} 

\usepackage{color}
\usepackage{listings}

\def\ind{\begin{picture}(9,8)
         \put(0,0){\line(1,0){9}}
         \put(3,0){\line(0,1){8}}
         \put(6,0){\line(0,1){8}}
         \end{picture}
        }

\def\iidsim{\stackrel{i.i.d.}{\sim}}

\begin{document}
\doublespacing
\title{\bf Principal stratification analysis using principal scores}
\author{Peng Ding\footnote{University of California, Berkeley, California, U.S.A. Address for correspondence: Peng Ding, 425 Evans Hall, Berkeley, California 94720, USA.
Email: \texttt{pengdingpku@berkeley.edu}}~
and Jiannan Lu\footnote{
Microsoft Corporation, Redmond, Washington, U.S.A.
}}
\date{}
\maketitle

\begin{abstract}
Practitioners are interested in not only the average causal effect of the treatment on the outcome but also the underlying causal mechanism in the presence of an intermediate variable between the treatment and outcome. However, in many cases we cannot randomize the intermediate variable, resulting in sample selection problems even in randomized experiments. Therefore, we view randomized experiments with intermediate variables as semi-observational studies. In parallel with the analysis of observational studies, we provide a theoretical foundation for conducting objective causal inference with an intermediate variable under the principal stratification framework, with principal strata defined as the joint potential values of the intermediate variable. Our strategy constructs weighted samples based on principal scores, defined as the conditional probabilities of the latent principal strata given covariates, without access to any outcome data. This principal stratification analysis yields robust causal inference without relying on any model assumptions on the outcome distributions. We also propose approaches to conducting sensitivity analysis for violations of the ignorability and monotonicity assumptions, the very crucial but untestable identification assumptions in our theory. When the assumptions required by the classical instrumental variable analysis cannot be justified by background knowledge or cannot be made because of scientific questions of interest, our strategy serves as a useful alternative tool to deal with intermediate variables. We illustrate our methodologies by using two real data examples, and find scientifically meaningful conclusions.

\medskip
\noindent \textbf{Keywords:} Causal inference; Exclusion restriction; Intermediate variable; Monotonicity; Nonparametric identification; Principal ignorability; Sensitivity analysis.
\end{abstract}

\newpage
\section{Causal Inference with Intermediate Variables}\label{sec:intro}

When an intermediate variable between the treatment and outcome exists, often researchers are interested in not only the average causal effect of the treatment on the outcome but also the underlying causal mechanism in the presence of the intermediate variable. Naive analysis by conditioning on the observed value of the intermediate variable does not yield valid causal interpretations without imposing strong assumptions. Principal stratification \citep{Frangakis:2002}, defined as the joint potential values of the intermediate variable under both treatment and control, can be viewed as a pretreatment covariate unaffected by the treatment. Therefore, conditioning on principal stratification yields subgroup causal effects.

The subgroup causal effects classified by principal stratification have clear scientific meanings in various settings. For instance, when the intermediate variable is the actual treatment received, the principal stratification variable indicates the compliance status, and the classical instrumental variable estimator identifies the average causal effect for compliers \citep{Angrist:1996}. When the intermediate variable is the indicator for survival status, the only sensible subgroup causal effect on the outcome is the one for survivors who would potentially survive under both treatment and control \citep{Rubin:2006}. When the intermediate variable is a surrogate for the outcome, we want to predict the causal effect on the outcome by the causal effect on the surrogate. An ideal surrogate must satisfy the causal necessity that zero effect on the surrogate implies zero effect on the outcome \citep{Frangakis:2002} and the causal sufficiency that positive effect on the surrogate implies positive effect on the outcome \citep{Gilbert:2008}. Therefore, we can assess these requirements for an ideal surrogate by conducting a principal stratification analysis.

Principal stratification clarifies causal inference with intermediate variables, but it also results in inferential difficulties because of the missingness of the principal stratification variable and the consequential mixture distributions of the observed data.  We can sharpen inference about causal effects within principal strata only if we impose some of the following structural or modeling assumptions: (1) {\it monotonicity} that the treatment has a nonnegative effect on the intermediate variable for each unit \citep[e.g.,][]{Angrist:1996, Gilbert:2008}; (2) {\it exclusion restriction} that zero effect on the intermediate variable implies zero effect on the outcome \citep[e.g.,][]{Angrist:1996}; (3) {\it Normal outcome distributions} within principal strata \citep[e.g.,][]{Zhang:2009, Frumento:2012}; (4) {\it additional covariates or secondary outcomes} \citep{Ding:2011, Mattei:2011, Mattei:2013, Mealli:2013, Yang:2015, Jiang:2016}. For instance, the classical instrumental variable analysis requires exclusion restriction \citep{Angrist:1996}, which may not be justified by background knowledge or cannot be assumed according the scientific questions of interest. Without exclusion restriction, \citet{Zhang:2009} and \citet{Frumento:2012} assumed Normal outcome models within principal strata, and thus identifiability of the causal effects within principal strata is ensured by identifiability of the Normal Mixture Model. Unfortunately, the results are sensitive to the parametric modeling assumption, and the unbounded likelihood function jeopardizes statistical inference even with correctly specified model \citep{Ding:2011, Mealli:2015}. Without these assumptions, however, large sample bounds of causal effects are often too wide to be informative \citep{Zhang:2003, Cheng:2006}. We will review more applications and further highlight the inferential difficulty of principal stratification without exclusion restriction in Section \ref{sec:assum}.

Recognizing the salient feature that the intermediate is not randomized even though the treatment is randomized, we take an alternative perspective, viewing the problem as a semi-observational study. For objective causal inference, \citet{Rubin:2007, Rubin:2008} advocated designs of observational studies without access to the outcome data, which prevents data snooping by selecting favorable outcome models. In parallel with this classical wisdom of propensity scores in observational studies \citep{Rosenbaum:1983biometrika}, we propose to conduct principal stratification analysis based on principal scores, defined as the conditional probabilities of the latent principal strata given a rich set of covariates that ensure certain ignorability assumptions. Previously, applied researchers \citep{Follman:2000, Hill:2002, Jo:2009, Jo:2011, Stuart:2011} used principal scores to analyze data subject to one-sided noncompliance, and theoretical researchers \citep{Joffe:2007} suggested using principal scores to identify general causal effects within principal strata. We advance the literature by providing the theoretical foundation for using principal scores in the analysis of randomized experiments with intermediate variables. To be more specific, we give the assumptions for identification, extend previous literature to deal with general principal stratification problems beyond one-sided noncompliance, and propose statistically efficient and numerically stable weighting estimators for causal effects. The theoretical results allow for a two-step inferential procedure: we first construct weighted samples without access to the outcome data, and we then obtain simple weighting estimators for causal effects within principal strata. The whole inferential procedure does not involve any model assumptions of the outcomes, leading to more objective causal inference.

Furthermore, the central role of principal scores relies on certain ignorability and monotonicity assumptions. In parallel with sensitivity analysis in observational studies \citep{Rosenbaum:1983jrssb, Rosenbaum:2002}, we propose approaches to conducting sensitivity analysis for violations of the ignorability assumptions. Previous literature either dealt with binary outcomes  \citep[e.g.,][]{Sjolander:2009, Schwartz:2012} or relied on modeling assumptions on the outcomes \citep[e.g.,][]{Gilbert:2003}, but our strategy deals with general outcomes and relies on less modeling assumptions. Other than a few exceptions \citep{Zhang:2009, Ding:2011, Frumento:2012}, most principal stratification analyses assumed monotonicity which might be too restrictive for some applications. Our sensitivity analysis technique further removes the monotonicity assumption, and assesses the impact of its violations. The ignorability and monotonicity assumptions, though crucial for identifying the causal effects of interest, cannot be validated by observed data. Therefore, we argue that principal stratification analyses should always come with sensitivity analysis for violations of these assumptions.

The paper proceeds as follows. Section \ref{sec:assum} reviews the basic framework of principal stratification. Section \ref{sec:theory} defines principal scores and provides sufficient conditions for identifying causal effects within principal strata. Section \ref{sec::central-role-principal-scores} highlights the balancing properties of principal scores. Section \ref{sec:estimation} discusses estimation strategies that are efficient, stable and easy to implement. Section \ref{sec:sen} proposes approaches to conducting sensitivity analysis for the ignorability and monotonicity assumptions. We conduct simulation studies in Section \ref{sec:simu}, apply our methodologies to real data examples in Section \ref{sec:apply}, and conclude in Section \ref{sec::discussion}. We provide proofs and technical details in the on-line supplementary material.

%
%

\section{Potential Outcomes and Principal Stratification}\label{sec:assum}

Consider a randomized controlled experiment with $N$ units. We collect pretreatment covariates $\bm{X}_i$ for each unit $i$ before the experiment. Let $Z_i$ be the treatment assignment for unit $i$, with $Z_i=1$ for treatment and $Z_i=0$ for control. We also collect the outcome of interest $Y_i$ for unit $i$, which can be general (continuous, binary, time-to-event, etc.). In practice, we may also collect some intermediate variables between the treatment and outcome that are helpful to explain the underlying causal mechanism and treatment effect heterogeneity. We will first focus on the case with a binary intermediate variable $S$, because the binary case has the widest applications as illustrated by examples in a later part of this section. We will also comment on general $S$ later.

We use the potential outcomes framework to define causal effects. Under the Stable Unit Treatment Value Assumption (SUTVA; \citealp{Rubin:1980}), there is only one version of the treatment, and there is no interference between units. The SUTVA allows us to define the potential values of the intermediate variable and outcome for unit $i$ as $S_i(t)$ and $Y_i(t)$ under treatment $t$ for $t=0$ and $1$. Completely randomized experiments satisfy the following treatment assignment mechanism, which we will make use of throughout the paper.

\begin{assumption}
[Randomization]
\label{assume::randomization}
$Z\ind \{  S(1), S(0), Y(1), Y(0), \bm{X}  \}$.
\end{assumption}

\cite{Frangakis:2002} introduced the notion of principal stratification, defined as the joint potential values of the intermediate variable $U_i =  \{  S_i(1) , S_i(0) \} \in \{0,1\}^2$. For simplicity, we relabel the possible values of $U$, $ \{  (1,1), (1,0), (0,1),  (0,0)\} $, as $\{  ss, s\bar{s}, \bar{s}s,  \bar{s}\bar{s} \}$, respectively.
Because the principal stratification variable is unaffected by the treatment, inference conditional on $U$ yields a subgroup causal interpretation, captured by the following principal causal effect (PCE):
\begin{eqnarray*}
ACE_u = E\{  Y(1) - Y(0)\mid U=u  \}   \quad (u  =  ss, s\bar{s}, \bar{s}s, \bar{s}\bar{s}  ). 
\end{eqnarray*}
The notion of PCE is not only mathematically sound for causal evaluations, but also of scientific interest in practice.
Below, we review some important empirical applications, and discuss the scientific meanings of PCEs in each case.

\begin{example}
[Noncompliance]\label{eg::noncomplicance}
Let $S$ denote the actual treatment received, and noncompliance occurs if the treatment assignment differs from the treatment received. \cite{Angrist:1996} called $ ss, s\bar{s}, \bar{s}s,$ and $ \bar{s}\bar{s} $ always-taker, complier, defier, and never-taker, respectively. 
\end{example}

\begin{example}
[Truncation by death]\label{eg::truncation-by-death}
When some units die before measurements of their outcomes $Y$, the truncation by death problem occurs. Let $S$ be the survival status, with $S= 1$ for survival and $S=0$ for dead.
For dead patients with $S=0$, the corresponding outcome $Y$ is not well-defined.
\cite{Rubin:2006} argued that the only scientifically meaningful subgroup causal effect is $ACE_{ss}$, the survivor average causal effect, defined as the average causal effect among units who will potentially survive under both treatment and control.
Other subgroup causal effects are not well defined due to their unmeasured outcome under either treatment or control or both.
\end{example}

There are at least two problems similar to truncation by death.
In labor economics where $S$ is the employment status and $Y$ is the income, the only sensible causal effect is $ACE_{ss}$, the average causal effect among the always employed units (\citealp{Zhang:2003}, \citealp{Zhang:2009}).
In vaccine trials where $S$ is the infection status and $Y$ is a post-infection outcome, we are interested in the causal effect of vaccine on the post-infection outcome among units who would develop infection under both treatment and control \citep{Gilbert:2003, Hudgens:2006}.

\begin{example}
[Surrogate]\label{eg::surrogate-evalution}
Surrogate is of great importance in clinical trials, when the measurement of the primary outcome is costly or time-consuming.
Let $S$ denote the surrogate for the outcome $Y$. The goal of using the surrogate is to predict the causal effect on the outcome by the causal effect on the surrogate.
\cite{Frangakis:2002} argued that a good surrogate should satisfy the ``causal necessity,'' i.e., whenever the treatment has no effect on the surrogate, it has no effect on the outcome ($ACE_{ss} = 0$ and $ACE_{\bar{s}\bar{s}} = 0$). As a complement, \cite{Gilbert:2008} further argued that a good surrogate also should satisfy the ``causal sufficiency,'' i.e., whenever the treatment affects the surrogate, it also affects the outcome ($ACE_{s\bar{s}} \neq 0$ and $ACE_{\bar{s}s} \neq 0$).
\end{example}

In practice, a particular data set may simultaneously have more than one of the problems discussed in Examples \ref{eg::noncomplicance}--\ref{eg::surrogate-evalution} (\citealp{Mattei:2007}, \citealp{Frumento:2012}).
In all the examples above, estimation of $ACE_u$ is crucial for the substantive questions of interest. However, the inherent missingness of $U$, due to the ability of measuring only one of $S(1)$ and $S(0)$, jeopardizes the identification of the PCEs without some additional assumptions. In the following, we review some commonly-used assumptions, and discuss their plausibility and limitations.

\begin{assumption}
[Strong Monotonicity]\label{assume::strong-monotonicity}
$S_i(0) = 0$ for all $i$.
\end{assumption}

In Example \ref{eg::noncomplicance} of noncompliance, when the control units have no access to receive the active treatment, Strong Monotonicity holds by the design of experiments. It is sometimes referred to as one-sided noncompliance \citep{Imbens:2015}, which allows us to rule out the always takers ($U=ss$) and defiers ($U=\bar{s}s$).
In the literature on surrogates, Strong Monotonicity is closely related to the ``constant biomarker'' assumption (\citealp{Gilbert:2008}).

\begin{assumption}
[Monotonicity]\label{assume::monotonicity}
$S_i(1) \geq S_i(0)$ for all $i$.
\end{assumption}

Monotonicity rules out only the defiers ($U = \bar{s}s$). In general, we cannot test Monotonicity by the observed data unless $\Pr(S=1\mid Z=1) < \Pr(S=1\mid Z=0).$

\begin{assumption}
[Exclusion Restriction]\label{assume::exclusion-restriction}
$Y_i(1)=Y_i(0)$ for $U_i = ss$ and $\bar{s}\bar{s}$.
\end{assumption}

Exclusion Restriction (ER) implies that $ACE_{ss} = ACE_{\bar{s}\bar{s}}  = 0$. In Example \ref{eg::noncomplicance} of noncompliance, ER is plausible in double-blinded trials because the outcome may be affected only by the treatment received.
\cite{Angrist:1996} showed that under Monotonicity
and ER, the complier average causal effect, $ACE_{s\bar{s}}$, is identified by the ratio of the average causal effects on $Y$ and $S.$

However, in many open-label trials, the treatment assignment may have a ``direct effect'' on the outcome, and ER may not hold \citep[e.g.,][]{Hirano:2000}. What is more important, we cannot assume ER in the truncation by death and surrogate problems, because in these settings it is the question of concern to test whether $ACE_{ss}$ or $ACE_{\bar{s}\bar{s}}$ is zero. Imposing ER immediately discards the very scientific question of interest, which is not reasonable. Unfortunately, if we do not impose ER due to either background knowledge or substantive questions of interest, we can no longer nonparametrically identify the PCEs without further assumptions. In this paper, we will discuss alternative sufficient conditions that ensure nonparametric identifiability of the PCEs, and propose estimators that rely on minimal modeling assumptions.

\section{Nonparametric Identification of Principal Causal Effects}\label{sec:theory}

Our identification strategy, in parallel with the notion of propensity score in observational studies, exploits principal scores defined as the probabilities of the latent principal strata given a rich set of pretreatment covariates. Although in the existing literature the principal scores are used in the one-sided noncompliance problem, its rigorous theoretical foundation is lacking, and more importantly it cannot deal with more general cases. We fill in the gap by demonstrating some general identification results based on  principal scores.

\subsection{Principal Scores}

Although we cannot uniquely recover the unobserved principal strata indicators, we can create weighted samples based on principal scores:
$$
e_u(\bm{X}) = \Pr(U =  u \mid \bm{X}) \quad (u =  ss,s\bar{s},\bar{s}s,\bar{s}\bar{s}  ).
$$
Define $p_1=\Pr(S=1\mid Z=1)$ and $p_0=\Pr(S=1\mid Z=0)$ as the probabilities of $S$ under treatment and control, and the analogous conditional probabilities as $p_1(\bm{X})=\Pr(S=1\mid Z=1, \bm{X})$ and $p_0(\bm{X})=\Pr(S=1\mid Z=0, \bm{X})$ given covariates $\bm{X}$.

Under Strong Monotonicity, two strata $s\bar{s}$ and $\bar{s}\bar{s}$ exist. The observed data with $(Z=1,S=1)$ contain only strata $U=s\bar{s}$, and the observed data with $(Z=1,S=0)$ contain only strata $U=\bar{s}\bar{s}.$
Therefore, we can use the treatment arm to identify the principal scores by
$
e_{s\bar{s}}(\bm{X})   = p_1(\bm{X})
$
and
$
e_{\bar{s}\bar{s}}(\bm{X})   = 1- p_1(\bm{X}) ,
$
and the proportions of the two principal strata by $\pi_{s\bar{s}} = p_1$ and $\pi_{\bar{s}\bar{s}} = 1-p_1$.

Under Monotonicity, three strata $ss$, $s\bar{s}$, and $\bar{s}\bar{s}$ exist. The observed data with $(Z=1,S=0)$ contain only strata $U=\bar{s}\bar{s}$, and the observed data with $(Z=0,S=1)$ contain only strata $U=ss$. Therefore, we can identify the principal scores by
$
e_{ss}(\bm{X}) = p_0(\bm{X}),
$
$
e_{\bar{s}\bar{s}} (\bm{X}) = 1-p_1(\bm{X}),
$
and
$
e_{s\bar{s}}(\bm{X}) = p_1(\bm{X})-p_0(\bm{X}),
$
and the proportions of the three principal strata by $\pi_{ss} =p_0, \pi_{\bar{s}\bar{s}} =1-p_1$, and $\pi_{s\bar{s}} = p_1-p_0$.

The above discussion demonstrates nonparametric identification of principal scores under (Strong) Monotonicity. We postpone the discussion of  modeling principal scores to Section \ref{subsec::ps}.

\subsection{Principal Ignorability and Nonparametric Identification}

The observed data are mixtures of at most two latent principal strata. Our goal is to disentangle the latent components of the outcome distributions. Although we can view them as the weights for the latent subgroup indicators, principal scores themselves alone are not sufficient to identify the PCEs. The following principal ignorability assumptions \citep{Jo:2009, Jo:2011, Stuart:2011}, in parallel with the ignorability assumption in observational studies \citep{Rosenbaum:1983biometrika}, are sufficient conditions to nonparametrically identify the PCEs.

\paragraph{Under Strong Monotonicity}
We invoke the following version of Principal Ignorability (PI).

\begin{assumption}
[PI]\label{assume::principal-ignorability}
$Y(0) \ind U \mid \bm{X}$.
\end{assumption}

A weaker version of the above assumption, shown in Section \ref{subsec:GPIsen}, also suffices for our later discussion on identification. Here we use a stronger version for easy interpretation. PI assumes conditional independence of $Y(0)$ and $U$ given $\bm{X}$, i.e., a random allocation of the principal stratification variable with respect to the control potential outcome given $\bm{X}$. PI requires an adequate set of covariates $\bm{X}$, conditional on which there is no difference between the distributions of the control potential outcomes across principal strata $U=s\bar{s}$ and $U=\bar{s}\bar{s}.$

With the identifiability of the principal scores, PI further helps identify $ACE_u$.

\begin{theorem}
\label{thm::identifiability-strong-monotonicity}
Under Strong Monotonicity and PI, we can identify the PCEs by
\begin{eqnarray*}
ACE_{s\bar{s}} &=& E(Y\mid Z=1, S=1) - E\{   w_{s\bar{s}} (\bm{X}) Y \mid Z=0   \},\\
ACE_{\bar{s}\bar{s}} &=& E(Y\mid Z=1, S=0) - E\{  w_{\bar{s}\bar{s}}(\bm{X}) Y\mid Z=0  \},
\end{eqnarray*}
where $ w_{s\bar{s}}(\bm{X}) = e_{s\bar{s}}(\bm{X}) / \pi_{s\bar{s}} $ and $ w_{\bar{s}\bar{s}} (\bm{X})  =  e_{\bar{s}\bar{s}}(\bm{X})  / \pi_{\bar{s} \bar{s}}$.
\end{theorem}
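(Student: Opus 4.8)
The plan is to identify each $ACE_u = E\{Y(1)\mid U=u\} - E\{Y(0)\mid U=u\}$ separately through its treated and control potential-outcome means, for $u = s\bar{s}$ and $\bar{s}\bar{s}$. The treated means turn out to be directly observable, whereas the control means require the full force of PI together with the identified principal scores; the latter is where the real work lies. The backbone of the argument is iterated expectations over $\bm{X}$, combined with consistency ($Y=Y(1)$ when $Z=1$, $Y=Y(0)$ when $Z=0$), Randomization, and PI.

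For the treated means, I would first note that under Strong Monotonicity ($S(0)=0$ for all units) the principal stratum is a deterministic function of $S(1)$: a unit lies in $s\bar{s}$ exactly when $S(1)=1$, and in $\bar{s}\bar{s}$ exactly when $S(1)=0$. Hence in the treated arm, where $S=S(1)$ and $Y=Y(1)$ are observed, the event $\{Z=1,S=1\}$ isolates stratum $s\bar{s}$ and $\{Z=1,S=0\}$ isolates $\bar{s}\bar{s}$. Applying Randomization to strip the conditioning on $Z$ then gives $E(Y\mid Z=1,S=1) = E\{Y(1)\mid U=s\bar{s}\}$ and $E(Y\mid Z=1,S=0) = E\{Y(1)\mid U=\bar{s}\bar{s}\}$.

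The control means are the crux, because in the control arm everyone has $S=0$, so the two strata are completely confounded within the observed data. I would fix $u=s\bar{s}$ and write $E\{Y(0)\mid U=s\bar{s}\}$ as an iterated expectation over $\bm{X}$ given $U=s\bar{s}$. PI lets me replace the inner stratum-specific regression $E\{Y(0)\mid U=s\bar{s},\bm{X}\}$ by the common regression $E\{Y(0)\mid\bm{X}\}$. A Bayes-rule rewriting of the conditional covariate density, $f(\bm{x}\mid U=s\bar{s}) = e_{s\bar{s}}(\bm{x})f(\bm{x})/\pi_{s\bar{s}}$, then converts the stratum-restricted average into the weighted marginal average $E[w_{s\bar{s}}(\bm{X})E\{Y(0)\mid\bm{X}\}]$, where $w_{s\bar{s}}=e_{s\bar{s}}/\pi_{s\bar{s}}$ is identified from the treated arm under Strong Monotonicity. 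Finally, Randomization supplies both $E\{Y(0)\mid\bm{X}\} = E(Y\mid Z=0,\bm{X})$ and $\bm{X}\ind Z$, so the weighted marginal average collapses to the observable $E\{w_{s\bar{s}}(\bm{X})Y\mid Z=0\}$; the same chain with $e_{\bar{s}\bar{s}}$ and $\pi_{\bar{s}\bar{s}}$ handles stratum $\bar{s}\bar{s}$. Subtracting each control mean from its treated mean yields the stated formulas.

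The step I expect to be most delicate is this control-arm identification, specifically the interplay of the three assumptions: PI is what allows the stratum label to be dropped inside the regression, the principal scores from Strong Monotonicity are what supply the correct Bayes reweighting, and Randomization is what both links $E\{Y(0)\mid\bm{X}\}$ to the observed control regression and guarantees that the reweighted marginal expectation coincides with the conditional-on-$\{Z=0\}$ expectation that actually appears in the theorem. Keeping careful track of which expectation is marginal versus conditional on $Z=0$, and verifying that $\bm{X}\ind Z$ makes the two agree, is the main bookkeeping hazard rather than any deep difficulty.
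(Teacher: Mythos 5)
Your proof is correct, and it reaches the identification formulas by a more direct route than the paper does. Your treated-arm argument is identical to the paper's. For the control arm, you condition directly on $\bm{X}$: iterate expectations over $\bm{X}$ given $U=u$, use PI to drop the stratum label from the inner regression, and use Bayes' rule $f(\bm{x}\mid U=u)=e_u(\bm{x})f(\bm{x})/\pi_u$ to turn the stratum-conditional covariate distribution into a weighted marginal one, then invoke Randomization twice (to equate $E\{Y(0)\mid\bm{X}\}$ with $E(Y\mid Z=0,\bm{X})$ and to equate the marginal law of $\bm{X}$ with its law given $Z=0$). The paper instead routes everything through the \emph{scalar} principal score: it first proves two lemmas --- that $e_u(\bm{X})$ is a balancing score ($\bm{X}\ind U\mid e_u(\bm{X})$) and that PI given $\bm{X}$ implies $Y(0)\ind U\mid e_u(\bm{X})$ --- and then applies an importance-sampling identity to the induced distribution of $e_u(\bm{X})$, with $f\{e_u(\bm{X})\mid U=u\}/f\{e_u(\bm{X})\}=e_u(\bm{X})/\pi_u$ playing the role your Bayes-rule step plays for $\bm{X}$. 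The two arguments rest on the same density-ratio identity; what your version buys is brevity and fewer moving parts, while the paper's version buys the structural Rosenbaum--Rubin-type insight that conditioning on the one-dimensional score alone suffices, and those score-level lemmas are then reused in the sensitivity-analysis proofs (the paper's Proposition 3), so they are not wasted effort in the larger development.
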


The treatment group does not involve mixture distributions.
The control group is a mixture of two strata $s\bar{s}$ and $\bar{s}\bar{s}$, and Proposition \ref{thm::identifiability-strong-monotonicity} shows that the weight $w_u(\bm{X})$ is the ratio of the principal score over the marginal proportion of stratum $u.$

\paragraph{Under Monotonicity}
We invoke the following General Principal Ignorability (GPI).

\begin{assumption}
[GPI]\label{assume::GPI}
$Y(z) \ind U \mid  \bm{X}$ for $z=0$ and $1$.
\end{assumption}

Again, a weaker version of GPI suffices to identify PCEs as discussed in Section \ref{subsec:GPIsen}, but the stronger version enjoys easier interpretation. The mathematical form of the above assumption is similar to the ignorability assumption in observational studies \citep{Rosenbaum:1983biometrika}, with $U$ being the latent principal stratification variable instead of the treatment indicator. Intuitively, the conditional independence of GPI requires enough covariates $\bm{X}$ remove all ``confounding'' between $U$ and $Y$. More precisely, conditional on $\bm{X}$, there is no difference between the distributions of the treatment potential outcomes across strata $U=ss$ and $U=s\bar{s}$, and no difference between the distributions of the control potential outcomes across strata $U=\bar{s}\bar{s}$ and $U=s\bar{s}$. These interpretations will become more apparent in Section \ref{subsec:GPIsen}. See \citet{Guo:2014} for a slightly different view on GPI.

\begin{theorem}
\label{thm::identifiability-monotonicity-GPI}
Under Monotonicity and GPI, we can identify the PCEs by
\begin{eqnarray*}
ACE_{s\bar{s}}  &=& E\{   w_{1 , s\bar{s} }(\bm{X})  Y\mid Z=1, S=1   \}  - E\{   w_{ 0 , s\bar{s}}(\bm{X} ) Y\mid Z=0,S=0  \} ,\\
ACE_{\bar{s}\bar{s}} &=& E(Y\mid Z=1,S=0) - E\{  w_{0, \bar{s}\bar{s} }(\bm{X}) Y\mid Z=0, S=0   \},\\
ACE_{ss} &=& E\{  w_{1, ss } (\bm{X})  Y\mid Z=1, S=1  \} - E(Y\mid Z=0, S=1),
\end{eqnarray*}
where
\begin{eqnarray*}
w_{1 , s\bar{s} } (\bm{X}) = \frac{ e_{s\bar{s}} (\bm{X}) }{  e_{s\bar{s}} (\bm{X}) + e_{ss} (\bm{X})  } \Big/ \frac{\pi_{s\bar{s}}}{  \pi_{s\bar{s}}   +  \pi_{ss}  } , & &
w_{0, s\bar{s} } (\bm{X}) = \frac{ e_{s\bar{s}} (\bm{X}) }{  e_{s\bar{s}} (\bm{X}) + e_{ \bar{s}\bar{s}} (\bm{X})  } \Big/ \frac{\pi_{s\bar{s}}}{  \pi_{s\bar{s}}   +  \pi_{\bar{s}\bar{s}}  } ,\\
w_{0, \bar{s}\bar{s} } (\bm{X} )  = \frac{ e_{\bar{s}\bar{s}} (\bm{X}) }{  e_{s\bar{s}} (\bm{X}) + e_{ \bar{s}\bar{s}} (\bm{X})  }   \Big/ \frac{\pi_{\bar{s}\bar{s}}}{  \pi_{s\bar{s}}   +  \pi_{\bar{s}\bar{s}}  } , &&
w_{1, ss} (\bm{X}) = \frac{ e_{ss} (\bm{X}) }{  e_{s\bar{s}} (\bm{X}) + e_{ss} (\bm{X})  } \Big/ \frac{\pi_{ss}}{  \pi_{s\bar{s}}   +  \pi_{ss}  } .
\end{eqnarray*}
\end{theorem}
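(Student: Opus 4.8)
The plan is to rewrite each PCE as a difference of two stratum-specific potential-outcome means, $ACE_u = E\{Y(1)\mid U=u\} - E\{Y(0)\mid U=u\}$, and then identify each such mean from a single observed cell. First I would record, by the same cell-by-cell reasoning used above to identify the principal scores under Monotonicity (Assumption \ref{assume::monotonicity}), the mixture structure of the four observed cells: since $Z=1$ reveals $S(1)$ and $Z=0$ reveals $S(0)$, the cell $(Z=1,S=0)$ is purely stratum $\bar{s}\bar{s}$ and $(Z=0,S=1)$ is purely $ss$, while $(Z=1,S=1)$ is a mixture of $ss$ and $s\bar{s}$ and $(Z=0,S=0)$ is a mixture of $s\bar{s}$ and $\bar{s}\bar{s}$.

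The two pure cells are immediate and require no weighting. Because $Y=Y(1)$ on $\{Z=1\}$, randomization (Assumption \ref{assume::randomization}) lets me drop the conditioning on $Z$ and gives $E(Y\mid Z=1,S=0)=E\{Y(1)\mid U=\bar{s}\bar{s}\}$; symmetrically $E(Y\mid Z=0,S=1)=E\{Y(0)\mid U=ss\}$. These supply the treatment mean for $\bar{s}\bar{s}$ and the control mean for $ss$ exactly as they appear unweighted in the stated formulas.

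The substance is in the two mixed cells, which I would handle by a single template; take $E\{w_{1,s\bar{s}}(\bm{X})Y\mid Z=1,S=1\}$. Since $Y=Y(1)$ on $\{Z=1\}$ and, under Monotonicity, $\{Z=1,S=1\}$ is the event $\{Z=1,\,U\in\{ss,s\bar{s}\}\}$, randomization lets me drop $Z$ and write this as $E\{w_{1,s\bar{s}}(\bm{X})Y(1)\mid U\in\{ss,s\bar{s}\}\}$. Conditioning on $\bm{X}$ and invoking GPI (Assumption \ref{assume::GPI}) collapses $E\{Y(1)\mid U=u,\bm{X}\}$ to the common $\mu_1(\bm{X})=E\{Y(1)\mid\bm{X}\}$ for both $u\in\{ss,s\bar{s}\}$, while the within-$\bm{X}$ masses of the two strata are the principal scores $e_{ss}(\bm{X})$ and $e_{s\bar{s}}(\bm{X})$. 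The point of the weight is then purely algebraic: the factor $e_{s\bar{s}}(\bm{X})/\{e_{s\bar{s}}(\bm{X})+e_{ss}(\bm{X})\}$ picks out the $s\bar{s}$ share within $\bm{X}$, its denominator cancels the total cell mass $e_{s\bar{s}}(\bm{X})+e_{ss}(\bm{X})$ produced by GPI, and the constant $(\pi_{s\bar{s}}+\pi_{ss})/\pi_{s\bar{s}}$ cancels the marginal normalizer $\pi_{ss}+\pi_{s\bar{s}}$, leaving $\pi_{s\bar{s}}^{-1}E\{e_{s\bar{s}}(\bm{X})\mu_1(\bm{X})\}$, which is exactly $E\{Y(1)\mid U=s\bar{s}\}$. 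Replacing $e_{s\bar{s}}$ by $e_{ss}$ yields $E\{w_{1,ss}(\bm{X})Y\mid Z=1,S=1\}=E\{Y(1)\mid U=ss\}$, and the mirror computation on $(Z=0,S=0)$ with $\mu_0(\bm{X})=E\{Y(0)\mid\bm{X}\}$ delivers the control means for $s\bar{s}$ and $\bar{s}\bar{s}$.

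I expect the main obstacle to be bookkeeping rather than any deep difficulty: one must verify that each weight is normalized correctly, i.e. that the Radon--Nikodym-type reweighting from the ``within-cell, within-$\bm{X}$'' proportion to the target marginal stratum proportion matches, and in particular that the two competing strata in each mixed cell carry exactly the denominators needed for the GPI-collapsed $\mu_z(\bm{X})$ to factor out cleanly. Once the four mixed-cell identities and the two pure-cell identities are in hand, assembling them into the three displayed expressions for $ACE_{s\bar{s}}$, $ACE_{\bar{s}\bar{s}}$, and $ACE_{ss}$ is immediate.
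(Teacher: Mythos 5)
Your proposal is correct, but it takes a genuinely different route from the paper's proof. The paper works through the \emph{scalar} conditional principal scores $e_{1,u}(\bm{X}) = e_u(\bm{X})/\{e_{s\bar{s}}(\bm{X})+e_{ss}(\bm{X})\}$ (and the analogous $e_{0,u}(\bm{X})$): it first proves a balancing lemma, $\bm{X}\ind U \mid \{Z=1,S=1,e_{1,u}(\bm{X})\}$, then a sufficiency lemma, $Y(1)\ind U \mid \{Z=1,S=1,e_{1,u}(\bm{X})\}$ (derived from GPI and randomization), and finally identifies the weight $w_{1,u}(\bm{X})$ as a ratio of densities of the scalar score, $f\{e_{1,u}(\bm{X})\mid U=u,Z=1,S=1\}\big/f\{e_{1,u}(\bm{X})\mid Z=1,S=1\}$, so that the conclusion follows from an importance-sampling identity. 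You instead condition directly on the full covariate vector $\bm{X}$: GPI collapses the stratum-specific conditional means to $\mu_z(\bm{X})$, the within-cell stratum proportions given $\bm{X}$ are ratios of principal scores (by randomization), and the weights are engineered so that all normalizers cancel, leaving $\pi_u^{-1}E\{e_u(\bm{X})\mu_z(\bm{X})\}$ on both sides. Your computation is valid (including the treatment of the two pure cells and the dropping of $Z$ via randomization), and it is shorter and more transparent; it also makes immediately visible that only the mean-version of GPI, $E\{Y(z)\mid U=u,\bm{X}\}$ constant across the relevant strata, is needed---a point the paper asserts but realizes less directly. What the paper's longer route buys is the dimension-reduction structure itself: the balancing and sufficiency of the one-dimensional score parallel classical propensity-score theory, and those same lemmas (A.2--A.5) together with the importance-sampling identity are reused essentially verbatim in the proofs of the sensitivity-analysis results (Propositions 3 and 4), whereas your direct argument would have to be redone with the $\varepsilon$-tilted weights there.
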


The observed data with $(Z=1,S=0)$ and $(Z=0,S=1)$ do not involve mixture distributions.
The observed data with $(Z =1,S =1)$ contain a mixture of two strata $s\bar{s}$ and $ss$, and the weight $w_{1,u}(\bm{X})$ is the probability of stratum $u$ conditional on $(Z=1,S=1,\bm{X})$ divided by the probability conditional only on $(Z=1,S=1)$.
Similar discussion applies to the observed data with $(Z =0, S =0)$ and the weight $w_{0,u}(\bm{X})$.

\section{Balancing Properties of Principal Scores}
\label{sec::central-role-principal-scores}

\subsection{Balancing Properties}\label{subsec::psbalance}

Principal scores play a crucial role in the theory developed in the last section. Therefore, it is of practical importance to select a principal score model that is close to the truth.
Fortunately, we can use the following balancing conditions for any function of the covariates, $h(\bm{X})$, to guide our choice of the model for $\Pr(U\mid \bm{X})$.

\begin{corollary}
\label{corollary::SM}
Under Strong Monotonicity, we have
\begin{eqnarray*}
E\{ h(\bm{X})\mid Z=1, S=1\} &=& E\{  w_{s\bar{s}}(\bm{X}) h(\bm{X}) \mid Z=0\}, \\
E\{ h(\bm{X})\mid Z=1, S=0\} &=& E\{  w_{\bar{s}\bar{s}}(\bm{X})  h(\bm{X}) \mid Z=0\}.
\end{eqnarray*}
\end{corollary}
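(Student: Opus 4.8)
The plan is to prove the two identities by direct computation, using only Randomization (Assumption~\ref{assume::randomization}) and the identification of the principal scores under Strong Monotonicity. Crucially, no ignorability assumption on the outcome is needed here, because both sides of each identity are functionals of $(\bm{X}, Z, S)$ alone and never involve $Y$. This is precisely why the corollary can serve as a model diagnostic without looking at the outcome. I would handle the two displayed equations symmetrically and present the first in full.

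First I would rewrite the left-hand side by the law of iterated expectations. Writing $\mathbf{1}(\cdot)$ for the indicator,
\[
E\{h(\bm{X})\mid Z=1,S=1\} = \frac{E\{h(\bm{X})\mathbf{1}(S=1)\mid Z=1\}}{\Pr(S=1\mid Z=1)},
\]
and conditioning on $\bm{X}$ inside the numerator gives $E\{h(\bm{X})\mathbf{1}(S=1)\mid Z=1\} = E\{h(\bm{X})p_1(\bm{X})\mid Z=1\}$, since $E\{\mathbf{1}(S=1)\mid Z=1,\bm{X}\} = p_1(\bm{X})$ by definition. Randomization yields $Z\ind \bm{X}$, so the conditioning on $Z=1$ may be dropped from this function of $\bm{X}$, while the denominator equals $\Pr(S=1\mid Z=1)=p_1$. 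Hence the left-hand side equals $E\{h(\bm{X})p_1(\bm{X})\}/p_1$.

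Next I would expand the right-hand side. Under Strong Monotonicity the principal scores give $w_{s\bar{s}}(\bm{X}) = e_{s\bar{s}}(\bm{X})/\pi_{s\bar{s}} = p_1(\bm{X})/p_1$. Because $w_{s\bar{s}}(\bm{X})h(\bm{X})$ is again a function of $\bm{X}$ only, Randomization lets me drop the conditioning on $Z=0$, so $E\{w_{s\bar{s}}(\bm{X})h(\bm{X})\mid Z=0\} = E\{p_1(\bm{X})h(\bm{X})\}/p_1$, which coincides with the left-hand side. The second identity follows by the same argument with $S=0$ in place of $S=1$, using $E\{\mathbf{1}(S=0)\mid Z=1,\bm{X}\} = 1-p_1(\bm{X})$ and $w_{\bar{s}\bar{s}}(\bm{X}) = \{1-p_1(\bm{X})\}/(1-p_1)$.

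I do not expect a genuine obstacle here; the content is a short application of iterated expectations together with the independence $Z\ind\bm{X}$. The only point requiring care is bookkeeping with Randomization, namely that conditioning on a treatment arm is immaterial for expectations of covariate functions. As a shortcut worth noting, the same conclusion follows from Proposition~\ref{thm::identifiability-strong-monotonicity} by substituting the pseudo-outcome $h(\bm{X})$ for $Y$: such a pseudo-outcome is unaffected by treatment, so its principal causal effect vanishes, and PI holds for it automatically because it is $\bm{X}$-measurable; the identity $0 = E\{h(\bm{X})\mid Z=1,S=1\} - E\{w_{s\bar{s}}(\bm{X})h(\bm{X})\mid Z=0\}$ is then exactly the first balancing equation.
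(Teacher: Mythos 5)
Your proof is correct, and your primary argument takes a genuinely different route from the paper's. The paper's entire proof is the shortcut you mention in your closing paragraph: treat $h(\bm{X})$ as a pseudo-outcome on which the treatment has zero PCEs and invoke Proposition \ref{thm::identifiability-strong-monotonicity} directly (your observation that PI holds automatically for an $\bm{X}$-measurable pseudo-outcome is the implicit justification for why that proposition applies). Your main argument instead proceeds by direct computation: both sides reduce to $E\{h(\bm{X})p_1(\bm{X})\}/p_1$ (respectively $E[h(\bm{X})\{1-p_1(\bm{X})\}]/(1-p_1)$) using iterated expectations, the identifications $e_{s\bar{s}}(\bm{X})=p_1(\bm{X})$ and $\pi_{s\bar{s}}=p_1$ under Strong Monotonicity, and the independence $Z \ind \bm{X}$ from Randomization. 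What your route buys is transparency and self-containment: it makes explicit that the balancing identities involve only $(\bm{X},Z,S)$ and follow from Randomization plus Strong Monotonicity alone, with no reference to the outcome, to PI, or to the machinery behind Proposition \ref{thm::identifiability-strong-monotonicity} (the importance-sampling and conditional-independence lemmas in the supplement). This is conceptually apt, since the corollary's purpose is precisely to serve as an outcome-free diagnostic for the principal-score model. What the paper's route buys is brevity and reuse of an already-proved result. Both arguments are valid, and your write-up effectively contains both.
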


\begin{corollary}
\label{corollary::M}
Under Monotonicity, we have
\begin{eqnarray*}
E\{   w_{1 , s\bar{s} }   (\bm{X})  h(\bm{X})  \mid Z=1, S=1   \}
&=&  E\{   w_{ 0 , s\bar{s}}  (\bm{X} ) h(\bm{X}) \mid Z=0,S=0  \} ,\\
E\{ h(\bm{X})  \mid Z=1,S=0 \}
&=& E\{  w_{0, \bar{s}\bar{s} } (\bm{X}) h(\bm{X}) \mid Z=0, S=0   \},\\
E\{  w_{1, ss }   (\bm{X})  h(\bm{X})   \mid Z=1, S=1  \}
&=& E\{ h(\bm{X})\mid Z=0, S=1\}.
\end{eqnarray*}
\end{corollary}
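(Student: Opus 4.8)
The plan is to reduce all three identities to a single fact: for each stratum $u$, an appropriately weighted average of $h(\bm{X})$ over an observed cell reproduces the within-stratum mean $E\{h(\bm{X})\mid U=u\}$, so the two sides of each equation agree because they target the same stratum. The key observation that makes this work is that $h$ depends on $\bm{X}$ alone. Consequently these identities need only Randomization (Assumption \ref{assume::randomization}) and Monotonicity (Assumption \ref{assume::monotonicity}), and not GPI (Assumption \ref{assume::GPI}); this is precisely why the balancing conditions are usable as diagnostics for a postulated principal-score model, since they must hold whenever the principal scores are correct regardless of any outcome assumption.

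First I would use Randomization to replace each observed conditioning event by the corresponding latent event for $\bm{X}$. Under Monotonicity the cell $(Z=1,S=0)$ contains only stratum $\bar{s}\bar{s}$ and the cell $(Z=0,S=1)$ only stratum $ss$, so $E\{h(\bm{X})\mid Z=1,S=0\}=E\{h(\bm{X})\mid U=\bar{s}\bar{s}\}$ and $E\{h(\bm{X})\mid Z=0,S=1\}=E\{h(\bm{X})\mid U=ss\}$ follow at once. The remaining two cells are genuine mixtures: $(Z=1,S=1)$ pools strata $ss$ and $s\bar{s}$, while $(Z=0,S=0)$ pools strata $\bar{s}\bar{s}$ and $s\bar{s}$.

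For each mixture cell I would write the conditional density by Bayes' rule, for instance $f(\bm{X}\mid Z=1,S=1)=\{e_{ss}(\bm{X})+e_{s\bar{s}}(\bm{X})\}\,f(\bm{X})/(\pi_{ss}+\pi_{s\bar{s}})$, and substitute the definition of the weight. The factor $e_{ss}(\bm{X})+e_{s\bar{s}}(\bm{X})$ in the density cancels the denominator inside $w_{1,s\bar{s}}(\bm{X})$ (or $w_{1,ss}(\bm{X})$), and the normalizer $\pi_{ss}+\pi_{s\bar{s}}$ cancels the matching factor in the weight, leaving exactly $\int h(\bm{X})\,e_u(\bm{X})\,f(\bm{X})\,\mathrm{d}\bm{X}/\pi_u=E\{h(\bm{X})\mid U=u\}$. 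The identical cancellation in the control cell $(Z=0,S=0)$ for $w_{0,s\bar{s}}$ and $w_{0,\bar{s}\bar{s}}$ produces the companion within-stratum means. Matching means across the two sides then gives the three stated equalities.

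The only real work is this density bookkeeping in the two mixture cells: one must carry the two normalizing constants and verify that the particular ratio form of $w_{1,u}$ and $w_{0,u}$ is exactly what cancels the mixing weights so that the intended component is isolated. There is no deeper obstacle, since GPI is inert for a covariate function. As a one-line alternative I could instead invoke Proposition \ref{thm::identifiability-monotonicity-GPI}: treating $h(\bm{X})$ as a pseudo-outcome, its two potential values coincide so its principal causal effect vanishes in every stratum, and GPI holds trivially; the three identification formulas with $Y$ replaced by $h(\bm{X})$ then read $0=\text{LHS}-\text{RHS}$, which are the three balancing conditions.
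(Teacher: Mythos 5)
Your proposal is correct, and your primary argument takes a genuinely different route from the paper's. The paper's entire proof is your closing ``one-line alternative'': treat $h(\bm{X})$ as a pseudo-outcome whose two potential values coincide, note that the PCEs on it vanish, and read the three identities off Proposition \ref{thm::identifiability-monotonicity-GPI} (the paper does not even pause to note that GPI holds trivially for a function of $\bm{X}$, though that observation is needed to invoke a proposition assuming GPI when the corollary assumes only Monotonicity). Your main argument instead verifies the identities from first principles: Randomization plus Monotonicity give $f(\bm{X}\mid Z=1,S=1)=\{e_{ss}(\bm{X})+e_{s\bar{s}}(\bm{X})\}f(\bm{X})/(\pi_{ss}+\pi_{s\bar{s}})$ and its control-cell analogue, the ratio form of $w_{z,u}(\bm{X})$ cancels the mixing factor and the normalizing constant exactly, and each weighted average collapses to $E\{h(\bm{X})\mid U=u\}$, with the two unmixed cells handled by direct inclusion. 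This buys two things the paper's terse proof leaves implicit: first, a transparent demonstration that the balancing conditions require no outcome assumption whatsoever --- GPI is inert --- which is precisely what licenses their use as outcome-free diagnostics in Section \ref{subsec::psbalance}; second, the slightly stronger conclusion that both sides of each display equal the common within-stratum mean $E\{h(\bm{X})\mid U=u\}$, not merely each other. What the paper's route buys is economy: it reuses Proposition \ref{thm::identifiability-monotonicity-GPI}, whose proof already performed the importance-sampling bookkeeping, rather than redoing the cancellation by hand.
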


The above corollaries are direct applications of Propositions \ref{thm::identifiability-strong-monotonicity} and \ref{thm::identifiability-monotonicity-GPI}. Intuitively, because any functions of the covariates $h(\bm X)$ are unaffected by the treatment within principal strata, the ``PCEs'' on $h(\bm X)$ are all zeros.
Although simple, the balancing conditions in Corollaries \ref{corollary::SM} and \ref{corollary::M} allow for model checking for principal scores, and are therefore of practical importance. If the balancing conditions above are obviously violated, we need to build a more flexible model to account for the residual dependence of $U$ on $\bm{X}$.
For example, we can add higher order polynomial and interaction terms of the covariates into the Logistic model, until the balancing conditions are well satisfied.
This idea is similar to designs of observational studies for achieving objective causal inference \citep{Rubin:2007, Rubin:2008}. When constructing weighted samples, we do not have access to the outcome data, because we require only $(\bm{X}, Z, S)$ for creating the principal score estimates. This outcome-free strategy for designs, advocated by \citet{Rubin:2007, Rubin:2008} and \citet{Imbens:2015}, has the merit of being free of data snooping based on repeated search for favorable outcome models.

\subsection{Estimating Principal Scores}\label{subsec::ps}
Although we have nonparametric identification results under the PI assumptions \ref{assume::principal-ignorability} and \ref{assume::GPI}, we can easily deal with only low dimensional and discrete covariates to estimate the principal scores. With high dimensional or continuous covariates, we need to specify models for $\Pr(U\mid\bm{X})$.

Under Strong Monotonicity, $U$ takes only two values, and we can use a Logistic model for $ \Pr( U\mid\bm{X} ) $.
By Randomization, we can fit a Logistic model of $S$ on $\bm{X}$ using only the data from the treatment group, because within arm $Z=1$, we have $S=1$ if and only if $U=s\bar{s}$, and $S=0$ if and only if $U=\bar{s}\bar{s}$.

Under Monotonicity, $U$ takes three values, we can model $\Pr(U\mid\bm{X})$ as a three-level Multinomial Logistic model,
and use the EM algorithm \citep{dempster1977maximum} to find the Maximum Likelihood Estimates (MLEs) by treating $U$ as missing data. See the supplementary material for computational details.

In practice, correct specification of the principal score model $\Pr(U\mid \bm{X})$ is crucial for the validity of the principal causal effect estimation, because misspecification of $\Pr(U\mid \bm{X})$ may lead to biased estimators for the PCEs. After fitting a principal score model, we can use Corollaries \ref{corollary::SM} and \ref{corollary::M} to check balance of some important covariates and their functions. If the balancing conditions are violated, we can fit a more flexible model (e.g., adding high order polynomials or interaction terms of $\bm{X}$ into the Logistic models) until the balance conditions are satisfied.

\section{Modeling the Outcome and Model-Assisted Estimators}\label{sec:estimation}
Previous identification and sensitivity analysis results assume infinite amounts of data or a known distribution of the observed data. In this section, we discuss finite sample estimators of PCEs. For simplicity, in the main text we will discuss only the estimator for $ACE_{s\bar{s}}$ under Monotonicity. We have similar results for other strata, the cases under Strong Monotonicity, and the cases for sensitivity analysis; we relegate the technical details to the supplementary material.

The identification formulas in Propositions \ref{thm::identifiability-strong-monotonicity}--\ref{thm::PCE-without-monotonicity} immediately give us simple moment estimators by weighting, with $e_u(\bm{X})$ and $\pi_u$ replaced by their consistent estimators, and the expectations over the population replaced by their sample analogues.
In the above discussion about identification and moment estimators for PCEs, we use the covariates to predict latent strata and create weights. In fact, covariates contain useful information about both the principal strata and the outcome distributions. Now we will use covariate adjustment to improve statistical efficiency for estimation.
Covariate adjustment is based on the following simple fact that for all $u$ and all fixed vectors $\bm{\beta}_{z,u}$,
\begin{eqnarray}
ACE_u = E\{  Y(1)  - \bm{\beta}_{1,u}^\top \bm{X} \mid U=u \} - E\{  Y(0) - \bm{\beta}_{0, u}^\top \bm{X} \mid U=u \} + (\bm{\beta}_{1,u} - \bm{\beta}_{0,u})^\top E(\bm{X}\mid U=u) .
\label{eq::identity}
\end{eqnarray}
Treating the ``residual'' $Y(z)  - \bm{\beta}_{z,u}^\top \bm{X}$ as a new ``potential outcome,'' we can apply Proposition \ref{thm::identifiability-monotonicity-GPI} to identify three expectation terms in formula (\ref{eq::identity}) via the following corollary.

\begin{corollary}
\label{coro::covariance-adjustment-mono-complier}
Under Monotonicity and GPI, we have
\begin{eqnarray*}
E\{  Y(1)  - \bm{\beta}_{1,s\bar{s}}^\top \bm{X} \mid U=s\bar{s} \} &=&
E\{  w_{1 , s\bar{s} }(\bm{X})(Y  - \bm{\beta}_{1, s\bar{s}}^\top \bm{X}) \mid Z=1, S=1 \}, \\
E\{  Y(0) - \bm{\beta}_{0, s\bar{s}}^\top \bm{X} \mid U=s\bar{s} \}  &=&
E\{   w_{ 0 , s\bar{s}}(\bm{X} ) ( Y  - \bm{\beta}_{0, s\bar{s}}^\top \bm{X} )  \mid  Z=0, S=0  \}, \\
E(\bm{X}\mid U=s\bar{s})  & = & E\{ w_{1 , s\bar{s} }(\bm{X})\bm{X}\mid Z=1,S=1\} \; = \; E\{ w_{0 , s\bar{s} }(\bm{X})  \bm{X}  \mid Z=0, S=0 \}.
\end{eqnarray*}
\end{corollary}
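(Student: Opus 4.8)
The plan is to recognize that this corollary is not a fresh identification result but a mechanical re-application of Proposition \ref{thm::identifiability-monotonicity-GPI} to a modified ``outcome.'' The identity \eqref{eq::identity} breaks $ACE_{s\bar s}$ into three pieces, and each piece is an expectation, \emph{within stratum} $s\bar s$, of a quantity that is either a residual potential outcome $Y(z)-\bm{\beta}_{z,s\bar s}^\top\bm{X}$ or the covariate vector $\bm{X}$ itself. So the whole job is to show that the weighting identities already proved for $Y$ transfer verbatim to these new quantities. The key structural fact I would isolate first is that the proof of Proposition \ref{thm::identifiability-monotonicity-GPI} really establishes two arm-specific building blocks: for any ``outcome'' $V$ with potential values $V(z)$ satisfying $V(z)\ind U\mid\bm{X}$, one has
\begin{eqnarray*}
E\{V(1)\mid U=s\bar s\} &=& E\{w_{1,s\bar s}(\bm{X})\,V\mid Z=1,S=1\},\\
E\{V(0)\mid U=s\bar s\} &=& E\{w_{0,s\bar s}(\bm{X})\,V\mid Z=0,S=0\}.
\end{eqnarray*}
These hold because, conditional on $\bm{X}$, GPI collapses $E(V(z)\mid U=u,\bm{X})$ to $E(V(z)\mid\bm{X})$, and the weights $w_{z,s\bar s}(\bm{X})$ are constructed precisely to convert the observed mixture density in each arm into the within-stratum density $e_{s\bar s}(\bm{X})/\pi_{s\bar s}$.

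With that template in hand, the first two equalities follow by substitution. For the first, I would take $V(1)=Y(1)-\bm{\beta}_{1,s\bar s}^\top\bm{X}$, whose observed value in the arm $(Z=1,S=1)$ is $Y-\bm{\beta}_{1,s\bar s}^\top\bm{X}$; for the second, take $V(0)=Y(0)-\bm{\beta}_{0,s\bar s}^\top\bm{X}$ with observed value $Y-\bm{\beta}_{0,s\bar s}^\top\bm{X}$ in the arm $(Z=0,S=0)$. The step that must be checked is that GPI \eqref{assume::GPI} is inherited by the residual, i.e.\ that $Y(z)\ind U\mid\bm{X}$ implies $\{Y(z)-\bm{\beta}_{z,s\bar s}^\top\bm{X}\}\ind U\mid\bm{X}$. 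This is immediate because $\bm{\beta}_{z,s\bar s}^\top\bm{X}$ is a deterministic function of $\bm{X}$, so conditioning on $\bm{X}$ it is a fixed shift; only the conditional-mean consequence is actually used, so even the weaker form of GPI suffices.

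For the third equality I would treat the covariate vector as a degenerate potential outcome that is unaffected by treatment, $V(1)=V(0)=\bm{X}$. Being a function of $\bm{X}$, it trivially satisfies $V(z)\ind U\mid\bm{X}$, so the two building blocks apply componentwise and give $E(\bm{X}\mid U=s\bar s)$ from the treatment arm and from the control arm respectively; their coincidence is exactly the first balancing identity of Corollary \ref{corollary::M}. The main obstacle here is essentially bookkeeping rather than mathematics: one must state the building-block identities in enough generality that they accept an arbitrary (possibly vector-valued) outcome, and then verify the inheritance of conditional independence under subtraction of a covariate function. Once those two points are made explicit, the three displayed equalities drop out with no further computation.
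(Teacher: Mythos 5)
Your proposal is correct and follows exactly the paper's route: the paper justifies this corollary in one sentence by ``treating the residual $Y(z) - \bm{\beta}_{z,u}^\top \bm{X}$ as a new potential outcome'' and applying Proposition \ref{thm::identifiability-monotonicity-GPI} (with the third display being the balancing identity of Corollary \ref{corollary::M}, itself proved by taking $h(\bm{X})$ as an outcome with zero PCE). Your added verifications --- that $Y(z)\ind U\mid\bm{X}$ is inherited by $Y(z)-\bm{\beta}_{z,s\bar s}^\top\bm{X}$ since $\bm{\beta}_{z,s\bar s}^\top\bm{X}$ is $\bm{X}$-measurable, and that $\bm{X}$ itself trivially satisfies the required conditional independence --- are exactly the details the paper leaves implicit.
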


Define $n_{zs} = \#\{i: Z_i = z, S_i = s\}$.
The covariate-adjusted estimator for $ACE_{s\bar{s}}$ is
\begin{eqnarray}
\widehat{ACE}_{s\bar{s}}^{\textrm{adj}}  =
\frac{1}{n_{11}} \sum_{  \{ i: Z_i=1, S_i=1\}  } \widehat{w}_{1, s\bar{s} }  (\bm{X}_i) (Y_i   - \bm{\beta}_{1,s\bar{s}}^\top \bm{X}_i  )
 -  \frac{1}{n_{00}} \sum_{  \{i:Z_i=0, S_i=0\} }  \widehat{w}_{0, s\bar{s} }(\bm{X}_i)    (Y_i - \bm{\beta}_{0, s\bar{s}}^\top \bm{X} )
 \nonumber   \\
 +  \frac{1}{ n_{11} + n_{00}  } (\bm{\beta}_{1,s\bar{s}} - \bm{\beta}_{0, s\bar{s}} )^\top   \left\{     \sum_{  \{ i: Z_i=1, S_i=1\}  }   \widehat{w}_{1, s\bar{s} }(\bm{X}_i)\bm{X}_i +  \sum_{  \{i:Z_i=0, S_i=0\} } \widehat{w}_{0, s\bar{s} }(\bm{X}_i)  \bm{X}_i \right\}.  
\label{eq::covariate-adj-mono-complier}
\end{eqnarray}

As long as the potential outcomes are correlated with the covariates, the ``residual potential outcomes,'' $Y(z)  - \bm{\beta}_{z,u}^\top \bm{X}$, will have smaller variances than the original potential outcomes. Therefore, the covariate-adjusted estimator in formula (\ref{eq::covariate-adj-mono-complier}) tends to have a smaller asymptotic variance than the unadjusted estimator. Our simulation studies have verified this intuition. Although Corollary \ref{coro::covariance-adjustment-mono-complier} and the covariate-adjusted estimator in formula (\ref{eq::covariate-adj-mono-complier}) hold for any fixed vectors $\bm{\beta}_{1,s\bar{s}}$ and $\bm{\beta}_{0,s\bar{s}}$, we need to choose them in practice.
Intuitively, we can choose $\bm{\beta}_{z,s\bar{s}}$ as the linear regression coefficient of $Y(z)$ onto the space spanned by $\bm{X}$ for units $U=s\bar{s}$, i.e.,
$$
\bm{\beta}_{z,s\bar{s}} =   \{   E( \bm{X} \bm{X}^\top   \mid U=s\bar{s} )  \}^{-1}    E\{   \bm{X} Y(z) \mid U=s\bar{s}  \}.
$$
Similar to Corollary \ref{coro::covariance-adjustment-mono-complier}, each component of the above least squares formula is also identifiable.

\begin{corollary}
\label{coro::covariance-adjustment-mono-complier-OLS}
Under Monotonicity and GPI, we have
\begin{eqnarray*}
E\{ \bm{X} Y(1)   \mid U=s\bar{s} \} &=&
E\{  w_{1 , s\bar{s} }(\bm{X}) \bm{X} Y \mid Z=1, S=1 \}, \\
E\{  \bm{X} Y(0)   \mid U=s\bar{s} \}  &=&
E\{   w_{ 0 , s\bar{s}}(\bm{X} ) \bm{X}  Y    \mid  Z=0, S=0  \}, \\
E(\bm{X} \bm{X}^\top  \mid U=s\bar{s})  & = & E\{ w_{1 , s\bar{s} }(\bm{X})\bm{X} \bm{X}^\top \mid Z=1,S=1\} \; = \; E\{ w_{0 , s\bar{s} }(\bm{X})  \bm{X} \bm{X}^\top  \mid Z=0, S=0 \}.
\end{eqnarray*}
\end{corollary}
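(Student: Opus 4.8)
The plan is to recognize that Corollary~\ref{coro::covariance-adjustment-mono-complier-OLS} is not a new result but the specialization, to three particular choices of a ``generalized potential outcome,'' of a single weighting identity that already underlies Proposition~\ref{thm::identifiability-monotonicity-GPI}. Concretely, I would first isolate and prove the following master identity: under Monotonicity and GPI, for \emph{any} measurable transformation $f$,
\[
E\{ f(Y(1),\bm X) \mid U=s\bar s \} = E\{ w_{1,s\bar s}(\bm X)\, f(Y,\bm X) \mid Z=1, S=1 \},
\]
together with its control-arm analogue using $w_{0,s\bar s}$, stratum pair $\{\bar s\bar s, s\bar s\}$, and conditioning event $(Z=0,S=0)$. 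Corollary~\ref{coro::covariance-adjustment-mono-complier-OLS} then follows by taking $f(Y(z),\bm X)=X_jY(z)$ for the first two displays (stacking over components $j$) and $f(Y(z),\bm X)=X_jX_k$ for the third (stacking over $j,k$ to assemble $\bm X\bm X^\top$).

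For the master identity I would mimic the proof of Proposition~\ref{thm::identifiability-monotonicity-GPI}, observing that the only property of the outcome that proof invokes is the conditional independence $Y(1)\ind U\mid\bm X$ supplied by GPI. Since conditioning on $\bm X$ renders $\bm X$ a constant, any $f(Y(1),\bm X)$ inherits $f(Y(1),\bm X)\ind U\mid\bm X$, so the derivation transfers verbatim. In detail: Randomization lets me drop the conditioning on $Z=1$ and reduces $(Z=1,S=1)$ to the event $U\in\{ss,s\bar s\}$; writing $e_u(\bm X)=\pi_u f_{\bm X\mid u}(\bm X)/f_{\bm X}(\bm X)$ via Bayes' rule expresses $w_{1,s\bar s}(\bm X)$ as $(\pi_{s\bar s}+\pi_{ss})f_{\bm X\mid s\bar s}(\bm X)$ divided by the $U\in\{ss,s\bar s\}$ mixture density; then iterating over $\bm X$ and applying GPI replaces $E\{f(Y(1),\bm X)\mid\bm X,U\in\{ss,s\bar s\}\}$ by the common value $E\{f(Y(1),\bm X)\mid\bm X\}=E\{f(Y(1),\bm X)\mid\bm X,U=s\bar s\}$, so the weight cancels the mixture density and leaves exactly $E\{f(Y(1),\bm X)\mid U=s\bar s\}$.

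Applying the master identity componentwise then yields the first two displays immediately. For the third, the transformation $f(Y(z),\bm X)=X_jX_k$ does not involve $Y(z)$, hence trivially satisfies the independence condition and can be read off from \emph{either} arm; both representations equal $E(X_jX_k\mid U=s\bar s)$, which simultaneously proves the identity and furnishes the two-arm agreement (equivalently, this is Corollary~\ref{corollary::M} with $h(\bm X)=X_jX_k$). I expect the only delicate step to be the Bayes'-rule bookkeeping inside the master identity---verifying that $w_{1,s\bar s}(\bm X)$ exactly inverts the two-component mixture so that the denominator cancels---rather than anything conceptual; once that cancellation is in hand, the corollary is purely a matter of choosing $f$.
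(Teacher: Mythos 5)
Your proposal is correct and follows essentially the same route as the paper: the paper justifies this corollary in a single line (just before Corollary \ref{coro::covariance-adjustment-mono-complier}) by treating transformed quantities such as $X_j Y(z)$ and $X_j X_k$ as new ``potential outcomes'' that inherit the conditional independence $Y(z) \ind U \mid \bm{X}$ required by GPI, and then applying Proposition \ref{thm::identifiability-monotonicity-GPI} componentwise --- which is exactly your master-identity reduction, including the observation that the $\bm{X}\bm{X}^\top$ terms involve no outcome and hence agree across both arms (Corollary \ref{corollary::M} with $h(\bm{X}) = X_j X_k$). The only divergence is technical and inessential: where you re-derive the weighting identity by direct Bayes-rule cancellation conditioning on the full covariate vector $\bm{X}$, the paper's proof of Proposition \ref{thm::identifiability-monotonicity-GPI} reaches the same conclusion by conditioning on the scalar score $e_{1,s\bar{s}}(\bm{X})$ via its balance and sufficiency lemmas plus an importance-sampling step; both arguments establish that $w_{1,s\bar{s}}(\bm{X})$ converts the $(Z=1,S=1)$ mixture density into the $U = s\bar{s}$ density.
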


Therefore, we choose $\bm{\beta}_{1,s\bar{s}}$ as the weighted least squares regression coefficient of $Y_i$ on $\bm{X}_i$ using samples with $(Z_i=1,S_i=1)$ and weights $ w_{1 , s\bar{s} }(\bm{X}_i)$, and $\bm{\beta}_{0,s\bar{s}}$ as the weighted least squares regression coefficient of $Y_i$ on $\bm{X}_i$ using samples with $(Z_i=0,S_i=0)$ and weights $ w_{0 , s\bar{s} }(\bm{X}_i)$.

However, we do not assume that the response surface of $Y(z)$ on $\bm{X}_i$ is linear, and the consistency of the estimators does not rely on any modeling assumptions about $Y(z)$. 
Our estimators are essentially moment estimators, and their consistency and asymptotic Normality follow directly from standard arguments of the Law of Large Numbers and Central Limit Theorem. Therefore, they have superior statistical properties compared to principal stratification analysis based on Normal mixture models, which have unbounded likelihood and inaccurate asymptotic Normal approximations as pointed out by \citet{Mealli:2015}. Furthermore, in our simulation studies shown in the supplementary material, we compare our method and \citet{Jo:2009}'s method involving outcome modeling, and find that our estimator is not only robust to misspecification of the outcome model but also has smaller standard error.

\section{Sensitivity Analysis}\label{sec:sen}

The theoretical foundation of the identification and estimation relies crucially on Monotonicity and PI, which are fundamentally untestable. In some cases, however, these two assumptions may not be easily justified according to background knowledge. In this section, we propose approaches to conducting sensitivity analysis to assess the impact of violations of Monotonicity or PI.

\subsection{Sensitivity Analysis for Principal Ignorability}\label{subsec:GPIsen}

The PI assumptions are critical for nonparametric identification of the PCEs as shown in Section \ref{sec:theory}. They require the observed covariates $\bm{X}$ capture the key characteristics that affect both the principal stratum and the potential outcomes.
They are sufficient conditions to ensure nonparametric identification, which are similar to the ignorability assumption used in causal inference with observational studies \citep{Rosenbaum:1983biometrika} and the sequential ignorability assumption used in mediation analysis \citep[cf.][]{vanderweele2015explanation}. In many cases, the more covariates we observe, the more plausible these assumptions will become. In practice, however, we may not able to collect adequate covariates to remove the ``confounding'' between the principal stratification and the outcome variable.  Unfortunately, the PI assumptions cannot be validate by the observed data.  
Although there is a long history of sensitivity analysis in observational studies (e.g., \citealp{Rosenbaum:1983jrssb}, \citealp{Rosenbaum:2002}), there are only a few sensitivity analysis techniques for principal stratification analysis with binary outcomes \citep[e.g.,][]{Sjolander:2009, Schwartz:2012} and some modeling assumptions \citep[e.g.,][]{Gilbert:2003} under Monotonicity. We provide a more general framework to assess the sensitivity of the deviations from the PI assumptions.

\paragraph{Under Strong Monotonicity}
According to its proof in the supplementary material, Proposition \ref{thm::identifiability-strong-monotonicity} holds under a weaker version of PI, $E\{  Y(0)\mid U=s\bar{s}, \bm{X} \}  = E\{ Y(0) \mid U = \bar{s}\bar{s} , \bm{X} \}$, which requires the means of the control potential outcomes be the same for strata $U=s\bar{s}$ and $U=\bar{s}\bar{s}$ conditional on covariates $\bm{X}$. Therefore, our sensitivity analysis is based on the deviation from this weaker assumption, captured by a single sensitivity parameter
$$
\varepsilon =  \frac{ E\{  Y(0)\mid U=s\bar{s}, \bm{X} \}  }{  E\{ Y(0) \mid U = \bar{s}\bar{s} , \bm{X} \} },
$$
where we implicitly assume that $\varepsilon$ does not depend on the covariates $\bm{X}$. When the outcome is binary, the sensitivity parameter $\varepsilon$ becomes the relative risk of $U$ on the control potential outcome $Y(0)$ given covariates $\bm{X}$. When $\varepsilon=1$, the same identification results hold as those under PI. When $\varepsilon\neq 1$, we can identify the PCEs for a fixed value of $\varepsilon$, as shown in the following theorem.

\begin{theorem}
\label{thm::sensitivityanalysis-strong-monotonicity}
Under Strong Monotonicity, for a fixed value of $\varepsilon$, we can identify the PCEs by
\begin{eqnarray*}
ACE_{s\bar{s}}  &=& E(Y\mid Z=1, S=1) - E\{   w_{s\bar{s}}^\varepsilon (\bm{X}) Y \mid Z=0   \},\\
ACE_{\bar{s}\bar{s}}  &=& E(Y\mid Z=1, S=0) - E\{  w_{\bar{s}\bar{s}} ^\varepsilon (\bm{X}) Y\mid Z=0  \},
\end{eqnarray*}
where
$$
w_{s\bar{s}}^\varepsilon (\bm{X}) = \frac{  \varepsilon e_{s\bar{s}}(\bm{X}) } {   \{ \varepsilon  e_{s\bar{s}}(\bm{X}) + e_{\bar{s}\bar{s}} (\bm{X})  \} \pi_{s\bar{s}} } , \quad
 w_{\bar{s}\bar{s}}^\varepsilon (\bm{X})  =  \frac{   e_{\bar{s}\bar{s}}(\bm{X}) } {   \{ \varepsilon  e_{s\bar{s}}(\bm{X}) + e_{\bar{s}\bar{s}} (\bm{X})  \} \pi_{\bar{s}\bar{s}} } .
 $$
\end{theorem}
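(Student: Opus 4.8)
The plan is to identify the two terms of each PCE separately. Under Strong Monotonicity only the strata $s\bar{s}$ and $\bar{s}\bar{s}$ exist, and the treatment arm separates them cleanly: $(Z=1,S=1)$ consists entirely of units with $U=s\bar{s}$ and $(Z=1,S=0)$ entirely of units with $U=\bar{s}\bar{s}$. Hence by Randomization the first terms are immediate, namely $E(Y\mid Z=1,S=1)=E\{Y(1)\mid U=s\bar{s}\}$ and $E(Y\mid Z=1,S=0)=E\{Y(1)\mid U=\bar{s}\bar{s}\}$. All of the real work lies in recovering the control-arm means $E\{Y(0)\mid U=s\bar{s}\}$ and $E\{Y(0)\mid U=\bar{s}\bar{s}\}$ from the control group, which is a mixture of both strata since $S(0)=0$ for everyone.

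First I would write the observable conditional mean as the principal-score-weighted mixture $E\{Y(0)\mid\bm{X}\}=e_{s\bar{s}}(\bm{X})\mu_{s\bar{s}}(\bm{X})+e_{\bar{s}\bar{s}}(\bm{X})\mu_{\bar{s}\bar{s}}(\bm{X})$, writing $\mu_u(\bm{X})=E\{Y(0)\mid U=u,\bm{X}\}$. The weak-PI sensitivity parameter supplies the one missing equation, $\mu_{s\bar{s}}(\bm{X})=\varepsilon\,\mu_{\bar{s}\bar{s}}(\bm{X})$. Substituting gives $E\{Y(0)\mid\bm{X}\}=\{\varepsilon e_{s\bar{s}}(\bm{X})+e_{\bar{s}\bar{s}}(\bm{X})\}\mu_{\bar{s}\bar{s}}(\bm{X})$, which I solve to express both latent means $\mu_{\bar{s}\bar{s}}(\bm{X})$ and $\mu_{s\bar{s}}(\bm{X})$ purely in terms of the observable $E\{Y(0)\mid\bm{X}\}$ and the identified principal scores.

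Next I would average $\mu_u(\bm{X})$ over the stratum-specific covariate law using Bayes' rule, $f(\bm{X}\mid U=u)=e_u(\bm{X})f(\bm{X})/\pi_u$. This change of measure is precisely what introduces the factor $e_u(\bm{X})/\pi_u$ into the weight, and after substituting the solved expression for $\mu_u(\bm{X})$ the integrand collapses to $w_u^\varepsilon(\bm{X})\,E\{Y(0)\mid\bm{X}\}f(\bm{X})$, yielding $E\{Y(0)\mid U=u\}=E\{w_u^\varepsilon(\bm{X})Y(0)\}$. A final appeal to Randomization replaces this population expectation by its control-arm analogue, and since $Y=Y(0)$ when $Z=0$, this gives $E\{w_u^\varepsilon(\bm{X})Y\mid Z=0\}$, completing the identity.

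The computation is largely bookkeeping, and the one place where care is genuinely needed is the interplay between the two changes of measure: the mixture decomposition of $E\{Y(0)\mid\bm{X}\}$ uses the conditional law of $U$ given $\bm{X}$ (the principal scores), whereas the final averaging uses the conditional law of $\bm{X}$ given $U$ (which produces the $\pi_u$ normalization). Keeping these two directions straight, and verifying that the $e_u(\bm{X})$ from Bayes' rule combines with the denominator $\varepsilon e_{s\bar{s}}(\bm{X})+e_{\bar{s}\bar{s}}(\bm{X})$ to reproduce exactly $w_u^\varepsilon(\bm{X})$, is the main, if modest, obstacle; as a sanity check I would confirm that setting $\varepsilon=1$ recovers Proposition \ref{thm::identifiability-strong-monotonicity}.
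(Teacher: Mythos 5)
Your proof is correct, and it takes a genuinely more elementary route than the paper's. You condition on the covariates $\bm{X}$ directly: the law of total expectation gives the mixture $E\{Y(0)\mid\bm{X}\}=e_{s\bar{s}}(\bm{X})\mu_{s\bar{s}}(\bm{X})+e_{\bar{s}\bar{s}}(\bm{X})\mu_{\bar{s}\bar{s}}(\bm{X})$, the sensitivity parameter supplies the missing equation, and Bayes' rule $f(\bm{X}\mid U=u)=e_u(\bm{X})f(\bm{X})/\pi_u$ produces the weight in one line. The paper instead carries out the same mixture-decompose-and-solve algebra at the level of the \emph{scalar} principal score $e_{s\bar{s}}(\bm{X})$, which forces it to first prove two lemmas your argument never needs: the balancing property $\bm{X}\ind U\mid e_u(\bm{X})$ (its Lemma A.2), and the fact that the ratio $\varepsilon$, defined conditionally on $\bm{X}$, equals the same ratio conditionally on $e_u(\bm{X})$ (its Lemma A.5); it then closes with an importance-sampling identity for the density of $e_u(\bm{X})$ given $U=u$, which is exactly your Bayes-rule change of measure transported to the score scale. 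What the paper's longer route buys is a stronger intermediate fact — that the one-dimensional principal score is itself a balancing score sufficient for the sensitivity constraint, in direct analogy with the propensity score — which supports the paper's broader theme; what your route buys is brevity and transparency, since for the identification claim itself that dimension-reduction structure is not needed. Your $\varepsilon=1$ sanity check is also apt: the weights collapse to $e_u(\bm{X})/\pi_u$ because $e_{s\bar{s}}(\bm{X})+e_{\bar{s}\bar{s}}(\bm{X})=1$, recovering Proposition 1.
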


Although the principal scores remain the same as in Proposition \ref{thm::identifiability-strong-monotonicity}, the new weight $w_u^\varepsilon(\bm{X})$ further depends on the deviation from PI, with the principal score $e_{s\bar{s}}(\bm{X})$ over-weighted by the sensitivity parameter $\varepsilon$.

\paragraph{Under Monotonicity}
According to its proof in the supplementary material, Proposition \ref{thm::identifiability-monotonicity-GPI} holds under a weaker version of GPI, i.e.,
$
E\{  Y(1)\mid U=s\bar{s}, \bm{X} \}  = E\{ Y(1) \mid U =  ss , \bm{X} \}
$
and
$
E\{  Y(0)\mid U=s\bar{s}, \bm{X} \}  = E\{ Y(0) \mid U = \bar{s}\bar{s} , \bm{X} \},
$
which require the conditional means of $Y(1)$ be the same for strata $U=s\bar{s}$ and $U=ss$, and the conditional means of $Y(0)$ be the same for strata $U=s\bar{s}$ and $U=\bar{s}\bar{s}$ given covariates $\bm{X}$. Therefore, our sensitivity analysis is based on the deviations from the above weaker assumption, captured by the following two sensitivity parameters:
$$
\varepsilon_1 =  \frac{ E\{  Y(1)\mid U=s\bar{s}, \bm{X} \}  }{  E\{ Y(1) \mid U =  ss , \bm{X} \} },
\quad
\varepsilon_0 =  \frac{ E\{  Y(0)\mid U=s\bar{s}, \bm{X} \}  }{  E\{ Y(0) \mid U = \bar{s}\bar{s} , \bm{X} \} },
$$
where $\varepsilon_1 $ and $\varepsilon_0$ are for the potential outcomes under treatment and control, respectively.
The sensitivity parameters $\varepsilon_1 $ and $\varepsilon_0$ enjoy transparent interpretations, which allows us to select the range of them according to background knowledge. For example, in the flu shot encouragement design with noncompliance discussed in \citet{Hirano:2000}, it may be reasonable to believe that on average the never-takers ($U=\bar{s}\bar{s}$) are the strongest patients and the always-takers ($U=ss$) are the weakest patients. In this case, the outcome of interest is an indicator of flu related hospital visit, and therefore we can select sensitivity parameters within the range $\varepsilon_1<1$ and $\varepsilon_0>1.$ We will analyze this example in detail in Section \ref{subsec::flu}.

For fixed values of the sensitivity parameters $(\varepsilon_1, \varepsilon_0)$, we have the following theorem.

\begin{theorem}
\label{thm::sensitivityanalysis-monotonicity}
Under Monotonicity, and for fixed values of $(\varepsilon_1, \varepsilon_0)$, we can identify the PCEs by
\begin{eqnarray*}
ACE_{s\bar{s}}  &=&
E\{   w_{1 , s\bar{s} } ^{\varepsilon_1} (\bm{X})  Y\mid Z=1, S=1   \}
- E\{   w_{ 0 , s\bar{s}} ^{\varepsilon_0} (\bm{X} ) Y\mid Z=0,S=0  \} ,\\
ACE_{\bar{s}\bar{s}} &=&
E(Y\mid Z=1,S=0)
- E\{  w_{0, \bar{s}\bar{s} } ^{\varepsilon_0} (\bm{X}) Y\mid Z=0, S=0   \},\\
ACE_{ss} &=&
E\{  w_{1, ss } ^{\varepsilon_1} (\bm{X})  Y\mid Z=1, S=1  \}
- E(Y\mid Z=0, S=1),
\end{eqnarray*}
where
\begin{eqnarray*}
w_{1 , s\bar{s} }^{\varepsilon_1} (\bm{X})
= \frac{ \varepsilon_1 e_{s\bar{s}} (\bm{X}) }{  \varepsilon_1 e_{s\bar{s}} (\bm{X}) + e_{ss} (\bm{X})  } \Big/ \frac{\pi_{s\bar{s}}}{  \pi_{s\bar{s}}   +  \pi_{ss}  } , & &
w_{0, s\bar{s} }^{\varepsilon_0} (\bm{X})
= \frac{ \varepsilon_0 e_{s\bar{s}} (\bm{X}) }{  \varepsilon_0 e_{s\bar{s}} (\bm{X}) + e_{ \bar{s}\bar{s}} (\bm{X})  } \Big/ \frac{\pi_{s\bar{s}}}{  \pi_{s\bar{s}}   +  \pi_{\bar{s}\bar{s}}  } ,\\
w_{0, \bar{s}\bar{s} }^{\varepsilon_0} (\bm{X} )
= \frac{ e_{\bar{s}\bar{s}} (\bm{X}) }{  \varepsilon_0 e_{s\bar{s}} (\bm{X}) + e_{ \bar{s}\bar{s}} (\bm{X})  }   \Big/ \frac{\pi_{\bar{s}\bar{s}}}{  \pi_{s\bar{s}}   +  \pi_{\bar{s}\bar{s}}  } , &&
w_{1, ss}^{\varepsilon_1} (\bm{X})
= \frac{ e_{ss} (\bm{X}) }{ \varepsilon_1  e_{s\bar{s}} (\bm{X}) + e_{ss} (\bm{X})  } \Big/ \frac{\pi_{ss}}{  \pi_{s\bar{s}}   +  \pi_{ss}  } .
\end{eqnarray*}
\end{theorem}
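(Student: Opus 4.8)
The plan is to identify each of the six stratum-specific potential-outcome means $E\{Y(z)\mid U=u\}$ that make up the three PCEs, and then take differences. The engine is Randomization (Assumption \ref{assume::randomization}), which lets me replace an observed expectation conditional on $(Z,S,\bm{X})$ by a conditional-on-stratum expectation of the corresponding potential outcome, together with Monotonicity (Assumption \ref{assume::monotonicity}), which pins down exactly which strata populate each observed cell. Under Monotonicity the cells $(Z=1,S=0)$ and $(Z=0,S=1)$ are \emph{pure}, containing only $U=\bar{s}\bar{s}$ and $U=ss$ respectively, so $E\{Y(1)\mid U=\bar{s}\bar{s}\}=E(Y\mid Z=1,S=0)$ and $E\{Y(0)\mid U=ss\}=E(Y\mid Z=0,S=1)$ with no weighting; these supply the unweighted terms in $ACE_{\bar{s}\bar{s}}$ and $ACE_{ss}$. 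The remaining four means sit inside the two \emph{mixture} cells $(Z=1,S=1)$, a mixture of $s\bar{s}$ and $ss$, and $(Z=0,S=0)$, a mixture of $s\bar{s}$ and $\bar{s}\bar{s}$, and this is where the sensitivity parameters enter.

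For a mixture cell I would first write the observed regression function as a principal-score-weighted average of the two stratum-specific conditional means. For $(Z=1,S=1)$, Randomization gives
\[
E(Y\mid Z=1,S=1,\bm{X})=\frac{e_{s\bar{s}}(\bm{X})\,E\{Y(1)\mid U=s\bar{s},\bm{X}\}+e_{ss}(\bm{X})\,E\{Y(1)\mid U=ss,\bm{X}\}}{e_{s\bar{s}}(\bm{X})+e_{ss}(\bm{X})},
\]
using that $p_1(\bm{X})=e_{s\bar{s}}(\bm{X})+e_{ss}(\bm{X})$ under Monotonicity. Next I substitute the sensitivity-parameter relation $E\{Y(1)\mid U=ss,\bm{X}\}=\varepsilon_1^{-1}E\{Y(1)\mid U=s\bar{s},\bm{X}\}$ (and symmetrically $E\{Y(0)\mid U=\bar{s}\bar{s},\bm{X}\}=\varepsilon_0^{-1}E\{Y(0)\mid U=s\bar{s},\bm{X}\}$ in the control cell), which collapses the two unknown conditional means into a single one and lets me solve algebraically for the target conditional mean in terms of the identified regression function and the principal scores.

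The final step is to pass from the conditional mean $E\{Y(z)\mid U=u,\bm{X}\}$ to the marginal $E\{Y(z)\mid U=u\}$ by integrating against the stratum-specific covariate law $f(\bm{X}\mid U=u)=e_u(\bm{X})f(\bm{X})/\pi_u$ (Bayes' rule), while the proposed weight supplies the observed covariate law $f(\bm{X}\mid Z,S)$ together with a normalizing ratio of marginal stratum proportions. I expect the main obstacle to be purely the bookkeeping of this last cancellation: when the weight $w_{1,s\bar{s}}^{\varepsilon_1}(\bm{X})$, the solved-for factor multiplying the regression function, and the Bayes reweighting of $\bm{X}$ are multiplied together, all the principal-score and $\varepsilon_1$ factors must telescope so that what survives is exactly $E\{Y(1)\mid U=s\bar{s},\bm{X}\}\,e_{s\bar{s}}(\bm{X})f(\bm{X})/\pi_{s\bar{s}}$, whose integral is $E\{Y(1)\mid U=s\bar{s}\}$. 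Verifying this cancellation for each of the four weighted terms, and checking that setting $\varepsilon_1=\varepsilon_0=1$ recovers Proposition \ref{thm::identifiability-monotonicity-GPI} as a consistency check, completes the argument.
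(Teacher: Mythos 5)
Your proposal is correct, and the anticipated cancellation does go through: for instance, for the term $E\{Y(1)\mid U=s\bar{s}\}$, your solved-for factor $\varepsilon_1\{e_{s\bar{s}}(\bm{X})+e_{ss}(\bm{X})\}/\{\varepsilon_1 e_{s\bar{s}}(\bm{X})+e_{ss}(\bm{X})\}$ multiplied by the Bayes density $e_{s\bar{s}}(\bm{X})f(\bm{X})/\pi_{s\bar{s}}$ matches exactly the product of $w_{1,s\bar{s}}^{\varepsilon_1}(\bm{X})$ with the observed-cell density $f(\bm{X}\mid Z=1,S=1)=\{e_{s\bar{s}}(\bm{X})+e_{ss}(\bm{X})\}f(\bm{X})/(\pi_{s\bar{s}}+\pi_{ss})$, and the same telescoping holds for the other three weighted terms. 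However, your route is genuinely different from the paper's. The paper never conditions on the full covariate vector: it carries out the entire argument on the coarser sigma-field generated by the scalar reduced principal score $e_{1,u}(\bm{X})$ (respectively $e_{0,u}(\bm{X})$), in the style of the Rosenbaum--Rubin propensity-score argument. This forces the paper to first establish an auxiliary lemma showing that the sensitivity parameters $\varepsilon_1,\varepsilon_0$ --- defined as ratios of means conditional on $\bm{X}$ --- can be equivalently written as ratios of means conditional on the score alone (which in turn rests on a balancing lemma, $\bm{X}\ind U$ given the cell and the score), and then to close the argument with an importance-sampling identity equating the weight to a ratio of score densities. Your approach conditions on $\bm{X}$ itself, so the definition of $\varepsilon_1,\varepsilon_0$ can be substituted directly into the mixture decomposition with no re-expression lemma, and the change of measure is a one-line application of Bayes' rule. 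What you lose is the conceptual point the paper is making --- that all the identification flows through the one-dimensional score, the dimension-reduction property that is thematically central to the paper; what you gain is a shorter, more elementary and self-contained proof that needs only the cell-composition facts (which cells are pure, which are mixtures, and the conditional stratum probabilities given $\bm{X}$, all consequences of Randomization and Monotonicity). Your consistency check at $\varepsilon_1=\varepsilon_0=1$ recovering Proposition \ref{thm::identifiability-monotonicity-GPI} is a sensible addition not present in the paper.
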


The weights have similar adjustments as in Proposition \ref{thm::sensitivityanalysis-strong-monotonicity}, over-weighting the principal score $w_{s\bar{s}}(\bm{X})$ by the sensitivity parameters $\varepsilon_1$ and $\varepsilon_0$ in the treatment group and control group, respectively. $ACE_{s\bar{s}}$ depends on both $\varepsilon_1$ and $\varepsilon_0$, $ACE_{\bar{s}\bar{s}}$  only on $\varepsilon_0$, and $ACE_{ss}$ only on $\varepsilon_1$.

As a side note, Proposition \ref{thm::sensitivityanalysis-monotonicity} not only allows for sensitivity analysis of possible violations of GPI, but also allows for testing the fundamental assumptions of GPI and ER. To be more specific, if $ACE_{\bar{s}\bar{s}} = 0$ and $\varepsilon_0=1$, then 
\begin{eqnarray}
\label{eq::testable}
E(Y\mid Z=1,S=0)
= E\{  w_{0, \bar{s}\bar{s} }   (\bm{X}) Y\mid Z=0, S=0   \}.
\end{eqnarray}
The contrapositive states that if we reject \eqref{eq::testable} by the observed data, then we must reject $ACE_{\bar{s}\bar{s}} =  0$ or $\varepsilon_0 = 1$. Therefore, if we assume $ACE_{\bar{s}\bar{s}} =  0$, then we can test $\varepsilon_0 =  1$; if we assume $\varepsilon_0 =  1$, then we can test $ACE_{\bar{s}\bar{s}} =  0$. Analogous discussion applies to $ACE_{ss}  = 0$ and $\varepsilon_1  = 1.$ \citet{Guo:2014} proposed a parametric likelihood ratio test for GPI under Monotonicity and ER. In fact, Proposition \ref{thm::sensitivityanalysis-monotonicity} implies tests for compatibility of GPI and ER, which sometimes can be an important initial step in empirical studies when we are unsure about the underlying assumptions. 
For instance, if we have important covariate information and impose GPI, then Proposition \ref{thm::sensitivityanalysis-monotonicity} allows us to test ER in the noncompliance setting or the causal necessity in the surrogate problem. If the test is rejected, then we may reject ER and causal necessity, but we may also doubt about the GPI assumption. Although this kind of discordant result does not provide a definite answer, it does warn us of the underlying assumptions and may lead us to conduct more careful analysis or better study design. See \citet{Yang:2014} for a concrete example and philosophical discussion about checking compatibility of untestable assumptions in causal inference.

\subsection{Sensitivity Analysis for Monotonicity}\label{subsec:monosen}
Without Monotonicity, we have all four principal strata, and we cannot even identify their proportions without further assumptions.
The inferential difficulties restricts the scope of the current literature to be under Monotonicity.
Some exceptions \citep[e.g.,][]{Zhang:2009, Ding:2011, Frumento:2012} rely on either strong modeling assumptions or additional information.
We take an alternative perspective, and propose an approach to performing sensitivity analysis when Monotonicity is not plausible. We introduce the following sensitivity parameter $\xi$ capturing the deviation from Monotonicity:
$$
\xi = \frac{\Pr(U=\bar{s}s \mid \bm{X})}{\Pr(U = s\bar{s}\mid \bm{X})} ,
$$
which is the ratio between the probabilities of strata $U=\bar{s}s$ and $U=s\bar{s}$ conditional on covariates $\bm{X}.$ Furthermore, the conditional ratio is also the marginal ratio of the probabilities, i.e., $\xi = \Pr(U=\bar{s}s) / \Pr(U=s\bar{s})$. The sensitivity parameter can take values from $0$ to $\infty.$ When $\xi =  0$, we have Monotonicity; when $\xi = 1$, we have equal proportions of $s\bar{s}$ and $\bar{s}s$, and thus zero average causal effect on $S$; when $0 < \xi < 1$, we allow deviation from Monotonicity but still preserve positive average causal effect on $S$; when $\xi > 1$, we have negative average causal effect on $S.$
Without loss of generality, we will assume $p_1-p_0 \geq  0$ and $0\leq  \xi \leq 1$ from now on for sensitivity analysis.

\begin{theorem}
\label{thm::prop-no-monotonicity}
For a fixed sensitivity parameter $\xi$, we can identify the proportions by
\begin{eqnarray}
\pi_{s\bar{s}}  =  \frac{ p_1-p_0} { 1-\xi}, \quad
\pi_{\bar{s}\bar{s}}  = 1-p_0 - \frac{p_1-p_0}{ 1-\xi} ,\quad
\pi_{ss}  = p_1 - \frac{p_1-p_0}{1-\xi}, \quad
\pi_{\bar{s}s}  = \frac{ \xi (p_1-p_0) } { 1-\xi } ,
\label{eq::prop-no-monotonicity}
\end{eqnarray}
which further imply that the sensitivity parameter $\xi$ is bounded by
\begin{equation}\label{eq::sensitivity-range}
0 \leq \xi \leq 1 - \frac{ p_1 - p_0 }{ \min(  p_1, 1-p_0  ) } \leq 1 .
\end{equation}
\end{theorem}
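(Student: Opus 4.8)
The plan is to treat this as a small, fully determined linear system: the two observable probabilities $p_1,p_0$, the single sensitivity parameter $\xi$, and the normalization $\sum_u \pi_u = 1$ together pin down all four proportions, after which the admissible range of $\xi$ is exactly the set of values keeping every proportion in $[0,1]$.

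First I would invoke Randomization (Assumption \ref{assume::randomization}) to express the observables in terms of the strata. Since $Z \ind \{S(1),S(0)\}$, the observed $S$ in arm $Z=1$ is distributed as $S(1)$ and in arm $Z=0$ as $S(0)$; because $S(1)=1$ precisely on $\{ss,s\bar{s}\}$ and $S(0)=1$ precisely on $\{ss,\bar{s}s\}$, this gives $p_1 = \pi_{ss}+\pi_{s\bar{s}}$ and $p_0 = \pi_{ss}+\pi_{\bar{s}s}$. Next I would pass from the conditional to the marginal form of the sensitivity parameter: since $\xi$ is assumed free of $\bm{X}$, averaging $\Pr(U=\bar{s}s\mid\bm{X}) = \xi\,\Pr(U=s\bar{s}\mid\bm{X})$ over the covariate law yields $\pi_{\bar{s}s} = \xi\,\pi_{s\bar{s}}$. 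Subtracting the two observable equations gives $p_1-p_0 = \pi_{s\bar{s}}-\pi_{\bar{s}s} = (1-\xi)\pi_{s\bar{s}}$, hence $\pi_{s\bar{s}} = (p_1-p_0)/(1-\xi)$ (well defined for $\xi<1$) and $\pi_{\bar{s}s} = \xi(p_1-p_0)/(1-\xi)$; then $\pi_{ss} = p_1-\pi_{s\bar{s}}$, and using $\pi_{ss}+\pi_{\bar{s}s}=p_0$ together with normalization, $\pi_{\bar{s}\bar{s}} = 1-p_0-\pi_{s\bar{s}}$. This reproduces exactly \eqref{eq::prop-no-monotonicity}.

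For the range \eqref{eq::sensitivity-range} I would impose $\pi_u \in [0,1]$ for every $u$. Under the stated conventions $p_1-p_0\geq 0$ and $0\leq\xi<1$, both $\pi_{s\bar{s}}$ and $\pi_{\bar{s}s}$ are automatically nonnegative, so the only active constraints are $\pi_{ss}\geq 0$ and $\pi_{\bar{s}\bar{s}}\geq 0$. Clearing the positive denominator $1-\xi$, these rearrange to $\xi \leq p_0/p_1$ and $\xi \leq (1-p_1)/(1-p_0)$, respectively. To match the stated form I would use that $a\mapsto 1-(p_1-p_0)/a$ is increasing for $p_1-p_0\geq 0$, so that
\begin{eqnarray*}
1 - \frac{p_1-p_0}{\min(p_1,\,1-p_0)} = \min\left( 1-\frac{p_1-p_0}{p_1},\ 1-\frac{p_1-p_0}{1-p_0} \right) = \min\left( \frac{p_0}{p_1},\ \frac{1-p_1}{1-p_0} \right),
\end{eqnarray*}
which is precisely the smaller of the two upper bounds; the trailing inequality $\leq 1$ is immediate since $p_1-p_0\geq 0$ and $\min(p_1,1-p_0)>0$.

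The whole argument is elementary bookkeeping once the three defining relations are in place, so I do not anticipate a serious obstacle. The one step requiring care is this final identification of the two separate linear constraints with the single $\min$ expression, which hinges on the monotonicity of $a \mapsto 1-(p_1-p_0)/a$; a reader who instead splits into the cases $p_1\leq 1-p_0$ and $p_1>1-p_0$ should arrive at the same conclusion.
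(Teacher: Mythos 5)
Your proposal is correct and follows essentially the same route as the paper: both set up the linear system $\pi_{ss}+\pi_{s\bar{s}}=p_1$, $\pi_{ss}+\pi_{\bar{s}s}=p_0$, $\sum_u \pi_u = 1$, $\pi_{\bar{s}s}=\xi\pi_{s\bar{s}}$, solve it, and obtain the bound on $\xi$ from the constraints $0\leq \pi_u \leq 1$. Your write-up merely makes explicit some steps the paper leaves implicit (deriving the observable equations from Randomization, marginalizing the conditional ratio, and the algebra reducing the two nonnegativity constraints to the single $\min$ expression), all of which is sound.
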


Although $\xi$ is not identifiable, the observed data provide an upper bound for it when the average causal effect on $S$ is non-negative.
Therefore, we need to only perform sensitivity analysis within the empirical version of the above bounds of $\xi$.

Analogously, we can show that the principal score $e_u(\bm{X})$ is identifiable with a known $\xi$, by replacing $p_1$ and $p_0$ in formula \eqref{eq::prop-no-monotonicity} by $p_1( \bm{X})$ and $p_0( \bm{X})$, respectively.
Consequently, GPI is sufficient to identify the PCEs.

\begin{theorem}
\label{thm::PCE-without-monotonicity}
Under GPI, and for a fixed $\xi$, we can identify the PCEs by
\begin{eqnarray*}
ACE_{u}  &=& E\{   w_{1 , u }(\bm{X})  Y\mid Z=1, S=s(1)   \}  - E\{   w_{ 0 , u} (\bm{X} ) Y\mid Z=0,S=s(0)  \} ,
\end{eqnarray*}
where $s(1)$ and $s(0)$ correspond to the values of $S(1)$ and $S(0)$ of $U=u$, and the weight $w_{z,u}(\bm{X})$ is defined in the same way as Proposition \ref{thm::identifiability-monotonicity-GPI}.
\end{theorem}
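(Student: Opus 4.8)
The plan is to reduce the claim to the elementary decomposition $ACE_u = E\{Y(1)\mid U=u\} - E\{Y(0)\mid U=u\}$ and identify each conditional mean separately; by symmetry it suffices to treat the treatment term $E\{Y(1)\mid U=u\}$, the control term being the mirror image with the roles of $Z$ and the relevant value of $S$ interchanged. First I would record the structural fact that within arm $Z=z$ the observed intermediate variable equals the potential value $S(z)$, so the observed cell $\{Z=z, S=s(z)\}$ consists exactly of those units whose stratum has $z$-coordinate equal to $s(z)$. Because Monotonicity is dropped, each such cell is now a genuine mixture of precisely two strata (for instance $\{Z=1,S=1\}$ pools $ss$ and $s\bar{s}$, while $\{Z=0,S=1\}$ pools $ss$ and the defiers $\bar{s}s$), so every one of the two weighted terms in each of the four displayed identities requires a weighting argument, unlike under Monotonicity where some cells were pure.

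The key computation is to show $E\{w_{z,u}(\bm{X}) Y \mid Z=z, S=s(z)\} = E\{Y(z)\mid U=u\}$, and I would carry it out in three moves. (i) Apply Randomization (Assumption \ref{assume::randomization}) to strip the conditioning on $Z$, so the left-hand side becomes $E\{w_{z,u}(\bm{X}) Y(z) \mid U \in \{u, u'\}\}$, where $u'$ is the companion stratum sharing the cell. (ii) Condition on $\bm{X}$ inside this event and invoke GPI (Assumption \ref{assume::GPI}): since $Y(z)\ind U \mid \bm{X}$, the conditional mean $E\{Y(z)\mid U=u,\bm{X}\}$ agrees for $u$ and $u'$, hence $E\{Y(z)\mid U\in\{u,u'\},\bm{X}\}$ equals this common value. (iii) Do the Bayes-rule bookkeeping: writing $w_{z,u}(\bm{X})$ as the ratio of $e_u(\bm{X})/\{e_u(\bm{X})+e_{u'}(\bm{X})\}$ to its marginal analogue $\pi_u/(\pi_u+\pi_{u'})$, and multiplying by the mixture density $f(\bm{X}\mid U\in\{u,u'\}) \propto \{e_u(\bm{X})+e_{u'}(\bm{X})\}\,f(\bm{X})$, the factors $e_u(\bm{X})+e_{u'}(\bm{X})$ and $\pi_u+\pi_{u'}$ cancel, leaving exactly $e_u(\bm{X})f(\bm{X})/\pi_u = f(\bm{X}\mid U=u)$. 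Averaging the common conditional mean against $f(\bm{X}\mid U=u)$ then collapses, once more by GPI, back to $E\{Y(z)\mid U=u\}$.

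I would then point out that the weights are computable for a fixed $\xi$: the principal scores $e_u(\bm{X})$ entering $w_{z,u}(\bm{X})$ are the four-stratum versions obtained just before the statement by substituting $p_1(\bm{X})$ and $p_0(\bm{X})$ into the proportion formulas of Proposition \ref{thm::prop-no-monotonicity}, whose marginal counterparts $\pi_u$ are identified there as well. Assembling the treatment and control terms for each $u$ yields the four displayed identities.

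The step I expect to be the main obstacle is the bookkeeping in move (iii): one must track the companion stratum $u'$ correctly for each of the four cells, verify that the normalizing factors telescope, and confirm that only the weaker conditional-mean form of GPI is actually consumed—namely equality of $E\{Y(z)\mid U=\cdot,\bm{X}\}$ across each pair of strata that genuinely co-occur in a cell (for $Y(1)$ the pairs $\{ss,s\bar{s}\}$ and $\{\bar{s}s,\bar{s}\bar{s}\}$, and for $Y(0)$ the pairs $\{ss,\bar{s}s\}$ and $\{s\bar{s},\bar{s}\bar{s}\}$). Establishing that these are the precise equalities needed, so that the argument survives under the relaxation flagged in Section \ref{subsec:GPIsen}, is the delicate part; everything else is routine use of the Law of Iterated Expectations.
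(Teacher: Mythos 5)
Your proposal is correct, and its core identity is the same as the paper's: the weight $w_{z,u}(\bm{X})$ is recognized as the importance-sampling factor converting the covariate distribution in the observed cell $\{Z=z,S=s(z)\}$ into the covariate distribution in stratum $u$, after GPI has equalized the conditional outcome means across the two strata pooled in that cell. The difference is in execution. The paper's proof of this proposition is a one-line reduction to the proof of Proposition \ref{thm::identifiability-monotonicity-GPI}, which routes the argument through the \emph{scalar} conditional principal scores $e_{z,u}(\bm{X})$: it first proves balancing ($\bm{X}\ind U$ given the cell and the score) and sufficiency ($Y(z)\ind U$ given the cell and the score) lemmas, and then applies the importance-sampling identity to the density of the score itself. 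You instead condition directly on $\bm{X}$ and carry out the Bayes-rule cancellation $\{e_u(\bm{X})+e_{u'}(\bm{X})\}$ against $\{\pi_u+\pi_{u'}\}$ by hand, which is more elementary and self-contained --- it needs no analogue of the balancing/sufficiency lemmas, only the law of iterated expectations. What the paper's longer route buys is the propensity-score-style dimension reduction (sufficiency of the scalar score), which has independent interest and parallels Section \ref{sec::central-role-principal-scores}; what your route buys is brevity and an explicit accounting, absent from the paper, of the facts specific to the no-monotonicity case: that all four observed cells are now two-component mixtures, which companion stratum $u'$ pairs with $u$ in each cell, and that only the pairwise conditional-mean form of GPI within co-occurring pairs is consumed. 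Your appeal to Proposition \ref{thm::prop-no-monotonicity} (with $p_1(\bm{X}),p_0(\bm{X})$ substituted for $p_1,p_0$) to make the weights computable for fixed $\xi$ also matches the paper's remark preceding the statement.
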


Proposition \ref{thm::PCE-without-monotonicity} is similar to Proposition \ref{thm::identifiability-monotonicity-GPI}, except for that all the observed groups defined by $(Z,S)$ are mixtures of two latent strata.

To end this subsection, we discuss a model strategy for principal scores without Monotonicity.
Combining $s\bar{s}$ and $\bar{s}s$ into one category, we define $V_i = U_i$ if $U_i=ss$ or $\bar{s}\bar{s}$, and $V_i = s\&\bar{s}$ if $U_i=s\bar{s}$ or $\bar{s}s$.
We can model $\Pr(U\mid \bm{X})$ by two steps. First, we model $\Pr(V\mid \bm{X})$ as a three-level Multinomial Logistic regression. Second, we partition the category of $V$, $s\&\bar{s}$, into two sub-categories of $U$, $s\bar{s}$ and $\bar{s}s$, with probabilities $\Pr(U=s\bar{s} \mid V=s\&\bar{s} , \bm{X}) = 1/(1+\xi)$ and $\Pr(U=\bar{s}s\mid V=s\&\bar{s}, \bm{X}) = \xi/(1+\xi)$.
We show in the supplementary material the EM algorithm for computing the MLE of the above model.
After estimating the principal scores, we can apply the weighting and covariate-adjustment method to estimate the PCEs as discussed in Section \ref{sec:estimation}.

\section{Simulation Studies}\label{sec:simu}

To examine the finite sample performance of our estimators, we conduct a series of simulation studies. Let the sample sizes be $500$ in all scenarios. For unit $i$, we generate $X_{i1}, \ldots, X_{i4} \iidsim N(0,1)$ and $X_{i5} \sim \mathrm{Bern}(1/2)$, and let $\bm X_i = (1, X_{i1}, \ldots, X_{i5})^\top $. We conduct simulations under Strong Monotonicity and Monotonicity, respectively. In each scenario we consider five cases indexed by the parameter $\theta=-1,-0.5,0,0.5,$ and $1$. We postpone the interpretation of $\theta$ until afterwards.

Under Strong Monotonicity, for each $\theta$ we generate principal strata from a Logit model
$
\mathrm{logit} ~ \mathrm{Pr}(U_i = s\bar{s}\mid \bm X_i)=\bm\theta^\top\bm X_i,
$
where $\bm{\theta}=(0,0.5,0.5,1,1,\theta)^\top$. We generate Normal potential outcomes from
$
Y_i(1)\mid \bm X_i \sim  N\left(\sum_{j=1}^5 X_{ij} + 2\cdot I_{\{U=s\bar{s}\}}+1, 1\right)
$
and
$
Y_i(0)\mid \bm X_i \sim  N\left(\sum_{j=1}^5 X_{ij} + 2, 1\right) ;
$
Bernoulli potential outcomes from 
$
\mathrm{logit}~ \mathrm{Pr}\left\{ Y_i(1) = 1\mid \bm X_i\right\}  = 0.3\sum_{j=1}^5 X_{ij}+I_{\{U=s\bar{s}\}}
$
and
$
\mathrm{logit}~ \mathrm{Pr}\left(\{ Y_i(0) = 1\mid \bm X_i\right\} = 0.3\sum_{j=1}^5 X_{ij}+0.5.
$

Under Monotonicity, for each $\theta$ we generate principal strata from a Multinomial Logit model
$
\Pr(U_i=u\mid \bm X_i) = \mathrm{exp}(\bm\theta_u^\top \bm X_i) / \sum_{u^\prime} \mathrm{exp}(\bm\theta_{u^\prime}^\top \bm X_i) 
$
for $u=s\bar{s}, ss, \bar{s}\bar{s}$, where $\bm\theta_{ss}=(0.25,0.5,0.5,1,1, \theta)$, $\bm\theta_{\bar{s}\bar{s}}=(-0.25,1,1,0.5,0.5, \theta)$ and $\bm\theta_{s\bar{s}}=\bm 0$. We generate Normal potential outcomes from
$
Y_i(1)\mid \bm X_i \sim  N\left(\sum_{j=1}^5 X_{ij} - I_{\{U=\bar{s}\bar{s}\}} + 4, 1\right)
$
and
$
Y_i(0)\mid \bm X_i \sim  N\left(\sum_{j=1}^5 X_{ij} + I_{\{U=ss\}} + 1, 1\right) ; 
$
Bernoulli potential outcomes from
$
\mathrm{logit}~\mathrm{Pr}\left\{ Y_i(1) = 1\mid \bm X_i\right\} = 0.3\sum_{j=1}^5 X_{ij}+0.25 \left(I_{\{U=\bar{s}\bar{s}\}}-1\right) 
$
and
$
\mathrm{logit}~\mathrm{Pr}\left\{ Y_i(0) = 1\mid \bm X_i\right\} = 0.3\sum_{j=1}^5 X_{ij}+0.25 \left(1-I_{\{U=ss\}}\right).
$
Although the above data generating mechanisms violate GPI, they satisfy its weaker version, i.e., $\varepsilon_1=\varepsilon_0=1$, which also suffices to ensure Proposition \ref{thm::identifiability-monotonicity-GPI}.

To examine the performance of our estimators with and without (the weaker version of) GPI, in each simulation scenario we analyze the data with and without the binary covariate $X_{i5}$, and we respectively label the corresponding results as ``oracle'' and ``obs''. Without using $X_{i5}$, we can view $\theta$ as a measure of the violation from GPI. 
In Figure \ref{fg:simu}, we present only the results for $ACE_{\bar{s}\bar{s}}$ using the model-assisted estimator, because in our simulations the naive weighting estimators are uniformly worse in terms of estimation efficiency. For the ease of presentation, we omit similar results for other principal strata. We use $500$ bootstraps to construct $95\%$ confidence intervals, and focus on the average biases and coverage rates over $1000$ repeated samplings. With the binary covariate, our estimator has small biases and achieves nominal coverage rates, for both Normal and Bernoulli potential outcomes. Without the binary covariate, our estimators have bias issues for both Normal and Bernoulli potential outcomes when $|\theta|$ approaches one, i.e., GPI is severely violated. The interval estimates under coverage the true parameters for Normal outcomes when $|\theta|$ approaches one, but the coverage properties for Bernoulli outcomes are robust with respect to the violations of GPI. 
This bias issue, as well as the untestable nature of PI and GPI, warns us that sensitivity analysis with respect to PI and GPI, as proposed in Section \ref{subsec:GPIsen}, must be an essential part of any empirical studies using principal scores to analyze principal stratification problems.

Due to the constraint of space, we compare our model-assisted estimator with \citet{Jo:2009}'s model-based estimator in the supplementary material, showing that our estimator does not lose efficiency compared to full modeling and is robust to model misspecification of the outcome.

\begin{figure}[htbp]
\centering
\begin{subfigure}{\textwidth}
  \centering
  \includegraphics[height=.5\linewidth, width=.7\linewidth]{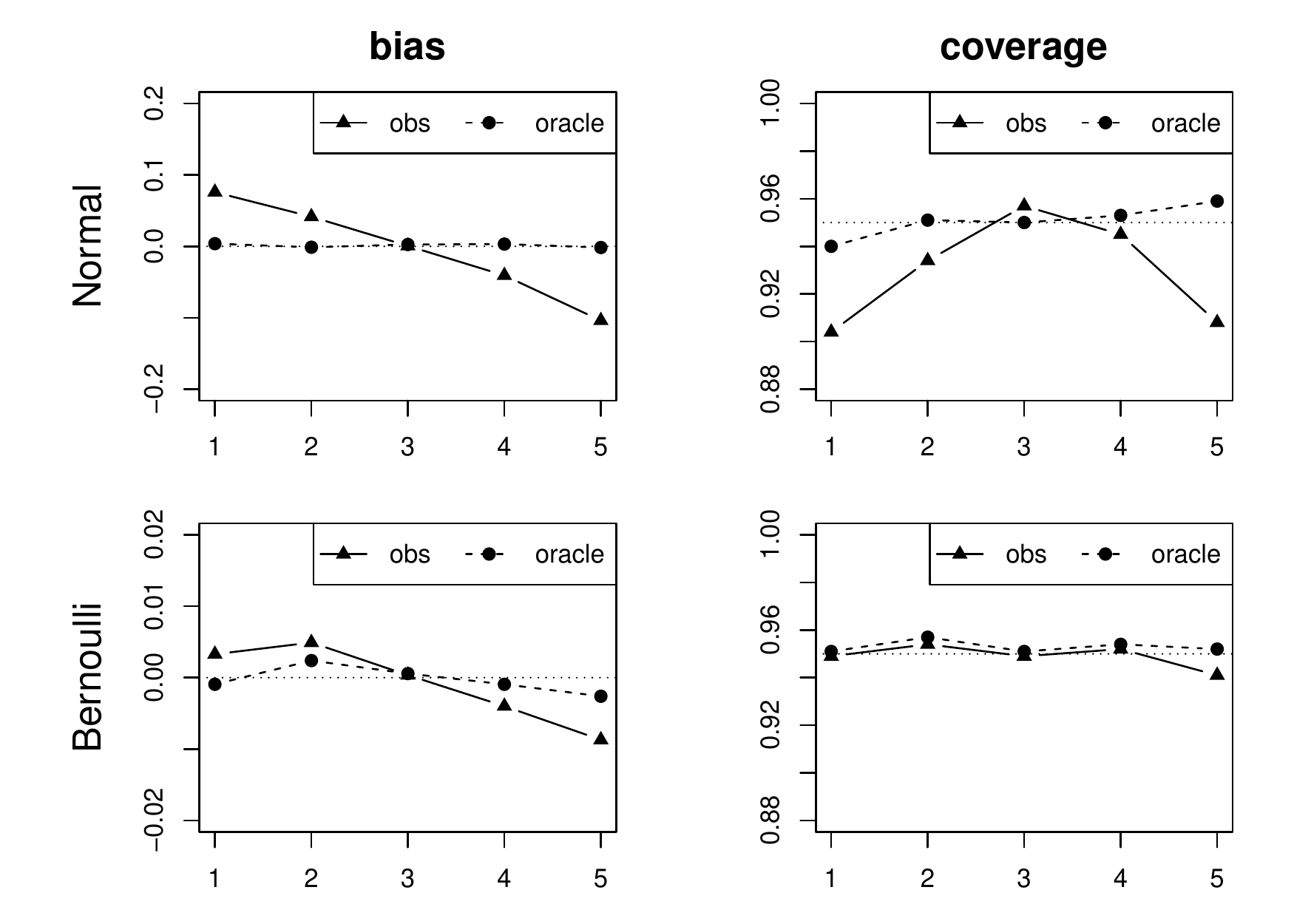}
  \caption{Under Strong Monotonicity. The horizontal axis shows the case numbers, and ``obs'' and ``oracle'' denote the cases with and without the binary covariate, respectively.}
  \label{fg:sm2}
\end{subfigure}
\begin{subfigure}{\textwidth}
  \centering
  \includegraphics[height=.5\linewidth, width=.7\linewidth]{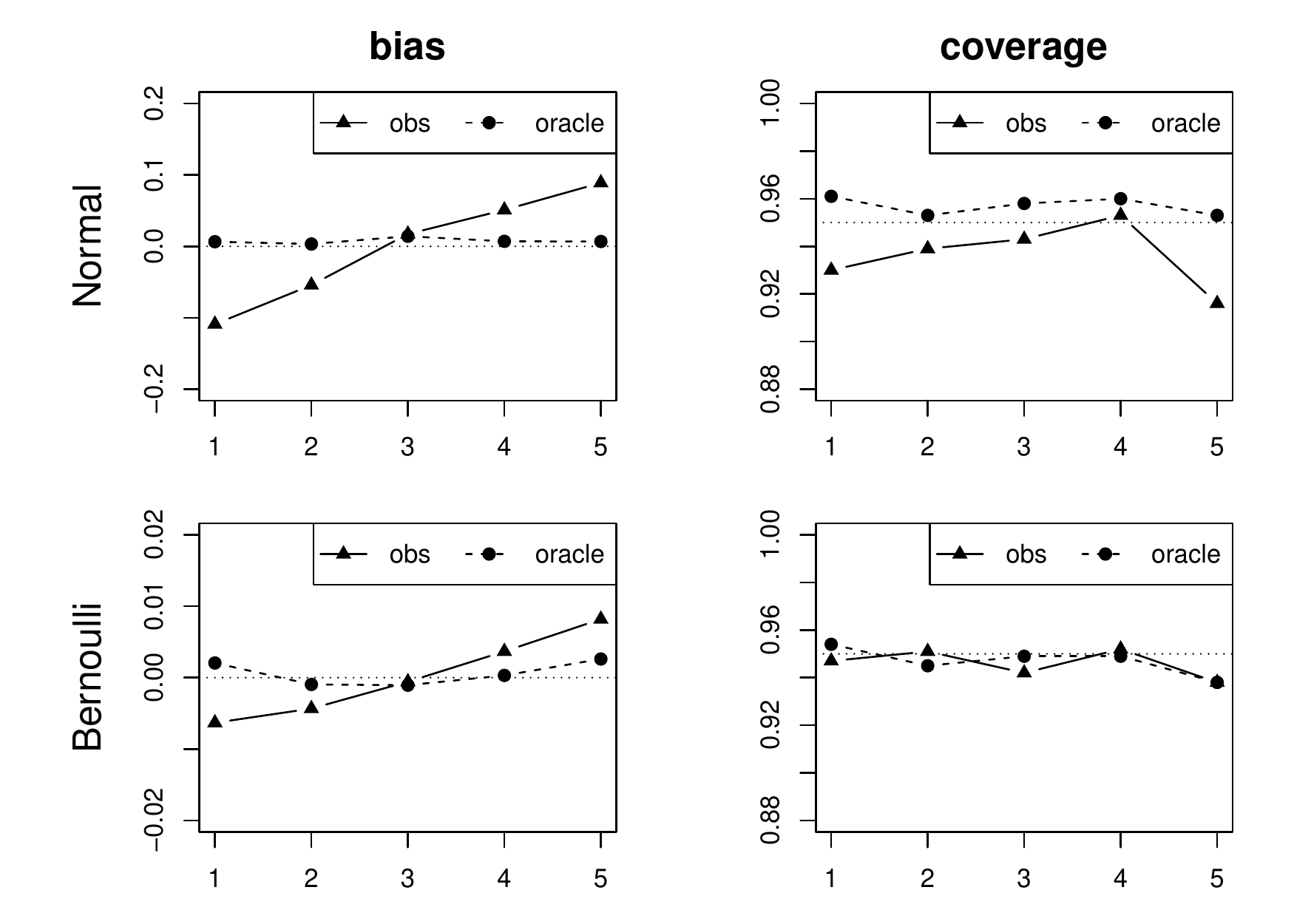}
  \caption{Under Monotonicity. The horizontal axis shows the case numbers, and ``obs'' and ``oracle'' denote the cases with and without the binary covariate, respectively.}
  \label{fg:m2}
\end{subfigure}
\caption{Simulation Results for $ ACE_{\bar{s}\bar{s}}$. Each subfigure is a $2\times2$ matrix summarizing two repeated sampling properties (average biases and coverage rates of interval estimates).}
\label{fg:simu}
\end{figure}

\section{Applications}\label{sec:apply}

\subsection{An Encouragement Experiment with Noncompliance}\label{subsec::flu}
In this section, we re-analyze a flu shot encouragement experiment data set previously studied by \cite{Hirano:2000}. Between 1978 to 1980, a general medicine clinic in Indiana conducted an encouragement experiment, in which participating individuals' physicians were randomly assigned to the treatment arm with computer-generated letters encouraging them to inoculate their patients, or the control arm with no letters. The outcome of interest is the individual's flu-related hospitalization status during the subsequent winter.
As in \cite{Hirano:2000}, we use the data from 1980, with $2893$ experimental units. In our analysis, $Z=1$ if an individual's physician received the letter, and $Z=0$ otherwise. The intermediate variable $S=1$ if the individual received the flu shot, and $S=0$ otherwise. The outcome of interest $Y=1$ if the individual was hospitalized for flu-related reasons, and $Y=0$ otherwise. The Monotonicity assumption is plausible for this data set, because we expect the encouragement letter to have nonnegative effect on taking the flu shot. Because this encouragement experiment is an open-label trial, previous researchers doubted ER due to the possible ``direct effect'' of the flu shot encouragement on the outcome.

To start our analysis, we use the covariate balancing conditions in Corollary \ref{corollary::M} to check the plausibility of the Logistic principal score model. Choosing $h(\bm X)=\bm X$ is reasonable, because all covariates are binary except for ``age.'' The balance checking is equivalent to estimating the PCEs on $h(\bm{X})$, known to be zero. Therefore, the corresponding ``standardized $t$-statistics'' should follow standard Normal distributions. Figure \ref{fg:flu1} shows that the covariates are well balanced. Assuming GPI, we estimate the PCEs with standard errors and $95\%$ confidence intervals in parentheses as:
$$
\widehat{ACE_{s\bar{s}}}=-0.018\  [-0.052,0.016] , \quad 
\widehat{ACE_{ss}}=-0.046\  [-0.091,0.002] , \quad 
\widehat{ACE_{\bar{s}\bar{s}}}=-0.006\     [-0.030,0.017].
$$

Therefore, for compliers, receiving the encouragement letter will lower the chance of flu related hospital visit by $1.8\%$, but this effect is not significant. Furthermore, ER seems plausible for never-takers. However, there is some evidence that it does not hold for always-takers, because the upper confidence limit is close to zero. Our findings corroborate \cite{Hirano:2000}'s argument that ``it is more plausible to impose the exclusion restriction for never-takers than for always-takers.'' \citet{Hirano:2000}'s results required careful analysis, including using data dependent priors with several tuning parameters that account for the background knowledge. Our analysis under GPI yields coherent conclusions as theirs. Therefore, if we believe their prior knowledge and statistical analysis, then GPI seems plausible in this example. At least, there is no obvious contradiction derived from two different analysis, and our results under GPI have meaningful scientific interpretations.

Nevertheless, the data cannot validate GPI, an untestable assumption requiring observed covariates $\bm{X}$ contain all characteristics related to the latent principal stratum and potential outcomes; \citet{Hirano:2000}'s analysis does not contradict GPI but does not prove it either. As advocated in Section \ref{sec:sen}, we perform sensitivity analysis for GPI,  allowing $(\varepsilon_1,\varepsilon_0) $ to vary within $[1/2, 2]\times [1/2,2]$ with results in Figure \ref{fg:flu2}. If we are willing to assume that never-takers are the strongest patients and the always-takers are the weakest patients, we can restrict our sensitivity analysis within the region with $\varepsilon_1<1$ and $\varepsilon_0>1.$ Interestingly, within this range most of the confidence intervals $\widehat{ACE_{s\bar{s}}}$ do not cover zero, suggesting that there is a significant causal effect for compliers. Furthermore, $\widehat{ACE_{ss}}$ and $\widehat{ACE_{\bar{s}{s}}}$ are relatively robust to $\varepsilon_1$ and $\varepsilon_0,$ respectively. The upper confidence limits for $\widehat{ACE_{ss}}$ are always close to zero as $\varepsilon_1$ varies, showing weak evidence for violation of ER for always-takers; the centers of the confidence intervals for $\widehat{ACE_{\bar{s}\bar{s}}}$ are always close to zero as $\varepsilon_0$ varies, suggesting that ER holds for never-takers. Fortunately, although the point and interval estimators vary with the sensitivity parameters, the final conclusions do not change materially.

\begin{figure}[htbp]
\centering
\begin{subfigure}{\textwidth}
  \centering
  \includegraphics[height=.4\linewidth, width=.58\linewidth]{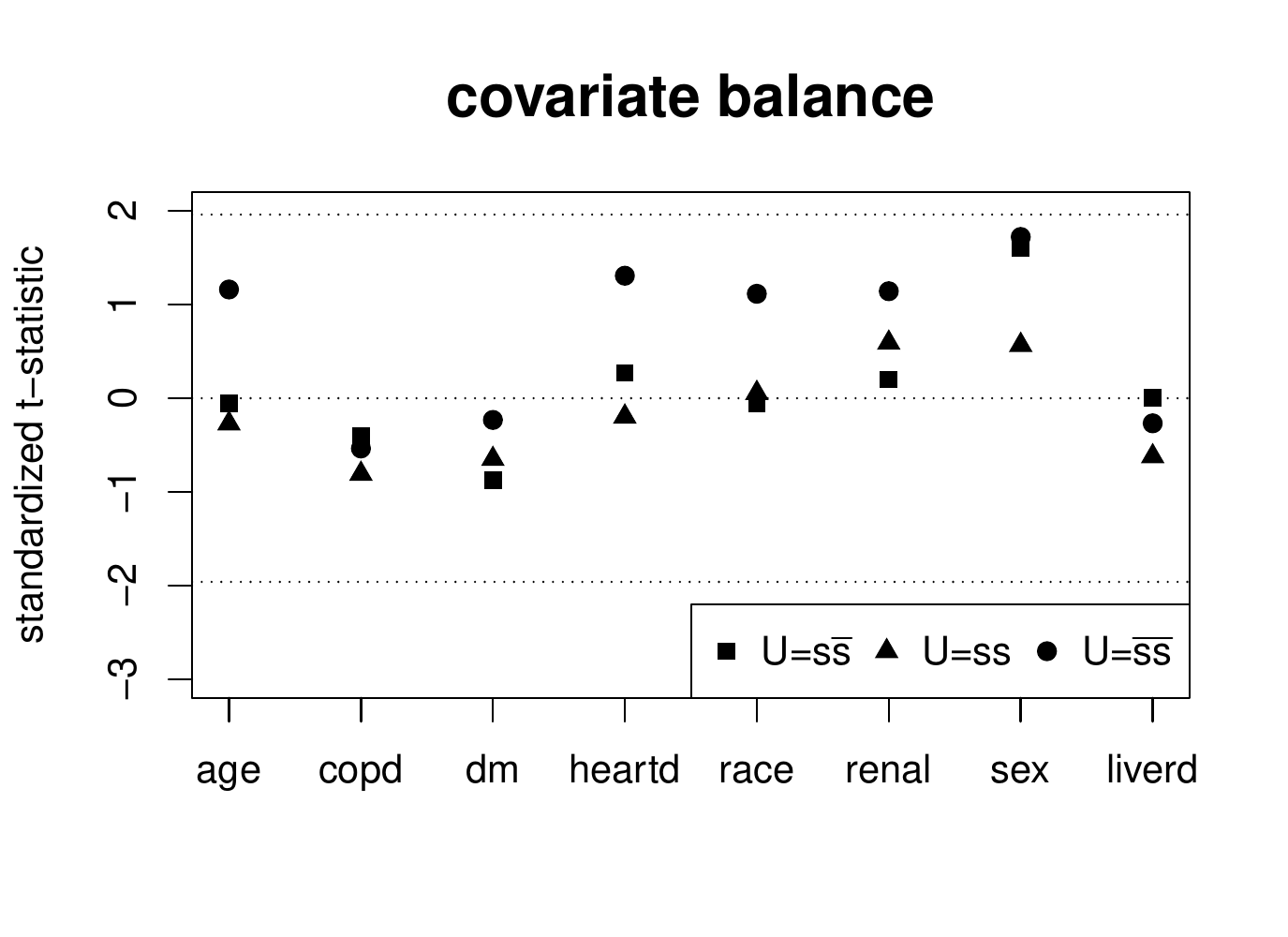}
  \caption{Covariate Balance Check. The horizontal axis shows the names of the covariates.}
  \label{fg:flu1}
\end{subfigure}
\begin{subfigure}{\textwidth}
  \centering
  \includegraphics[height=.4\linewidth, width=\linewidth]{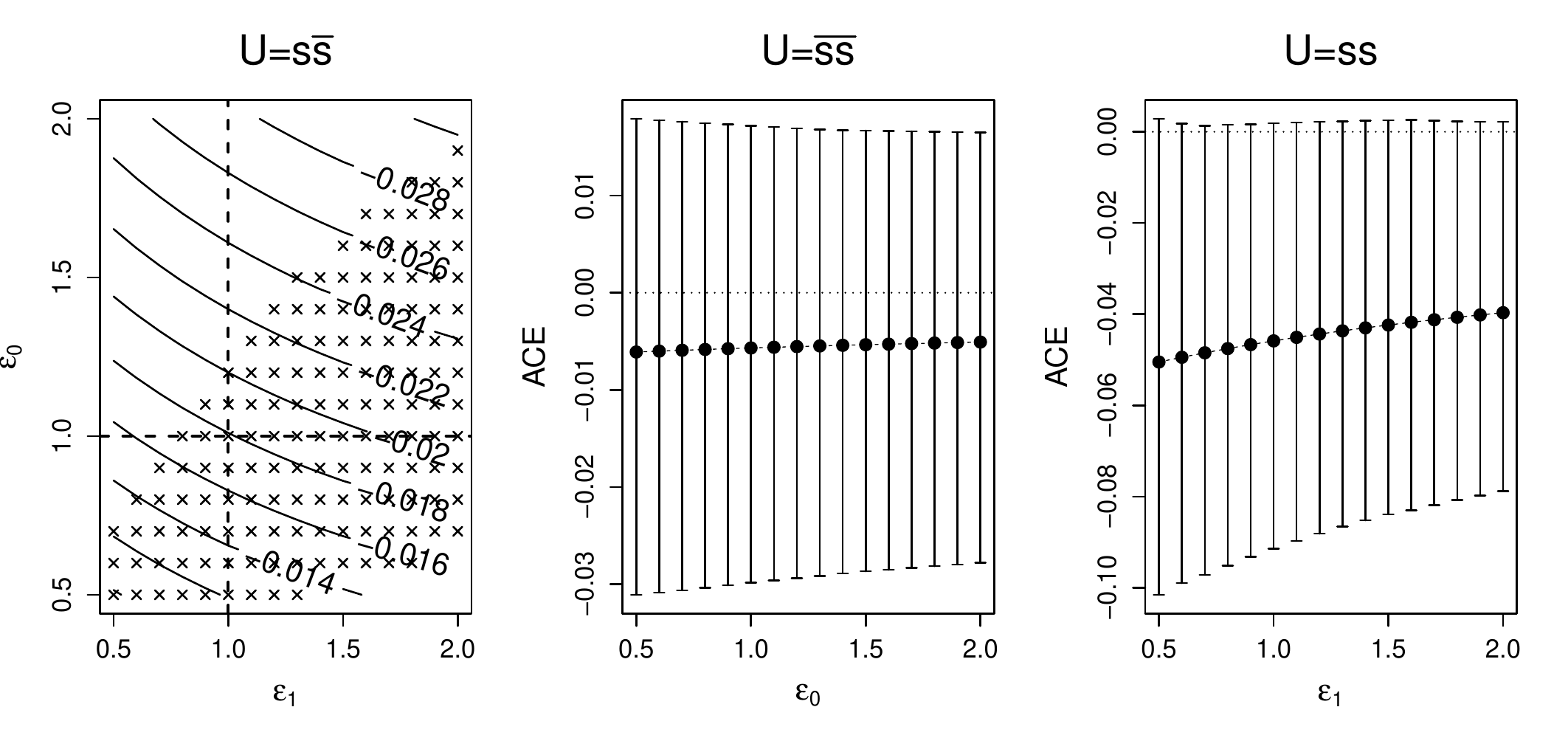}
  \caption{Sensitivity Analysis for GPI. The first subfigure shows the contours of the point estimates of $ACE_{s\bar{s}}$ for fixed values of $\varepsilon_1$ and $\varepsilon_0$, where ``$\times$'' denotes $(\varepsilon_1, \varepsilon_0)$ such that the corresponding interval estimate covers 0. The second and third subfigures show the point and interval estimates of $ACE_{\bar{s}\bar{s}}$ for fixed values of $\varepsilon_0$, and the point and interval estimates of $ACE_{ss}$ for fixed values of $\varepsilon_1$, respectively. }
  \label{fg:flu2}
\end{subfigure}
\caption{The Flu Shot Encouragement Experiment}
\label{fg:flu}
\end{figure}

\subsection{A Randomized Trial with Truncation by Death} \label{subsec::swog}
From  October 1999 to January 2003, the Southwest Oncology Group (SWOG) conducted a randomized phase III trial (protocol 99-16) to compare the treatment of docetaxel and estramustine (DE) with mitoxantrone and prednisone (MP) in patients with metastatic, androgen-independent prostate cancer (\citealp{Petrylak:2004}).
A total of $674$ eligible patients participated in the study between October 1999 and January 2003.  Study participants  were randomly assigned to the DE arm  or the MP arm. The primary outcome is the survival time, and the secondary outcome is the health related quality of life (HRQOL). \cite{Petrylak:2004} have reported the overall survival benefit of taking DE over taking MP. In our analysis, we are interested in assessing the causal effect of DE versus MP on the HRQOL one year after receiving the treatment. In our analysis, $Z=1$ if a patient received DE, and $Z=0$ if the patient received MP. We use the difference between HRQOL after one year and the baseline HRQOL as the outcome of interest. The survival indicator $S=1$ if a patient survived after one year.

Because of the truncation by death problem, we are interested in estimating the survivor average causal effect.
As in the previous example, we first check the plausibility of the Logistic principal score model. Figure \ref{fg:swog1} shows that we achieve covariate balance. The point estimate of the SACE is $3.07$, but its standard error is $2.976$ and the $95\%$ confidence interval $[-2.93, 8.69]$ covers zero. The results show that DE is not significantly more effective than MP to improve the HRQOL of the patients, which is similar to the analysis in \cite{Ding:2011}. However, applying \citet{Zhang:2009}'s Normal mixture model, we obtain point estimate $12.34$ with standard error $47.17$. The tremendous variability of the estimator is due to the unstable numerical issue and unreliable large sample Normal approximation, as investigated by \citet{Mealli:2015}.

However, both the treatment and control are active drugs for the prostate cancer, and therefore it is not reasonable to assume that the treatment is more effective than the control for all patients, i.e., Monotonicity may not hold.
We perform sensitivity analysis for Monotonicity, and choose the range of the sensitivity parameter $\xi$ based on Proposition \ref{thm::prop-no-monotonicity}. We compute from the data that $\widehat{p}_1=0.496$ and $\widehat{p}_0=0.389$, and therefore $0\leq \widehat{\xi} \leq 1 - ( \widehat{p}_1 - \widehat{p}_0  ) / \{  \min(  \widehat{p}_1, 1-\widehat{p}_0  ) \} \approx 0.217.$
The sensitivity analysis results in Figure \ref{fg:swog2} show that the point and interval estimates of $ACE_{ss}$ are relatively robust to $\xi.$ Furthermore, the interval estimates for $\widehat{ACE_{ss}}$ always cover zero as $\xi$ varies. In summary, the sensitivity analysis results confirm our previous conclusions.

\begin{figure}[t!]
\centering
\begin{subfigure}[t]{.45\textwidth}
  \centering
  \includegraphics[height=\linewidth, width=\linewidth]{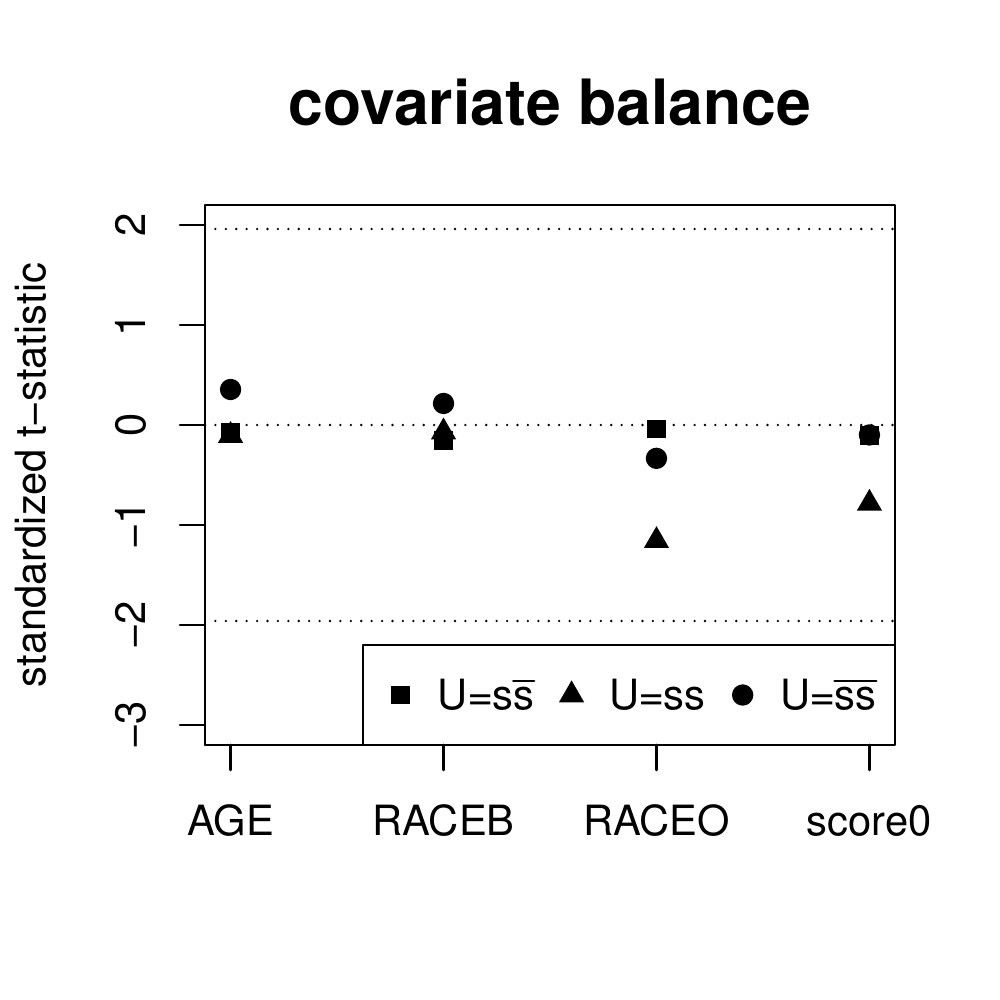}
  \caption{Covariate Balance Check. The horizontal axis shows the names of the covariates.}
  \label{fg:swog1}
\end{subfigure}
\begin{subfigure}[t]{.45\textwidth}
  \centering
  \includegraphics[height=\linewidth, width=\linewidth]{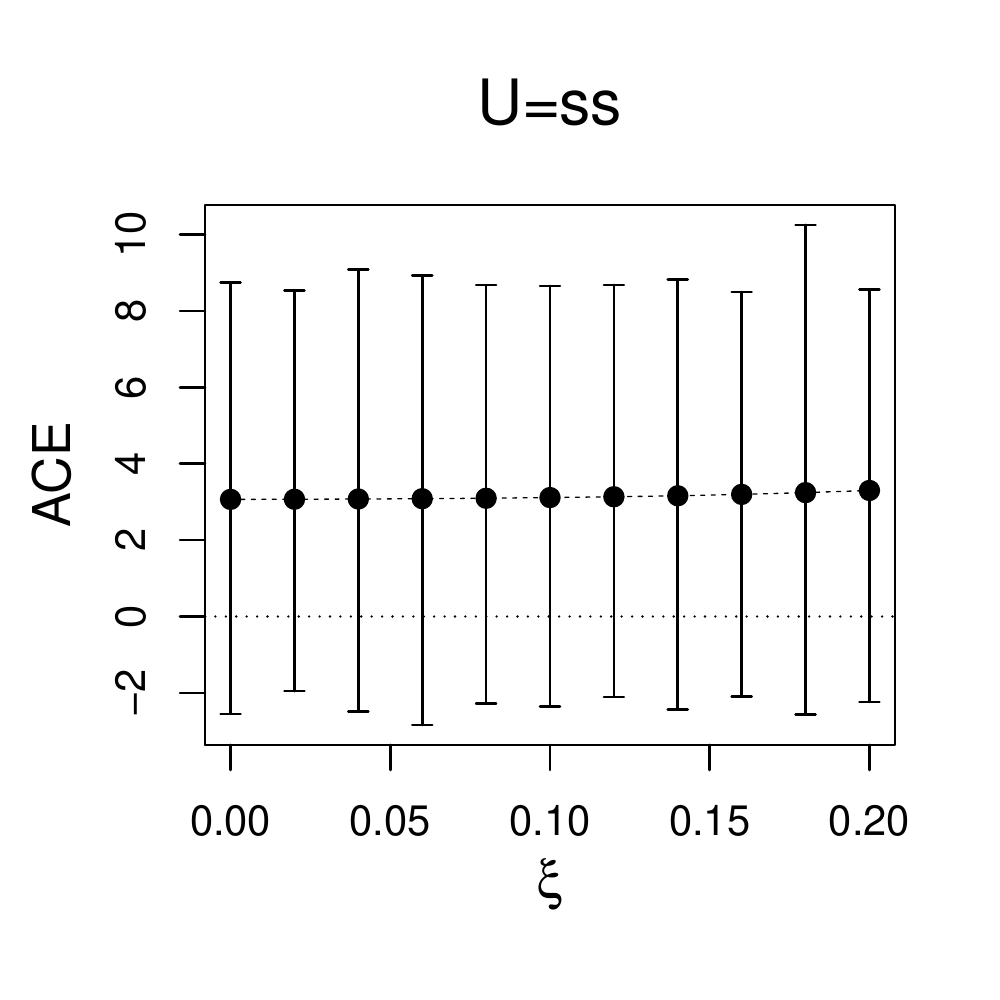}
  \caption{Sensitivity Analysis for Monotonicity. We show the point and interval estimates of $ACE_{ss}$ for fixed values of $\xi$. }
  \label{fg:swog2}
\end{subfigure}
\caption{The SWOG Randomized Trial}
\label{fg:swog}
\end{figure}

\section{Discussion}
\label{sec::discussion}

In observational studies, causal effects can be estimated by inverse propensity score weighting \citep{Rosenbaum:1983biometrika}, which may be numerically unstable and have poor finite sample properties. Our estimators, weighted by probabilities themselves, do not suffer from these problems. Researchers \citep[e.g.,][]{Bang:2005} have developed doubly robust methods in observational studies. Similar to our model-assisted estimators, these doubly robust estimators can also be derived from regression estimators in surveys \citep{Cochran:1977}. Because of this similarity, it will be interesting to develop doubly robust estimators under the PI assumptions that are consistent when either the principal score or the outcome model is correctly specified.

The theoretical results have demonstrated the two-fold role of the pretreatment covariates. First, the plausibility of the ignorability assumptions rely crucially on adequate covariates. Second, with more covariates that are predictive to the outcome, the covariate-adjusted estimators will be more efficient. Our results suggest that, in the design of randomized experiments, it is important for practitioners to try their best to collect covariates that are predictive to both the latent principal strata and the potential outcomes, which echoes \citet{Jo:2009} and \citet{Mealli:2013}.

Although in the main text we focused on the average causal effect within principal strata, our results can be easily extended to general causal measures. For example, we can dichotomize the outcome to identify the distributional causal effects \citep{Ju:2010}. For binary $S$, we have derived clean results and easy-to-implement estimators. For general discrete or continuous $S$, we can likewise derive theoretical results under PI by modifying the weights in Propositions \ref{thm::identifiability-monotonicity-GPI} and \ref{thm::prop-no-monotonicity}. However, a continuous $S$ results in infinitely many principal strata, which makes it challenging to estimate the principal scores and outcome distributions conditional on continuous variables. We need more structural assumptions on the causal problems \citep{Jin:2008, Schwartz:2011} and more sophisticated statistical inferential tools.

Missing data is an important problem that often arises in real data analysis. Our two-step procedure has some advantages if only some outcomes are missing. We can conduct the first step for estimating principal scores without any difficulty, and need only to modify the second weighting step. If the outcome is missing at random, then we can simply weight each observation by the inverse of the conditional probability of being observed given $(Z,S,\bm{X})$. 
However, for missing data problem, the key issue is the missing data mechanism. Other missing data mechanisms, e.g., latent ignorability \citep{Frangakis:1999}, may be more plausible, but the identification becomes challenging. Due to this complication, we leave the missing data problem for future research.

\section*{Acknowledgments}

Peng Ding's work is partially supported by grant R305D150040 from Institute of Education Sciences, USA.
We are grateful for the comments from Professors Donald Rubin and Tirthankar Dasgupta, and other participants in the ``Matched Sampling and Study Designs'' seminar at Harvard. We benefit from the suggestions of Avi Feller at Berkeley, Keli Liu at Stanford, and Professors Luke W. Miratrix and Joseph K. Blitzstein at Harvard. The comments from the Joint Editor, the Associate Editor, and two reviewers have helped improve the quality of our paper significantly.

\bibliographystyle{apalike}
\bibliography{PSdesign2015}

\newpage
\begin{center}
{\bf \Huge Supplementary material}
\end{center}
\bigskip

\renewcommand {\thesection} {A.\arabic{section}}
\renewcommand {\theequation} {A.\arabic{equation}}
\renewcommand {\thelemma} {A.\arabic{lemma}}

\ref{sec::sm} contains the proofs under Strong Monotonicity. \ref{sec::mono} contains the proofs under Monotonicity. \ref{sec::no-mono} contains the proofs without Monotonicity. \ref{sec::principal-scores-computations} presents the computational details of the EM algorithms for the principal score models. \ref{sec::weighting-estimators} gives the explicit forms of the weighting and model-assisted estimators. \ref{sec::additional simulation} contains additional simulations comparing our model-assisted estimator with the model-based estimator in Jo and Stuart (2009).
We will use $f(\cdot)$ as for a (conditional) probability density function, and the following basic identity of importance sampling to simplify our proofs.

\begin{lemma}
\label{lemma::importance-sampling}
Assuming existence of moments, $X\sim f_1(x)$ and $Y\sim f_2(y)$, we have
\begin{eqnarray}
\label{eq::importance-sampling}
E \{ g(X) \} = E \left\{  \frac{f_1(Y)}{f_2(Y)}   g(Y)   \right\}.
\end{eqnarray}
\end{lemma}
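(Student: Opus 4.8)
The plan is to establish the identity by a single change-of-measure computation, expanding the right-hand expectation as an integral against the density $f_2$ of $Y$ and then cancelling. Concretely, I would write
\begin{equation*}
E\left\{\frac{f_1(Y)}{f_2(Y)}\, g(Y)\right\} = \int \frac{f_1(y)}{f_2(y)}\, g(y)\, f_2(y)\, dy,
\end{equation*}
using that the outer expectation is taken over $Y\sim f_2$. The factor $f_2(y)$ cancels the denominator, leaving $\int f_1(y)\, g(y)\, dy$; since the integration variable is a dummy, this is precisely $\int f_1(x)\, g(x)\, dx = E\{g(X)\}$, which is the left-hand side.

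The only technical point I would flag is the well-definedness of the ratio $f_1/f_2$. I would note the implicit absolute-continuity requirement that $f_2(y)>0$ whenever $f_1(y)>0$ (equivalently, the support of $f_1$ is contained in that of $f_2$), so that the integrand is defined wherever it contributes and may be set to zero on the $f_2$-null set outside the support of $f_2$; the cancellation is then valid almost everywhere with respect to $f_2$. The stated ``existence of moments'' hypothesis is what I would invoke to guarantee that both integrals converge absolutely, legitimizing the manipulation.

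I expect essentially no obstacle here: this is the standard importance-sampling (change of measure) identity, and the proof reduces to the one-line cancellation above together with the support caveat. Its role in what follows is to convert an expectation taken under a latent, unidentified stratum-specific outcome distribution into a reweighted expectation under an observed, identified distribution, with the Radon--Nikodym factor $f_1/f_2$ realized concretely as the principal-score ratios appearing in the weights $w_u(\bm{X})$ of Propositions~\ref{thm::identifiability-strong-monotonicity} and \ref{thm::identifiability-monotonicity-GPI}.
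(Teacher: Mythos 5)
Your proof is correct: the one-line change-of-measure calculation, $E\{f_1(Y)g(Y)/f_2(Y)\} = \int \{f_1(y)/f_2(y)\}g(y)f_2(y)\,dy = \int f_1(y)g(y)\,dy = E\{g(X)\}$, is exactly the standard argument, and the paper itself states this lemma without proof as a ``basic identity of importance sampling,'' so there is nothing deeper to match. Your added caveat about absolute continuity (the support of $f_1$ contained in that of $f_2$, so the ratio is well defined $f_2$-almost everywhere) is a point the paper leaves implicit, and it is indeed the right condition --- one that is satisfied in the paper's applications since the relevant density ratios there are ratios of principal-score-type probabilities.
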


\section{Proofs of the Propositions Under Strong Monotonicity}
\label{sec::sm}

To prove Proposition 1, we first need the following
lemmas.

\begin{lemma}
\label{lemma::SM-balance}
Under Strong Monotonicity, $\bm{X}\ind U\mid e_u(\bm{X})$ for $u=s\bar{s}$ and $\bar{s}\bar{s}$.
\end{lemma}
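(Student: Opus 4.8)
The plan is to adapt the classical balancing-score argument of \citet{Rosenbaum:1983biometrika} for the propensity score to the present latent-stratum setting. Under Strong Monotonicity only the two strata $s\bar{s}$ and $\bar{s}\bar{s}$ are possible, so $U$ is effectively binary, with $e_{s\bar{s}}(\bm{X}) = p_1(\bm{X})$ and $e_{\bar{s}\bar{s}}(\bm{X}) = 1 - p_1(\bm{X})$; in particular the two principal scores are deterministic functions of each other. Consequently, conditioning on $e_{s\bar{s}}(\bm{X})$ is equivalent to conditioning on $e_{\bar{s}\bar{s}}(\bm{X})$, and it suffices to establish $\bm{X}\ind U\mid e_{s\bar{s}}(\bm{X})$; the case $u=\bar{s}\bar{s}$ then follows by complementation.

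Since $U$ is binary here, the conditional independence $\bm{X}\ind U\mid e_{s\bar{s}}(\bm{X})$ is equivalent to the statement that the conditional stratum probability is measurable with respect to $e_{s\bar{s}}(\bm{X})$, i.e. that $\Pr(U = s\bar{s}\mid \bm{X}, e_{s\bar{s}}(\bm{X}))$ does not depend on $\bm{X}$ beyond its dependence through $e_{s\bar{s}}(\bm{X})$. First I would verify the two sides of
\[
\Pr(U = s\bar{s}\mid \bm{X}, e_{s\bar{s}}(\bm{X})) = \Pr(U = s\bar{s}\mid e_{s\bar{s}}(\bm{X}))
\]
separately. For the left-hand side, because $e_{s\bar{s}}(\bm{X})$ is a deterministic function of $\bm{X}$, conditioning on the pair $\{\bm{X}, e_{s\bar{s}}(\bm{X})\}$ is the same as conditioning on $\bm{X}$ alone, so
\[
\Pr(U = s\bar{s}\mid \bm{X}, e_{s\bar{s}}(\bm{X})) = \Pr(U = s\bar{s}\mid \bm{X}) = e_{s\bar{s}}(\bm{X})
\]
by the definition of the principal score.

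For the right-hand side, the tower property of conditional expectation gives
\[
\Pr(U = s\bar{s}\mid e_{s\bar{s}}(\bm{X})) = E\{\Pr(U = s\bar{s}\mid \bm{X})\mid e_{s\bar{s}}(\bm{X})\} = E\{e_{s\bar{s}}(\bm{X})\mid e_{s\bar{s}}(\bm{X})\} = e_{s\bar{s}}(\bm{X}).
\]
Since both sides equal $e_{s\bar{s}}(\bm{X})$, they coincide, which yields the claimed conditional independence. The only step that requires care is the characterization of conditional independence: since $\bm{X}$ may be high-dimensional or continuous, I would not work with its full conditional distribution directly, but instead exploit that for binary $U$ the statement $\bm{X}\ind U\mid e_{s\bar{s}}(\bm{X})$ reduces to the measurability of the conditional stratum probability with respect to $e_{s\bar{s}}(\bm{X})$ — precisely the device of \citet{Rosenbaum:1983biometrika}. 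Beyond the existence of the principal score, which follows from Randomization and Strong Monotonicity, no further structure is needed, so I anticipate no genuine analytic obstacle; the main thing to keep explicit is the reduction to a binary $U$.
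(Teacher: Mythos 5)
Your proposal is correct and follows essentially the same route as the paper's own proof: both establish $\Pr\{U=s\bar{s}\mid \bm{X}, e_{s\bar{s}}(\bm{X})\} = e_{s\bar{s}}(\bm{X})$ directly and $\Pr\{U=s\bar{s}\mid e_{s\bar{s}}(\bm{X})\} = e_{s\bar{s}}(\bm{X})$ via the tower property, conclude conditional independence from the equality of these two quantities, and obtain the $\bar{s}\bar{s}$ case by the complementation $e_{\bar{s}\bar{s}}(\bm{X}) = 1 - e_{s\bar{s}}(\bm{X})$. The only difference is expository: you make explicit the reduction to measurability of the stratum probability for binary $U$, which the paper leaves implicit.
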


\begin{proof}[Proof of Lemma \ref{lemma::SM-balance}]
We have 
$
\Pr\{ U=s\bar{s}\mid \bm{X}, e_{s\bar{s}}(\bm{X}) \} 
=\Pr(U=s\bar{s}\mid \bm{X}) = e_{s\bar{s}}(\bm{X}), 
$
and 
\begin{eqnarray*}
\Pr\{ U=s\bar{s}\mid e_{s\bar{s}}(\bm{X})  \} 
= E[   \Pr\{  U=s\bar{s} \mid \bm{X}, e_{s\bar{s}}(\bm{X}) \}  \mid e_{s\bar{s}}(\bm{X}) ] = E\{ e_{s\bar{s}}(\bm{X}) \mid e_{s\bar{s}}(\bm{X})\} = e_{s\bar{s}}(\bm{X}).
\end{eqnarray*}
Therefore, $ \Pr\{ U=s\bar{s}\mid \bm{X}, e_{s\bar{s}}(\bm{X}) \} = \Pr\{ U=s\bar{s}\mid e_{s\bar{s}}(\bm{X})  \} $, implying $\bm{X}\ind U\mid e_{s\bar{s}}(\bm{X})$. Because $e_{\bar{s}\bar{s}}(\bm{X})  =1- e_{s\bar{s}}(\bm{X})$, other conditional independence also follows.
\end{proof}

\begin{lemma}
\label{lemma::SM-sufficiency}
Under Strong Monotonicity and PI, $Y(0)\ind U \mid e_u(\bm{X})$ for $u=s\bar{s}$ and $\bar{s}\bar{s}$.
\end{lemma}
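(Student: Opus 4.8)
The plan is to mimic the classical propensity-score sufficiency argument of Rosenbaum and Rubin, now with the principal score $e_u(\bm{X})$ playing the role of the propensity score and $U$ playing the role of the ``treatment'' indicator. Write $e(\bm{X}) = e_{s\bar{s}}(\bm{X})$ for brevity; since $e_{\bar{s}\bar{s}}(\bm{X}) = 1 - e_{s\bar{s}}(\bm{X})$ under Strong Monotonicity, the two scores generate the same conditioning information, so it suffices to establish $Y(0) \ind U \mid e(\bm{X})$. To prove this conditional independence I would show that the conditional law $f\{ Y(0) \mid U=u, e(\bm{X}) \}$ does not depend on the value $u$.

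First I would expand this conditional density by integrating over $\bm{X}$:
\begin{equation*}
f\{ Y(0) \mid U=u, e(\bm{X}) \} = \int f\{ Y(0) \mid U=u, \bm{X}, e(\bm{X}) \} \, f\{ \bm{X} \mid U=u, e(\bm{X}) \} \, d\bm{X}.
\end{equation*}
The crucial observation is that $e(\bm{X})$ is a deterministic function of $\bm{X}$, so conditioning on $\bm{X}$ already determines $e(\bm{X})$; hence $f\{ Y(0) \mid U=u, \bm{X}, e(\bm{X}) \} = f\{ Y(0) \mid U=u, \bm{X} \}$. I would then apply the two ingredients separately to the two factors. PI (Assumption \ref{assume::principal-ignorability}) gives $Y(0) \ind U \mid \bm{X}$, so the first factor reduces to $f\{ Y(0) \mid \bm{X} \}$, which is free of $u$. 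Lemma \ref{lemma::SM-balance} gives $\bm{X} \ind U \mid e(\bm{X})$, so the second factor reduces to $f\{ \bm{X} \mid e(\bm{X}) \}$, also free of $u$.

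Substituting both simplifications, the integrand no longer depends on $u$, whence
\begin{equation*}
f\{ Y(0) \mid U=u, e(\bm{X}) \} = \int f\{ Y(0) \mid \bm{X} \} \, f\{ \bm{X} \mid e(\bm{X}) \} \, d\bm{X},
\end{equation*}
which is a function of $e(\bm{X})$ alone. This establishes $Y(0) \ind U \mid e(\bm{X})$, and hence the claim for both $u = s\bar{s}$ and $u = \bar{s}\bar{s}$. I do not anticipate a genuine obstacle here: the argument is entirely a bookkeeping exercise in iterated conditioning. The one point that requires care --- and the only place where both hypotheses are genuinely needed --- is keeping the two factors separate, so that PI is applied to $Y(0)$ given $\bm{X}$ while Lemma \ref{lemma::SM-balance} is applied to $\bm{X}$ given the score; neither hypothesis alone suffices, which is why both Strong Monotonicity (through Lemma \ref{lemma::SM-balance}) and PI enter.
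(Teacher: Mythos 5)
Your proposal is correct: the factorization
\begin{equation*}
f\{ Y(0) \mid U=u, e(\bm{X}) \} = \int f\{ Y(0) \mid U=u, \bm{x} \} \, f\{ \bm{x} \mid U=u, e(\bm{X}) \} \, d\bm{x}
\end{equation*}
is valid (the redundancy of $e(\bm{X})$ given $\bm{X}$ is exactly right), PI removes the $u$-dependence from the first factor, Lemma \ref{lemma::SM-balance} removes it from the second, and the conclusion follows. However, you prove the conditional independence in the opposite direction from the paper. The paper's proof never factorizes the law of $Y(0)$; instead it computes $\Pr\{ U = s\bar{s} \mid Y(0), e_{s\bar{s}}(\bm{X}) \}$ by iterated expectation, uses PI to drop $Y(0)$ inside the inner conditional probability, obtains the value $e_{s\bar{s}}(\bm{X})$, and then matches this with $\Pr\{ U = s\bar{s} \mid e_{s\bar{s}}(\bm{X}) \} = e_{s\bar{s}}(\bm{X})$ from the proof of Lemma \ref{lemma::SM-balance} --- i.e., it shows the law of the binary $U$ given $(Y(0), \text{score})$ is free of $Y(0)$, rather than showing the law of $Y(0)$ given $(U, \text{score})$ is free of $U$. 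The two routes use the same two ingredients, but they buy slightly different things: your density decomposition makes the symmetric division of labor between PI and the balancing lemma completely transparent, at the cost of assuming conditional densities of $Y(0)$ and $\bm{X}$ exist (or of a more careful formulation via regular conditional distributions); the paper's argument manipulates only conditional probabilities of the two-valued $U$ and iterated expectations, so it needs no density or integration regularity at all, which is why it is the cleaner choice when the outcome is general (continuous, binary, time-to-event) as the paper intends.
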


\begin{proof}[Proof of Lemma \ref{lemma::SM-sufficiency}]
Applying Law of Iterated Expectation (LIE), we have
\begin{eqnarray*}
\Pr\{ U = s\bar{s} \mid Y(0), e_{s\bar{s}}(\bm{X}) \}
&=& E [   \Pr\{ U = s\bar{s} \mid Y(0), e_{s\bar{s}}(\bm{X}) , \bm{X} \}  \mid   Y(0), e_{s\bar{s}}(\bm{X})     ]\\
&=& E [   \Pr\{ U = s\bar{s} \mid Y(0),  \bm{X} \}  \mid   Y(0), e_{s\bar{s}}(\bm{X})     ],
\end{eqnarray*}
which, by PI, reduces to
$
E [   \Pr\{ U = s\bar{s} \mid  \bm{X} \}  \mid   Y(0), e_{s\bar{s}}(\bm{X})     ]
= E\{ e_{s\bar{s}}(\bm{X})  \mid   Y(0), e_{s\bar{s}}(\bm{X})  \}   = e_{s\bar{s}}(\bm{X}).
$
According to the proof of Lemma \ref{lemma::SM-balance}, we also have $\Pr\{ U=s\bar{s} \mid e_{s\bar{s}}(\bm{X})\}  = e_{s\bar{s}}(\bm{X})$, and therefore $\Pr\{ U = s\bar{s} \mid Y(0), e_{s\bar{s}}(\bm{X}) \}  = \Pr\{ U=s\bar{s} \mid e_{s\bar{s}}(\bm{X})\}$, implying $Y(0)\ind U\mid e_{s\bar{s}}(\bm{X})$. Other conclusions about conditional independence also hold.
\end{proof}

\begin{proof}[Proof of Proposition 1]
In the treatment group, $(Z_i=1,S_i=1)$ is equivalent to $(Z_i=1,U_i=s\bar{s})$, and $(Z_i=1,S_i=0)$ is equivalent to $(Z_i=1,U_i=\bar{s}\bar{s})$. Therefore, it is straightforward to identify
$$
E\{ Y(1)\mid U=s\bar{s} \}  = E(Y\mid Z=1,S=1),\quad E\{ Y(1)\mid U=\bar{s}\bar{s} \}  = E(Y\mid Z=1, S=0)
$$
by the observed data.
The control group is a mixture of $U=s\bar{s}$ and $U=\bar{s}\bar{s}$. 

On the one hand, it is relatively easy to show that
\begin{eqnarray}\label{eq::appendix-mean0}
E\{ Y(0)\mid U=u\}  = E[  E\{ Y(0)\mid U=u, e_u(\bm{X})\}   \mid U=u] = E [  E\{ Y(0)\mid e_u( \bm{X}  )\}  \mid U=u],
\end{eqnarray}
according to Lemma \ref{lemma::SM-sufficiency}. On the other hand, the weighted mean is
$$
E\{  w_{u}(\bm{X} ) Y\mid Z=0\} = E\left\{  \frac{e_u(\bm{X} )}{\pi_u} Y(0) \right\}
= E\left[  E\left\{  \frac{e_u(\bm{X} )}{\pi_u} Y(0) \mid e_u(\bm{X}) \right\} \right]
= E  \left[  \frac{e_u(\bm{X})}{\pi_u} E\{ Y(0)\mid e_u(\bm{X}) \}  \right]  .
$$
Because
$$
f\{  e_u(\bm{X}) \mid U=u\} = \frac{ f\{ e_u(\bm{X}) \}  \Pr\{ U=u\mid e_u(\bm{X})  \}   }{ \pi_u } = \frac{ f\{ e_u(\bm{X}) \}  e_u(\bm{X})  }{\pi_u},
$$
according to the proof of Lemma \ref{lemma::SM-balance},
the weighted mean becomes
\begin{eqnarray}\label{eq::appendix-mean0-weighted}
E\{  w_{u}(\bm{X} ) Y\mid Z=0\} = E  \left[    \frac{   f\{  e_u(\bm{X}) \mid U=u\}    }{    f\{  e_u(\bm{X}) \}  }       E\{ Y(0)\mid e_u(\bm{X})  \}  \right].
\end{eqnarray}
Therefore, formulas (\ref{eq::appendix-mean0}) and (\ref{eq::appendix-mean0-weighted}) are tied together by formula (\ref{eq::importance-sampling}) of importance sampling, yielding $E\{ Y(0)\mid U=u\} = E\{  w_{u}(\bm{X} ) Y\mid Z=0\}.$
This completes the proof.
\end{proof}

\begin{proof}
[Proof of Corollary 1]
We can treat $h(\bm{X})$ as a new ``outcome,'' on which the treatment has zero PCEs. The conclusions follow directly from Proposition 1.
\end{proof}

In order to prove Proposition 3, we need an additional lemma.

\begin{lemma}
\label{lemma::SM-sensitivity-lemma}
We can also represent the sensitivity parameter $\varepsilon$ as
\begin{eqnarray}
\label{eq::sensi-sm-equiv}
\varepsilon =   {  E\{ Y(0)\mid U=s\bar{s} , e_u(\bm{X}) \}   \over  E\{ Y(0)\mid U=\bar{s}\bar{s} , e_u(\bm{X}) \}  }  \quad (u=s\bar{s}, \bar{s}\bar{s}).
\end{eqnarray}
\end{lemma}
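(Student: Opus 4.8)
The plan is to lift the defining relation of $\varepsilon$ from the fine conditioning on $\bm{X}$ to the coarse conditioning on the principal score. By definition $\varepsilon$ does not depend on $\bm{X}$, so the identity $E\{Y(0)\mid U=s\bar{s},\bm{X}\}=\varepsilon\,E\{Y(0)\mid U=\bar{s}\bar{s},\bm{X}\}$ holds pointwise in $\bm{X}$. I would first record that under Strong Monotonicity $e_{\bar{s}\bar{s}}(\bm{X})=1-e_{s\bar{s}}(\bm{X})$, so the two choices of $u$ generate the same conditioning information and it suffices to prove \eqref{eq::sensi-sm-equiv} for a single score; write $e=e_u(\bm{X})$ for brevity.

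The main computation proceeds by two applications of the Law of Iterated Expectation bridged by the defining relation. Conditioning additionally on $\bm{X}$ (which determines $e$) gives
$$
E\{Y(0)\mid U=s\bar{s},e\}=E[\,E\{Y(0)\mid U=s\bar{s},\bm{X}\}\mid U=s\bar{s},e\,]=\varepsilon\,E[\,E\{Y(0)\mid U=\bar{s}\bar{s},\bm{X}\}\mid U=s\bar{s},e\,],
$$
where the second equality substitutes the defining relation and pulls the constant $\varepsilon$ outside the outer expectation.

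The crux is to eliminate the mismatch between the inner stratum $\bar{s}\bar{s}$ and the outer conditioning event $U=s\bar{s}$. Here I would invoke the balance Lemma~\ref{lemma::SM-balance}, namely $\bm{X}\ind U\mid e$: since the integrand $E\{Y(0)\mid U=\bar{s}\bar{s},\bm{X}\}$ is a deterministic function of $\bm{X}$ alone, the conditional law of $\bm{X}$ given $e$ does not depend on the stratum, so the outer expectation is unchanged if I replace the event $U=s\bar{s}$ by $U=\bar{s}\bar{s}$. A final use of the tower property then collapses the nested conditional expectation to $E\{Y(0)\mid U=\bar{s}\bar{s},e\}$, which rearranges to \eqref{eq::sensi-sm-equiv}. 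I expect this stratum-swap to be the only nontrivial step; everything else is routine bookkeeping with iterated expectations, and it is precisely the balancing property of the principal score that makes the swap legitimate.
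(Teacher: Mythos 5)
Your proposal is correct and follows essentially the same route as the paper's proof: apply the Law of Iterated Expectation, substitute the defining relation $E\{Y(0)\mid U=s\bar{s},\bm{X}\}=\varepsilon\,E\{Y(0)\mid U=\bar{s}\bar{s},\bm{X}\}$ inside the conditional expectation, and then use Lemma~\ref{lemma::SM-balance} ($\bm{X}\ind U\mid e_u(\bm{X})$) to swap the outer conditioning stratum from $s\bar{s}$ to $\bar{s}\bar{s}$ before collapsing by the tower property. The paper's argument is the same calculation written out as two displayed chains of equalities, so your stratum-swap step is exactly the step the authors justify by citing Lemma~\ref{lemma::SM-balance}.
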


\begin{proof}[Proof of Lemma \ref{lemma::SM-sensitivity-lemma}]
We have
\begin{eqnarray*}
E\{ Y(0)\mid U=s\bar{s} , e_u(\bm{X}) \}  &=& E[   E\{ Y(0)\mid U=s\bar{s} , e_u(\bm{X}) , \bm{X} \}  \mid U=s\bar{s} , e_u(\bm{X}) ] \\
&=& E[   E\{ Y(0)\mid U=s\bar{s} ,  \bm{X} \}  \mid U=s\bar{s} , e_u(\bm{X}) ]   \\
&=&\varepsilon  E[   E\{ Y(0)\mid U=\bar{s}\bar{s} ,  \bm{X} \}  \mid U=s\bar{s} , e_u(\bm{X}) ] \\
&=& \varepsilon  E[   E\{ Y(0)\mid U=\bar{s}\bar{s} ,  \bm{X} \}  \mid U=\bar{s}\bar{s} , e_u(\bm{X}) ] ,
\end{eqnarray*}
where the last two lines follow from the definition of $\varepsilon$ and Lemma \ref{lemma::SM-balance}. We also have
\begin{eqnarray*}
E\{ Y(0)\mid U=\bar{s}\bar{s} , e_u(\bm{X}) \}  &=& E[   E\{ Y(0)\mid U=\bar{s}\bar{s} , e_u(\bm{X}) , \bm{X}  \}  \mid U=\bar{s}\bar{s} , e_u(\bm{X}) ]\\
&=&E[   E\{ Y(0)\mid U=\bar{s}\bar{s} ,  \bm{X} \}  \mid U=\bar{s}\bar{s} , e_u(\bm{X}) ].
\end{eqnarray*}
Therefore, we can represent the sensitivity parameter $\varepsilon$ as in formula (\ref{eq::sensi-sm-equiv}).
\end{proof}

\begin{proof}[Proof of Proposition 3]
We prove only the conclusion about $ACE_{s\bar{s}}$; conclusions for the others are analogous. We first observe that
\begin{eqnarray*}
&&E\{ Y(0)\mid e_{s\bar{s}}(\bm{X}) \} \\
&=& E\{ Y(0)\mid e_{s\bar{s}}(\bm{X}) , U = s\bar{s}\}  \Pr\{ U=s\bar{s} \mid  e_{s\bar{s}}(\bm{X})\} + E\{ Y(0)\mid e_{s\bar{s}}(\bm{X}) , U = \bar{s}\bar{s}\}  \Pr\{ U=\bar{s}\bar{s} \mid  e_{s\bar{s}}(\bm{X})\} \\
&=& E\{ Y(0)\mid e_{s\bar{s}}(\bm{X}) , U = s\bar{s}\}  e_{s\bar{s}}(\bm{X})+ E\{ Y(0)\mid e_{s\bar{s}}(\bm{X}) , U = \bar{s}\bar{s}\}  e_{\bar{s}\bar{s}}(\bm{X}) ,
\end{eqnarray*}
according to the proof of Lemma \ref{lemma::SM-balance}. Further, Lemma \ref{lemma::SM-sensitivity-lemma} reduces the above result to
$$
E\{ Y(0)\mid e_{s\bar{s}}(\bm{X}) \}  =  \left\{  e_{s\bar{s}}(\bm{X}) + \frac{ e_{\bar{s}\bar{s}}(\bm{X})} { \varepsilon} \right\}
E\{ Y(0)\mid e_{s\bar{s}}(\bm{X}), U=s\bar{s} \}  ,
$$
which further gives
$$
E\{ Y(0)\mid e_{s\bar{s}}(\bm{X}), U=s\bar{s} \}
=
 \left\{  e_{s\bar{s}}(\bm{X}) + \frac{ e_{\bar{s}\bar{s}}(\bm{X}) } { \varepsilon} \right\} ^{-1}  E\{ Y(0)\mid e_{s\bar{s}}(\bm{X}) \}
 =
  \frac{\varepsilon }{\varepsilon e_{s\bar{s}}(\bm{X}) +e_{\bar{s}\bar{s}}(\bm{X}) }  E\{ Y(0)\mid e_{s\bar{s}}(\bm{X}) \} .
$$
Therefore, on the one hand we have
\begin{eqnarray*}
E\{ Y(0)\mid U=s\bar{s} \}
=  E[ E\{ Y(0)\mid U=s\bar{s}, e_{s\bar{s}}(\bm{X}) \}   \mid U=s\bar{s} ] 
= E\left[   \frac{\varepsilon }{\varepsilon e_{s\bar{s}}(\bm{X}) +e_{\bar{s}\bar{s}}(\bm{X}) }  E\{ Y(0)\mid e_{s\bar{s}}(\bm{X}) \}
\mid U=s\bar{s} \right].
\end{eqnarray*}
On the other hand, the weighted mean can be represented as
$$
E\{  w_{s\bar{s}}^\varepsilon (\bm{X}) Y\mid Z=0 \} / \pi_{s\bar{s}}
= E\left[   \frac{  e_{s\bar{s}}(\bm{X}) }{\pi_{s\bar{s}}}     \frac{\varepsilon}{\varepsilon e_{s\bar{s}}(\bm{X}) +e_{\bar{s}\bar{s}}(\bm{X}) }
E\{ Y(0) \mid e_{s\bar{s}}(\bm{X}) \} \right].
$$
According to the proof of Proposition 1, $e_{s\bar{s}}(\bm{X}) / \pi_{s\bar{s}}=f\{  e_u(\bm{X}) \mid U=u\}   /   f\{  e_u(\bm{X}) \}$ is the importance weight, implying that $E\{ Y(0)\mid U=s\bar{s} \} = E\{  w_{s\bar{s}}^\varepsilon (\bm{X}) Y\mid Z=0 \} / \pi_{s\bar{s}} .$
\end{proof}

\section{Proofs of the Propositions Under Monotonicity}
\label{sec::mono}

To prove Proposition 2, we introduce some lemmas, which rely on the following definitions. 
\begin{eqnarray*}
e_{1, s\bar{s}}(\bm{X}) = \frac{ e_{s\bar{s}}(\bm{X}) }{    e_{s\bar{s}}(\bm{X}) + e_{ss}(\bm{X})   } , 	&\quad&
 e_{1,ss}(\bm{X}) = \frac{  e_{ss}(\bm{X}) }{    e_{s\bar{s}}(\bm{X}) + e_{ss}(\bm{X})  } , \\
 e_{0, s\bar{s}} (\bm{X}) = \frac{  e_{s\bar{s}} (\bm{X}) }{    e_{s\bar{s}}(\bm{X}) + e_{\bar{s}\bar{s}}(\bm{X}) } , &\quad&
e_{0, \bar{s}\bar{s}} (\bm{X}) = \frac{  e_{ \bar{s}\bar{s}} (\bm{X}) }{    e_{s\bar{s}}(\bm{X}) + e_{\bar{s}\bar{s}}(\bm{X})  }.
\end{eqnarray*}

\begin{lemma}
\label{lemma::M-principal-score-zs}
Under Monotonicity, $\Pr(U=u  \mid Z=1,S=1,\bm{X}) = e_{1,u}(\bm{X})$ for $u=s\bar{s}$ and $ss$, and $\Pr(U=u\mid Z=0, S=0, \bm{X}) = e_{0,u}(\bm{X})$ for $u=s\bar{s}$ and $\bar{s}\bar{s}$.
\end{lemma}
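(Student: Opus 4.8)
The plan is to prove all four identities by a direct application of Bayes' rule within each observed $(Z,S)$ cell, using Randomization to remove the conditioning on $Z$ and Monotonicity to enumerate which strata populate each cell. The four claims follow the same template, so establishing the first one essentially settles the rest by symmetry.

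First I would extract the consequence of Assumption \ref{assume::randomization} that is needed: since $Z \ind \{S(1), S(0), Y(1), Y(0), \bm{X}\}$ jointly, joint independence implies the conditional independence $(S(1),S(0)) \ind Z \mid \bm{X}$, and therefore $\Pr(U=u \mid Z=z, \bm{X}) = \Pr(U=u \mid \bm{X}) = e_u(\bm{X})$ for every stratum $u$ and every $z$, because $U$ is a deterministic function of $(S(1),S(0))$. Next I would invoke Monotonicity to discard the defier stratum $\bar{s}s$, leaving only $ss, s\bar{s}, \bar{s}\bar{s}$, and to read off the cell structure: within the arm $Z=1$ we observe $S=S(1)$, so $S=1$ exactly for $U \in \{ss, s\bar{s}\}$ and $S=0$ exactly for $U=\bar{s}\bar{s}$; within $Z=0$ we observe $S=S(0)$, so $S=1$ exactly for $U=ss$ and $S=0$ exactly for $U \in \{s\bar{s}, \bar{s}\bar{s}\}$.

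With these two facts the computation is immediate. Because $U=s\bar{s}$ forces $S(1)=1$, the event $\{U=s\bar{s}, S=1\}$ coincides with $\{U=s\bar{s}\}$ given $Z=1$, so Bayes' rule yields
\[
\Pr(U=s\bar{s} \mid Z=1, S=1, \bm{X}) = \frac{\Pr(U=s\bar{s} \mid Z=1, \bm{X})}{\Pr(S=1 \mid Z=1, \bm{X})} = \frac{e_{s\bar{s}}(\bm{X})}{p_1(\bm{X})}.
\]
The denominator decomposes over the strata compatible with $\{Z=1,S=1\}$, giving $p_1(\bm{X}) = e_{ss}(\bm{X}) + e_{s\bar{s}}(\bm{X})$, which is precisely the normalizer defining $e_{1,s\bar{s}}(\bm{X})$. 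The identity for $ss$ follows verbatim with $e_{s\bar{s}}$ replaced by $e_{ss}$ in the numerator, and the two control-arm identities follow identically after replacing $\{Z=1,S=1\}$ by $\{Z=0,S=0\}$ and using $\Pr(S=0 \mid Z=0, \bm{X}) = e_{s\bar{s}}(\bm{X}) + e_{\bar{s}\bar{s}}(\bm{X})$.

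I do not expect a serious obstacle here; the one point that deserves care is the first step, where one must check that \emph{joint} randomization of $Z$ against $\{S(1),S(0),\bm{X}\}$ delivers the conditional independence $U \ind Z \mid \bm{X}$ rather than merely marginal independence, so that $\Pr(U=u\mid Z=z,\bm{X})$ genuinely reduces to $e_u(\bm{X})$. Everything after that is bookkeeping dictated by the Monotonicity constraint on which strata fall into each observed cell.
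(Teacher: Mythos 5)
Your proposal is correct and follows essentially the same route as the paper's proof: Bayes' rule within the $(Z,S)$ cell, Monotonicity to enumerate which strata populate that cell, and Randomization to reduce $\Pr(U=u\mid Z=z,\bm{X})$ to $e_u(\bm{X})$. The only cosmetic difference is that you first extract the conditional independence $U \ind Z \mid \bm{X}$ and then condition on the arm, whereas the paper manipulates the joint probabilities $\Pr(Z=1, U=u \mid \bm{X})$ and invokes $Z \ind (U,\bm{X})$ at the last step; the two computations are identical in substance.
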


\begin{proof}[Proof of Lemma \ref{lemma::M-principal-score-zs}]
Without essential loss of generality, we show only the case with $u=s\bar{s}$ under the treatment, and other cases are analogous.
We have
\begin{eqnarray*}
\Pr(U=s\bar{s}  \mid Z=1,S=1,\bm{X})
&=& \frac{ \Pr(Z=1,S=1, U=s\bar{s} \mid \bm{X})  }{  \Pr(Z=1,S=1\mid \bm{X}) } \\
&=&  \frac{ \Pr(Z=1, U=s\bar{s} \mid \bm{X})  }{  \Pr(Z=1, U=s\bar{s} \mid \bm{X}) + \Pr(Z=1, U=ss \mid \bm{X}) } .
\end{eqnarray*}
Randomization implies $Z\ind (U,\bm{X})$, and therefore
$$
\Pr(U=s\bar{s}  \mid Z=1,S=1,\bm{X})
= \frac{ \Pr(  U=s\bar{s} \mid \bm{X})  }{  \Pr(  U=s\bar{s} \mid \bm{X}) + \Pr(  U=ss \mid \bm{X}) }
= \frac{  e_{s\bar{s}}(\bm{X}) }{    e_{s\bar{s}}(\bm{X}) + e_{ss}(\bm{X}) } = e_{1, s\bar{s}}(\bm{X}) .
$$
\end{proof}

\begin{lemma}
\label{lemma::M-balance}
Under Monotonicity,
$\bm{X}\ind U \mid  \{ Z=1, S=1, e_{1, u}(\bm{X})\}$ for $u=s\bar{s}$ and $ss$, and
$\bm{X}\ind U \mid  \{ Z=0, S=0, e_{0, u}(\bm{X})\}$ for $u=s\bar{s}$ and $\bar{s}\bar{s}$.
\end{lemma}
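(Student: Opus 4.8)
The plan is to mirror the propensity-score balancing argument used for Lemma \ref{lemma::SM-balance}, but now carried out \emph{within} each of the two subgroups defined by $(Z,S)$, using Lemma \ref{lemma::M-principal-score-zs} as the key input. By symmetry it suffices to treat the case $u=s\bar{s}$ within the treatment group $(Z=1,S=1)$; the remaining cases follow identically.

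First I would record the trivial direction. Since $e_{1,s\bar{s}}(\bm{X})$ is a deterministic function of $\bm{X}$, conditioning on it adds nothing once we condition on $\bm{X}$, so Lemma \ref{lemma::M-principal-score-zs} gives
\[
\Pr\{U=s\bar{s}\mid Z=1,S=1,\bm{X}, e_{1,s\bar{s}}(\bm{X})\} = \Pr(U=s\bar{s}\mid Z=1,S=1,\bm{X}) = e_{1,s\bar{s}}(\bm{X}).
\]
Next I would compute the coarser conditional probability by the law of iterated expectations, averaging the display above over $\bm{X}$ within the stratum fixed by $\{Z=1,S=1,e_{1,s\bar{s}}(\bm{X})\}$:
\[
\Pr\{U=s\bar{s}\mid Z=1,S=1,e_{1,s\bar{s}}(\bm{X})\} = E[\, e_{1,s\bar{s}}(\bm{X}) \mid Z=1,S=1,e_{1,s\bar{s}}(\bm{X})\,] = e_{1,s\bar{s}}(\bm{X}),
\]
the last equality holding because the conditioning event already fixes $e_{1,s\bar{s}}(\bm{X})$. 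Since both displayed probabilities equal $e_{1,s\bar{s}}(\bm{X})$, adding $\bm{X}$ to the conditioning set leaves the conditional law of $U$ unchanged, which is exactly $\bm{X}\ind U\mid\{Z=1,S=1,e_{1,s\bar{s}}(\bm{X})\}$.

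Finally I would extend to the companion strata. Under Monotonicity the group $(Z=1,S=1)$ consists only of units with $U\in\{s\bar{s},ss\}$, so $e_{1,s\bar{s}}(\bm{X})+e_{1,ss}(\bm{X})=1$ and conditioning on $e_{1,s\bar{s}}(\bm{X})$ is the same as conditioning on $e_{1,ss}(\bm{X})$; hence the $u=ss$ statement is immediate. The two control-group statements follow by the identical argument applied to $(Z=0,S=0)$, where $U\in\{s\bar{s},\bar{s}\bar{s}\}$ and $e_{0,s\bar{s}}(\bm{X})+e_{0,\bar{s}\bar{s}}(\bm{X})=1$. I do not anticipate a genuine obstacle here: the only point requiring care is the two-valued support of $U$ within each $(Z,S)$ cell, which is precisely what Monotonicity guarantees and what lets a single restricted score $e_{z,\cdot}(\bm{X})$ play the role of a binary propensity score; everything else is a routine iterated-expectation computation that leans entirely on Lemma \ref{lemma::M-principal-score-zs}.
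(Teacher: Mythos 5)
Your proposal is correct and follows essentially the same route as the paper's proof: both rest on Lemma \ref{lemma::M-principal-score-zs} to show $\Pr\{U=s\bar{s}\mid \bm{X}, Z=1,S=1, e_{1,s\bar{s}}(\bm{X})\} = e_{1,s\bar{s}}(\bm{X})$, then use the law of iterated expectations to show the coarser probability $\Pr\{U=s\bar{s}\mid Z=1,S=1, e_{1,s\bar{s}}(\bm{X})\}$ equals the same quantity, and conclude the conditional independence from the equality of the two (with $U$ binary within each $(Z,S)$ cell under Monotonicity). Your explicit remark that $e_{1,s\bar{s}}(\bm{X})+e_{1,ss}(\bm{X})=1$ makes conditioning on either score equivalent is exactly how the paper disposes of the companion strata in the Strong Monotonicity analogue (Lemma \ref{lemma::SM-balance}), so there is no substantive difference.
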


\begin{proof}[Proof of Lemma \ref{lemma::M-balance}]
We prove only the case with $u=s\bar{s}$ under the treatment.
Conditional on $(Z=1,S=1)$, the principal strata can take only two values $s\bar{s}$ and $ss$, and the proof here follows similar arguments as the proof of Lemma \ref{lemma::SM-balance}.

First,
$
\Pr\{ U= s\bar{s} \mid  \bm{X}, Z=1,S=1, e_{1,s\bar{s}}(\bm{X}) \}  = \Pr( U= s\bar{s} \mid  \bm{X}, Z=1,S=1 ) = e_{1, s\bar{s}}(\bm{X}).
$
Second, by LIE we have
\begin{eqnarray*}
\Pr\{   U=s\bar{s} \mid Z=1, S=1, e_{1,s\bar{s}} (\bm{X}) \}  
&=& E[  \Pr\{   U=s\bar{s} \mid Z=1, S=1, e_{1,s\bar{s}} (\bm{X}) , \bm{X} \}   \mid Z=1, S=1, e_{1,s\bar{s}} (\bm{X})   ] \\
&=& E\{  e_{1,s\bar{s}} (\bm{X})\mid Z=1,S=1, e_{1,s\bar{s}} (\bm{X})   \} =  e_{1,s\bar{s}} (\bm{X}) .
\end{eqnarray*}
Therefore,
$
\Pr\{ U= s\bar{s} \mid  \bm{X}, Z=1,S=1, e_{1,s\bar{s}}(\bm{X}) \} =  \Pr\{   U=s\bar{s} \mid Z=1, S=1, e_{1,s\bar{s}} (\bm{X}) \},
$
and the conditional independence $\bm{X} \ind U \mid  \{ Z=1, S=1, e_{1, u}(\bm{X})\}$ follows.
\end{proof}

\begin{lemma}
\label{lemma::weaker-than-GPI}
Under Monotonicity, GPI implies that 
$$
Y(1)\ind U\mid (Z=1,S=1,\bm{X}),\quad Y(0)\ind U \mid  (Z=0,S=0, \bm{X}) .
$$
\end{lemma}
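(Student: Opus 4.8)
The plan is to show that, inside each conditioning event, the conditional law of the relevant potential outcome is constant over the strata that are compatible with that event. I focus on the first claim, $Y(1)\ind U\mid(Z=1,S=1,\bm{X})$; the second is entirely symmetric, interchanging $\{Z=1,S=1\}$ with $\{Z=0,S=0\}$ and $Y(1)$ with $Y(0)$.

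First I would use Monotonicity to pin down the support of $U$ inside the event. Under Monotonicity the defier stratum $\bar{s}s$ is ruled out, so only $ss,s\bar{s},\bar{s}\bar{s}$ occur. Since $S=S(1)$ when $Z=1$, the event $(Z=1,S=1)$ forces $S(1)=1$, which is compatible only with $U\in\{ss,s\bar{s}\}$ (the stratum $\bar{s}\bar{s}$ has $S(1)=0$). Hence it suffices to verify that $f\{Y(1)\mid U=u,Z=1,S=1,\bm{X}\}$ takes the same value for $u=ss$ and $u=s\bar{s}$.

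The central step is to peel off the conditioning one event at a time. For $u\in\{ss,s\bar{s}\}$ the equality $S(1)=1$ is already implied by $U=u$, so conditioning on $S=1$ is redundant once $U$ and $Z=1$ are fixed, giving $f\{Y(1)\mid U=u,Z=1,S=1,\bm{X}\}=f\{Y(1)\mid U=u,Z=1,\bm{X}\}$. Next, because $U=\{S(1),S(0)\}$ is a function of the potential intermediate values, Randomization (Assumption~\ref{assume::randomization}) yields $Z\ind\{Y(1),U,\bm{X}\}$, so I may drop $Z=1$ and obtain $f\{Y(1)\mid U=u,Z=1,\bm{X}\}=f\{Y(1)\mid U=u,\bm{X}\}$. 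Finally GPI (Assumption~\ref{assume::GPI}) gives $Y(1)\ind U\mid\bm{X}$, so this last density equals $f\{Y(1)\mid\bm{X}\}$, which is free of $u$. Chaining the three equalities shows the conditional law of $Y(1)$ given $U$ is constant over the admissible strata, i.e.\ $Y(1)\ind U\mid(Z=1,S=1,\bm{X})$.

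I expect the only delicate point to be the bookkeeping in the first two reductions: justifying that conditioning on $S$ carries no information given $(U,Z)$, which rests on the identity $S=S(Z)$ together with the fact that each admissible stratum fixes $S(Z)$; and justifying that Randomization may be invoked with $U$ in the conditioning set, which rests on $U$ being a deterministic function of $(S(1),S(0))$, one of the variables $Z$ is independent of. Once these reductions are made precise, the application of GPI is immediate, and the $Y(0)$ claim follows by the same argument with $\{Z=0,S=0\}$ in place of $\{Z=1,S=1\}$ and admissible strata $\{s\bar{s},\bar{s}\bar{s}\}$.
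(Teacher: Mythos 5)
Your proof is correct and uses the same essential ingredients as the paper's, just run in the opposite direction: the paper starts from GPI and enlarges the conditioning set via conditional-independence axioms (adjoining the indicator $\bm{1}_{(U=s\bar{s} \text{ or } ss)}$, which is a function of $U$, then adjoining $Z$ by Randomization, and finally noting that $\{Z=1, \bm{1}_{(U=s\bar{s} \text{ or } ss)}=1\}$ is exactly $\{Z=1,S=1\}$), whereas you start from the full conditioning event and peel it away (dropping $S=1$ as redundant given $U=u$ and $Z=1$, dropping $Z$ by Randomization, then applying GPI). Both arguments hinge on the identical observation that on $\{Z=1\}$ the event $\{S=1\}$ coincides with $\{S(1)=1\}$, a function of $U$, so this is essentially the paper's proof.
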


\begin{proof}[Proof of Lemma \ref{lemma::weaker-than-GPI}]
We prove only the first part; the second part is analogous.

GPI implies $Y(1)\ind U\mid \bm{X}$, and therefore $Y(1)\ind \{ U, \bm{1}_{ (U=s\bar{s} \text{ or } ss) } \} \mid  \bm{X}$.
Furthermore, we have $Y(1)\ind U\mid   \{ \bm{1}_{ (U=s\bar{s} \text{ or } ss) } , \bm{X} \}$. Because Randomization ensures that $Z$ is independent of all the potential outcomes and covariates, we have $Y(1)\ind U\mid   \{ Z, \bm{1}_{ (U=s\bar{s} \text{ or } ss) }, \bm{X} \}$, which further implies
$Y(1)\ind U\mid   \{ Z=1, \bm{1}_{ (U=s\bar{s} \text{ or } ss) }=1, \bm{X} \}$ or equivalently $Y(1)\ind U\mid  (Z=1,S=1,\bm{X})$.
\end{proof}

\begin{lemma}
\label{lemma::M-sufficient}
Under Monotonicity and GPI, we have
$Y(1)\ind U\mid \{  Z=1,S=1, e_{1,u}(\bm{X}) \}$ for $u=s\bar{s}$ and $ss$, and
$Y(0)\ind U\mid \{  Z=0,S=0, e_{0, u}(\bm{X}) \}$ for $u=s\bar{s}$ and $\bar{s}\bar{s}$.
\end{lemma}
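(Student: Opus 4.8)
The plan is to transcribe the propensity-score balancing argument from the proof of Lemma~\ref{lemma::SM-sufficiency}, now carried out inside the observed subgroup $(Z=1,S=1)$, where under Monotonicity $U$ can take only the two values $s\bar{s}$ and $ss$. Because $U$ is binary on this subgroup, to establish $Y(1)\ind U\mid \{Z=1,S=1,e_{1,s\bar{s}}(\bm{X})\}$ it is enough to show that $\Pr\{U=s\bar{s}\mid Y(1),Z=1,S=1,e_{1,s\bar{s}}(\bm{X})\}$ does not depend on $Y(1)$, i.e. that it equals $\Pr\{U=s\bar{s}\mid Z=1,S=1,e_{1,s\bar{s}}(\bm{X})\}$. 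The remaining three conditional-independence claims then follow by symmetry: relabeling the strata handles $u=ss$, and exchanging treatment for control (replacing $(Z=1,S=1)$ by $(Z=0,S=0)$ and $e_{1,u}$ by $e_{0,u}$, with the second part of Lemma~\ref{lemma::weaker-than-GPI}) handles the $Y(0)$ statements.

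First I would reintroduce the covariate $\bm{X}$ via the Law of Iterated Expectation, writing
\[
\Pr\{U=s\bar{s}\mid Y(1),Z=1,S=1,e_{1,s\bar{s}}(\bm{X})\}
= E[\,\Pr\{U=s\bar{s}\mid Y(1),Z=1,S=1,\bm{X}\} \mid Y(1),Z=1,S=1,e_{1,s\bar{s}}(\bm{X})\,],
\]
which is legitimate since $e_{1,s\bar{s}}(\bm{X})$ is a function of $\bm{X}$. The key reduction is to eliminate $Y(1)$ from the inner probability: by Lemma~\ref{lemma::weaker-than-GPI}, Monotonicity and GPI give $Y(1)\ind U\mid (Z=1,S=1,\bm{X})$, so the inner probability equals $\Pr\{U=s\bar{s}\mid Z=1,S=1,\bm{X}\}=e_{1,s\bar{s}}(\bm{X})$ by Lemma~\ref{lemma::M-principal-score-zs}. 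The outer expectation of the constant $e_{1,s\bar{s}}(\bm{X})$ then collapses to $e_{1,s\bar{s}}(\bm{X})$ itself. For the other side, $\Pr\{U=s\bar{s}\mid Z=1,S=1,e_{1,s\bar{s}}(\bm{X})\}=e_{1,s\bar{s}}(\bm{X})$ is exactly the balancing identity established in the proof of Lemma~\ref{lemma::M-balance} (apply the Law of Iterated Expectation with $\bm{X}$ and invoke Lemma~\ref{lemma::M-principal-score-zs}). Matching the two sides yields the stated conditional independence.

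I do not expect a genuine obstacle, since this is a direct adaptation of the Strong-Monotonicity case with PI replaced by its GPI-consequence from Lemma~\ref{lemma::weaker-than-GPI} and the unconditional score $e_u(\bm{X})$ replaced by the within-subgroup score $e_{1,u}(\bm{X})$. The one point that requires care is bookkeeping: every conditioning event must retain the subgroup indicator $(Z=1,S=1)$ throughout, because it is precisely this restriction that forces $U$ to be binary and makes Lemmas~\ref{lemma::M-principal-score-zs} and~\ref{lemma::M-balance} applicable verbatim. Once the argument is written cleanly for $\{Z=1,S=1,e_{1,s\bar{s}}\}$, the treatment/control symmetry delivers the companion $Y(0)$ statements on $(Z=0,S=0)$ with no new ideas.
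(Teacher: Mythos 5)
Your proposal is correct and is essentially the paper's own argument: the paper proves this lemma by stating that, given Lemmas \ref{lemma::M-principal-score-zs}--\ref{lemma::weaker-than-GPI}, the proof follows the same logic as Lemma \ref{lemma::SM-sufficiency}, which is exactly the LIE-plus-balancing computation you have written out, with Lemma \ref{lemma::weaker-than-GPI} eliminating $Y(1)$ from the inner probability and the identity $\Pr\{U=s\bar{s}\mid Z=1,S=1,e_{1,s\bar{s}}(\bm{X})\}=e_{1,s\bar{s}}(\bm{X})$ supplied by the proof of Lemma \ref{lemma::M-balance}. Your bookkeeping point about retaining the conditioning event $(Z=1,S=1)$ throughout, and the treatment/control symmetry for the $Y(0)$ claims, match the paper's intended argument exactly.
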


\begin{proof}[Proof of Lemma \ref{lemma::M-sufficient}]
With Lemmas \ref{lemma::M-principal-score-zs}--\ref{lemma::weaker-than-GPI}, the proof follows from the same logic as that of Lemma \ref{lemma::SM-sufficiency}.
\end{proof}

\begin{proof}[Proof of Proposition 2]
It is straightforward to obtain
$$
E\{ Y(1)\mid U=\bar{s}\bar{s} \} = E(Y\mid Z=1,S=0), \quad
E\{ Y(0)\mid U=ss\} = E(Y\mid Z=0, S=1).
$$
Without loss of generality, we show only
$
E\{ Y(1)\mid U=s\bar{s} \} = E\{   w_{1,s\bar{s}} (\bm{X}) Y\mid Z=1,S=1\}.
$

First, by Randomization we have
$$
E\{ Y(1)\mid U=s\bar{s} \} =   E\{  Y(1) \mid Z=1, U=s\bar{s} \}  = E\{  Y(1)\mid Z=1, S=1, U=s\bar{s} \} .
$$
By LIE and Lemma \ref{lemma::M-sufficient}, we have
\begin{eqnarray*}
E\{ Y(1)\mid U=s\bar{s} \}
&=&
E[    E\{  Y(1)\mid Z=1, S=1, U=s\bar{s} , e_{1,s\bar{s}} (\bm{X}) \}  \mid Z=1,S=1,U=s\bar{s}    ]     \\
&=&
E[    E\{  Y(1)\mid Z=1, S=1 , e_{1,s\bar{s}} (\bm{X}) \}  \mid Z=1,S=1,U=s\bar{s}    ].
\end{eqnarray*}

Second, we have
$$
E\{   w_{1,s\bar{s}} (\bm{X}) Y\mid Z=1,S=1\}  =
E[  w_{1,s\bar{s}} (\bm{X})  E\{   Y(1)\mid Z=1,S=1, e_{1, s\bar{s}} (\bm{X}) \}    \mid Z=1,S=1 ] .
$$
Also we have
$$
f\{  e_{1, s\bar{s}} (\bm{X}) \mid U=s\bar{s}, Z=1, S=1 \}
= \frac{  f\{     e_{1, s\bar{s}} (\bm{X}) \mid Z=1, S=1 \}   \Pr\{  U=s\bar{s} \mid Z=1,S=1,  e_{1, s\bar{s}} (\bm{X}) \}   }{  \Pr(U=s\bar{s}  \mid Z=1,S=1 )  }.
$$
From the proof of Lemma \ref{lemma::M-balance}, we have $\Pr\{  U=s\bar{s} \mid Z=1,S=1,  e_{1, s\bar{s}} (\bm{X}) \}  = e_{1, s\bar{s}} (\bm{X})$. Simple algebra gives $\Pr(U=s\bar{s}  \mid Z=1,S=1 ) = \pi_{s\bar{s}} / (  \pi_{s\bar{s}} + \pi_{ss})$. Consequently, the weight satisfies
$$
 w_{1,s\bar{s}} (\bm{X}) = \frac{e_{1, s\bar{s}} (\bm{X})}{ \pi_{s\bar{s}} / (  \pi_{s\bar{s}} + \pi_{ss})}
 =\frac{ \Pr\{  U=s\bar{s} \mid Z=1,S=1,  e_{1, s\bar{s}} (\bm{X}) \}   }{  \Pr(U=s\bar{s}  \mid Z=1,S=1 )  }
= \frac{ f\{  e_{1, s\bar{s}} (\bm{X}) \mid U=s\bar{s}, Z=1, S=1 \}   }{ f\{     e_{1, s\bar{s}} (\bm{X}) \mid Z=1, S=1 \}  },
$$
which implies $E\{ Y(1)\mid U=s\bar{s} \} = E\{   w_{1,s\bar{s}} (\bm{X}) Y\mid Z=1,S=1\}$ according to Lemma \ref{lemma::importance-sampling}.
\end{proof}

\begin{proof}
[Proof of Corollary 2]
The theorem follows from Proposition 2 and zero PCEs on $h(\bm{X}).$
\end{proof}

In order to prove Proposition 4, we need an additional lemma.

\begin{lemma}
\label{lemma::M-sensitivity-lemma}
We can also represent the sensitivity parameters as
$$
\varepsilon_1 =   {  E\{ Y(1)\mid U=s\bar{s}, e_{1, u} (\bm{X}) \} \over  E\{  Y(1) \mid U=\bar{s}\bar{s}  , e_{1, u }(\bm{X}) \}  }
\quad (u=s\bar{s}, ss);
\qquad
\varepsilon_0 = { E\{  Y(0) \mid U= s\bar{s}, e_{0, u} (\bm{X}) \} \over  E\{  Y(0) \mid U=\bar{s}\bar{s} , e_{0, u}(\bm{X}) \}  }
\quad (u=s\bar{s}, \bar{s}\bar{s}).
$$
\end{lemma}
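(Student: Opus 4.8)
The plan is to mirror the argument already used for Lemma~\ref{lemma::SM-sensitivity-lemma} under Strong Monotonicity, replacing the marginal principal score by the trimmed scores $e_{1,u}(\bm{X})$ and $e_{0,u}(\bm{X})$ and carrying the extra conditioning on $(Z=1,S=1)$ or $(Z=0,S=0)$ through the calculation. I would prove only the representation of $\varepsilon_1$; that of $\varepsilon_0$ is symmetric, using $(Z=0,S=0)$, the scores $e_{0,u}(\bm{X})$, and the strata $\{s\bar{s},\bar{s}\bar{s}\}$. At the outset I would record that $e_{1,s\bar{s}}(\bm{X})+e_{1,ss}(\bm{X})=1$, so the two trimmed scores generate the same $\sigma$-algebra; conditioning on either therefore gives the same quantity, which is why the identity is asserted for both $u=s\bar{s}$ and $u=ss$.

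First I would put the numerator $E\{Y(1)\mid U=s\bar{s}, e_{1,u}(\bm{X})\}$ into a form to which the balance lemma applies. Since $U=s\bar{s}$ forces $S(1)=1$, Randomization (Assumption~\ref{assume::randomization}) lets me insert $Z=1$ and then $S=1$ into the conditioning set without altering the mean, producing $E\{Y(1)\mid U=s\bar{s}, Z=1, S=1, e_{1,u}(\bm{X})\}$. Applying the Law of Iterated Expectation to condition further on $\bm{X}$ and using that $e_{1,u}(\bm{X})$ is a function of $\bm{X}$ while $Z,S$ are redundant given $(U,\bm{X})$, the inner expectation collapses to $E\{Y(1)\mid U=s\bar{s},\bm{X}\}$, which by the definition of $\varepsilon_1$ equals $\varepsilon_1\,E\{Y(1)\mid U=ss,\bm{X}\}$. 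This gives
\begin{equation*}
E\{Y(1)\mid U=s\bar{s}, e_{1,u}(\bm{X})\} = \varepsilon_1\, E\big[\, E\{Y(1)\mid U=ss,\bm{X}\} \ \big|\ U=s\bar{s}, Z=1, S=1, e_{1,u}(\bm{X})\,\big].
\end{equation*}

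The crux is then to swap the outer conditioning stratum from $s\bar{s}$ to $ss$. Here I would invoke Lemma~\ref{lemma::M-balance}, which gives $\bm{X}\ind U\mid\{Z=1,S=1,e_{1,u}(\bm{X})\}$: conditional on the trimmed score and $(Z=1,S=1)$, the law of $\bm{X}$---and hence of the regression surface $E\{Y(1)\mid U=ss,\bm{X}\}$, a function of $\bm{X}$ alone---does not depend on $U$. The conditional expectation above is therefore unchanged if $U=s\bar{s}$ is replaced by $U=ss$ in the outer conditioning. Folding $\bm{X}$ back in by the Law of Iterated Expectation and dropping the now-redundant $Z=1,S=1$ by Randomization turns the result into $E\{Y(1)\mid U=ss, e_{1,u}(\bm{X})\}$, so dividing yields the claimed ratio.

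I expect this stratum-swap to be the only real obstacle, since it is the single step that uses the balancing structure rather than routine conditioning identities, and it is where the function-of-$\bm{X}$ nature of the regression surface must be used carefully. The remaining bookkeeping---inserting and later removing $(Z=1,S=1)$ via Randomization and the implication $U=s\bar{s}\Rightarrow S(1)=1$, and collapsing the inner expectations via the Law of Iterated Expectation---follows exactly the template already established for Lemmas~\ref{lemma::SM-sensitivity-lemma} and~\ref{lemma::M-sufficient}, so I would keep it terse.
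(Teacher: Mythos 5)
Your proposal is correct and takes essentially the same route as the paper: the paper's own proof is a one-line appeal to Lemma \ref{lemma::M-balance} plus ``the method in the proof of Lemma \ref{lemma::SM-sensitivity-lemma},'' which is exactly the argument you spell out---iterate expectations down to $\bm{X}$, apply the definition of $\varepsilon_1$, swap the conditioning stratum via the balance lemma (this is indeed the only step using the balancing structure), and fold back, with the extra bookkeeping of inserting and removing $(Z=1,S=1)$ justified by Randomization and the fact that $U=s\bar{s}$ or $ss$ forces $S(1)=1$. You also correctly read the denominator of the $\varepsilon_1$ representation as involving $U=ss$ rather than the $U=\bar{s}\bar{s}$ written in the lemma statement, which is a typo in the paper, since under Monotonicity the $(Z=1,S=1)$ cell contains only the strata $s\bar{s}$ and $ss$ and $e_{1,u}(\bm{X})$ is defined only for those two strata.
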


\begin{proof}[Proof of Lemma \ref{lemma::M-sensitivity-lemma}]
It follows from Lemma \ref{lemma::M-balance} and the method in the proof of Lemma \ref{lemma::SM-sensitivity-lemma}.
\end{proof}

\begin{proof}[Proof of Proposition 4]
We prove only that $E\{ Y(1)\mid U=s\bar{s} \} = E\{  w_{1,s\bar{s}}^{\varepsilon_1}(\bm{X}) Y \mid Z=1,S=1  \}$; other conditional expectations of the potential outcomes are analogous.

Randomization and LIE allow us to obtain that
\begin{eqnarray*}
E\{ Y(1)\mid U=s\bar{s} \} &=& E\{  Y(1)\mid Z=1, U=s\bar{s}, S=1\} \\
&=& E[   E\{  Y(1)\mid Z=1, U=s\bar{s}, S=1,  e_{1, s\bar{s}}(\bm{X}) \}  \mid   Z=1, U=s\bar{s}, S=1  ] .
\end{eqnarray*}
From the proofs of Lemmas \ref{lemma::M-balance}--\ref{lemma::M-sensitivity-lemma}, we also have
\begin{eqnarray*}
&&E \{ Y(1)\mid Z=1,S=1,e_{1,s\bar{s}} (\bm{X}) \} \\
&=& E\{  Y(1)\mid Z=1, U=s\bar{s}, S=1,  e_{1, s\bar{s}}(\bm{X}) \}     \Pr\{ U=s\bar{s} \mid Z=1,  S=1,  e_{1, s\bar{s}}(\bm{X})  \} \\
&& + E\{  Y(1)\mid Z=1, U=\bar{s}\bar{s}, S=1,  e_{1, s\bar{s}}(\bm{X}) \}  \Pr\{  U=\bar{s}\bar{s}\mid  Z=1, S=1,  e_{1, s\bar{s}}(\bm{X}) \} \\
&=& E\{  Y(1)\mid Z=1, U=s\bar{s}, S=1,  e_{1, s\bar{s}}(\bm{X}) \}   e_{1, s\bar{s}}(\bm{X})  + E\{  Y(1)\mid Z=1, U=\bar{s}\bar{s}, S=1,  e_{1, s\bar{s}}(\bm{X}) \}  e_{1, \bar{s}\bar{s}}(\bm{X}) \\
& =& \{   e_{1, s\bar{s}}(\bm{X})  + e_{1, \bar{s}\bar{s}}(\bm{X})/\varepsilon_1  \}  E\{  Y(1)\mid Z=1, U=s\bar{s}, S=1,  e_{1, s\bar{s}}(\bm{X}) \}  .
\end{eqnarray*}
Consequently, we obtain that
$$
E\{  Y(1)\mid Z=1, U=s\bar{s}, S=1,  e_{1, s\bar{s}}(\bm{X}) \} = \{   e_{1, s\bar{s}}(\bm{X})  + e_{1, \bar{s}\bar{s}}(\bm{X})/\varepsilon_1  \}^{-1}  E \{ Y(1)\mid Z=1,S=1,e_{1,s\bar{s}} (\bm{X}) \} ,
$$
implying that
\begin{eqnarray*}
E\{ Y(1)\mid U=s\bar{s} \}
= E\left[    \frac{\varepsilon_1}{\varepsilon_1  e_{1,s\bar{s}} (\bm{X}) + e_{1,\bar{s}\bar{s}} (\bm{X})}  E \{ Y(1)\mid Z=1,S=1,e_{1,s\bar{s}} (\bm{X}) \}  \mid   Z=1, U=s\bar{s}, S=1  \right].
\end{eqnarray*}

On the other hand, the weighted mean can be represented as
\begin{eqnarray*}
&&E\{  w_{1,s\bar{s}}^{\varepsilon_1}(\bm{X}) Y \mid Z=1,S=1  \}  \\
&=&   E [ w_{1,s\bar{s}} ^{\varepsilon_1} (\bm{X}) E\{   Y(1) \mid Z=1,S=1, e_{1,s\bar{s}}(\bm{X})  \} \mid  Z=1,S=1  ]  \\
&=&   E \left [   \frac{e_{1, s\bar{s}} (\bm{X})}{ \pi_{s\bar{s}} / (  \pi_{s\bar{s}} + \pi_{ss})}
    \frac{\varepsilon_1}{\varepsilon_1  e_{1,s\bar{s}} (\bm{X}) + e_{1,\bar{s}\bar{s}} (\bm{X})}
     E\{   Y(1) \mid Z=1,S=1, e_{1,s\bar{s}}(\bm{X})  \} \mid  Z=1,S=1  \right ].
\end{eqnarray*}
The proof of Proposition 2 shows that $  e_{1, s\bar{s}} (\bm{X}) / \{ \pi_{s\bar{s}} / (  \pi_{s\bar{s}} + \pi_{ss}) \}   $ is exactly the importance weight $  f\{  e_{1, s\bar{s}} (\bm{X}) \mid U=s\bar{s}, Z=1, S=1 \}    /  f\{     e_{1, s\bar{s}} (\bm{X}) \mid Z=1, S=1 \}  $, and the conclusion follows from Lemma \ref{lemma::importance-sampling} and the above expressions of $E\{ Y(1)\mid U=s\bar{s} \}$ and $E\{  w_{1,s\bar{s}}^{\varepsilon_1}(\bm{X}) Y \mid Z=1,S=1  \}.$
\end{proof}

\section{Proofs of the Results Without Monotonicity}
\label{sec::no-mono}

\begin{proof}[Proof of Proposition 5.]
The observed data impose the following restrictions: 
\begin{eqnarray*}
\left \{ \begin{array} {lll}
\pi_{ss} + \pi_{s\bar{s}} &=& p_1, \\
\pi_{ss} + \pi_{\bar{s}s} &=& p_0, \\
\pi_{ss} + \pi_{\bar{s}s} + \pi_{s\bar{s}} + \pi_{\bar{s}\bar{s}} &=& 1,\\
\pi_{\bar{s}s}  -  \xi \pi_{s\bar{s}} &=& 0.
\end{array}
\right.
\Longrightarrow
\left \{ \begin{array} {lll}
0\leq &\pi_{s\bar{s}}  = (p_1-p_0)/(1-\xi) &\leq 1 ,\\
0\leq &\pi_{ss}  =   p_1-(p_1-p_0)/(1-\xi) &\leq 1,\\
0\leq & \pi_{\bar{s}s} = \xi(p_1-p_0)/(1-\xi) &\leq 1,\\
0\leq & \pi_{\bar{s}\bar{s}}  = (1-p_0) - (p_1-p_0)/(1-\xi) &\leq 1,
\end{array}
\right.
\end{eqnarray*}
which further imply that $0\leq \xi \leq 1- ( p_1-p_0 )  / \min( p_1, 1-p_0) .$
\end{proof}

\begin{proof}[Proof of Proposition 6]
The same reasoning of the proof of Proposition 2 applies here.
\end{proof}

\section{EM Algorithms for Estimating Principal Scores}
\label{sec::principal-scores-computations}

\subsection{With Monotonicity}

Because $U$ takes three values, we can model $\Pr(U\mid\bm{X})$ as a three-level Multinomial Logistic model:
$$
\Pr(U = u\mid  \bm{X} ) = \frac{ \exp(\bm{\theta}_u^\top \bm{X}) }{  \sum_{u'} \exp(\bm{\theta}_{u'}^\top \bm{X})},
\quad (u=s\bar{s}, ss, \bar{s}\bar{s}) 
$$
where $\bm{\theta}_{s\bar{s}} = \bm{0}$ for identification.
Although we cannot fully observe $U$, we can 
and use the EM algorithm to find the MLEs by treating $U$ as missing data.

In the E-step, we first calculate the conditional probabilities of latent strata given the data $( Z_i=1,S_i=1,\bm{X}_i)$ and the parameters $(\bm{\theta}_{s\bar{s}}^{k}, \bm{\theta}_{\bar{s} \bar{s} } ^{k} ) $:
if $(Z_i=1, S_i=1)$, then
$$
\Pr(U_i=s\bar{s}\mid  - ) = \frac{1}{  1+ \exp(   \bm{\theta}_{ss}^{k\top} \bm{X}_i  ) }, \quad
\Pr(U_i=ss\mid -) = \frac{1}{  1+ \exp( -  \bm{\theta}_{ss}^{k\top} \bm{X}_i  ) } ;
$$
if $(Z_i=1, S_i=1)$, then $U_i=\bar{s}\bar{s}$;
if $(Z_i=0, S_i=1)$, then $U_i=ss$;
if $(Z_i=0, S_i=0)$, then
$$
\Pr(U_i = s\bar{s}\mid -) = \frac{1}{ 1+ \exp( \bm{\theta}_{\bar{s}\bar{s}}^{k\top} \bm{X}_i  ) }, \quad
\Pr( U_i =  \bar{s}\bar{s} \mid -) =  \frac{1}{   1+ \exp( -  \bm{\theta}_{\bar{s}\bar{s}}^{k\top} \bm{X}_i  ) }.
$$
We then create weighted samples:
for each $i$ with $(Z_i=1,S_i=1)$, we create two observations $(U_i=s\bar{s}, \bm{X}_i)$ and $(U_i=ss, \bm{X}_i) $ with weights $w_i^k=\Pr(U_i=s\bar{s}\mid  - )$ and $w_i^k=\Pr(U_i=ss\mid  - )$, respectively;
for each $i$ with $(Z_i=1,S_i=0)$, we create one observation $(U_i=\bar{s}\bar{s}, \bm{X}_i)$ with weight $w_i^k=1$;
for each $i$ with $(Z_i=0,S_i=1)$, we create one observation $(U_i = ss, \bm{X}_i)$ with weight $w_i^k=1$;
for each $i$ with $(Z_i=0,S_i=0)$, we create two observations $(U_i = s\bar{s}, \bm{X}_i)$ and $(U_i=\bar{s}\bar{s}, \bm{X}_i)$ with weights $w_i^k = \Pr(U_i = s\bar{s}\mid -) $ and $w_i^k = \Pr( U_i =  ss \mid -)$, respectively.

In the M-step, we fit a Multinomial Logistic model $\Pr(U\mid \bm{X})$ based on the weight samples created above, and update the parameters to be $(\bm{\theta}_{s\bar{s}}^{k+1}, \bm{\theta}_{\bar{s} \bar{s} } ^{k+1} ) $.

\subsection{Without Monotonicity}

We define $V_i = U_i$ if $U_i=ss$ or $\bar{s}\bar{s}$, and $V_i = s\&\bar{s}$ if $U_i=s\bar{s}$ or $\bar{s}s$.
First, we model $\Pr(V\mid \bm{X})$ as a three-level Multinomial Logistic regression:
$$
\Pr(V=v  \mid  \bm{X} ) = \frac{ \exp(\bm{\theta}_v^\top \bm{X}) }{  \sum_{v'} \exp(\bm{\theta}_{v'}^\top \bm{X})},
\quad (v=s\&\bar{s}, ss, \bar{s}\bar{s})
$$
where $\bm{\theta}_{s\&\bar{s}} = \bm{0}$. Second, we partition the category of $V$, $s\&\bar{s}$, into two sub-categories of $U$, $s\bar{s}$ and $\bar{s}s$, with probabilities $\Pr(U=s\bar{s} \mid V=s\&\bar{s} , \bm{X}) = 1/(1+\xi)$ and $\Pr(U=\bar{s}s\mid V=s\&\bar{s}, \bm{X}) = \xi/(1+\xi)$.
We use the EM algorithm to find the MLEs by treating $U$ as missing data.

In the E-step, we first calculate the conditional probabilities of latent strata given the data $( Z_i,S_i,\bm{X}_i)$ and the parameters $(\bm{\theta}_{ss}^{k}, \bm{\theta}_{\bar{s} \bar{s} } ^{k} ) $:
if $(Z_i=1,S_i=1)$, then
$$
\Pr(U_i=s\bar{s}\mid  - ) = \frac{1}{ 1+  (1+\xi) \exp(   \bm{\theta}_{ss}^{k\top} \bm{X}_i  ) }, \quad
\Pr(U_i=ss\mid -) = \frac{1}{ 1+ \exp( -  \bm{\theta}_{ss}^{k\top} \bm{X}_i  ) / (1+\xi) }  ;
$$
if $(Z_i=1,S_i=0)$, then
$$
\Pr(U_i=\bar{s}\bar{s} \mid -) = \frac{1}{ 1 + \xi \exp( -  \bm{\theta}_{\bar{s}\bar{s}}^{k\top} \bm{X}_i )/(1+\xi) } , \quad
\Pr(U_i = \bar{s} s \mid -) = \frac{1}{ 1 + (1+\xi) \exp( \bm{\theta}_{\bar{s}\bar{s}}^{k\top} \bm{X}_i  )/\xi  } ;
$$
if $(Z_i=0, S_i=1)$, then
$$
\Pr(U_i = ss\mid -) = \frac{1}{1+\xi\exp( - \bm{\theta}_{ss}^{k\top} \bm{X}_i)/(1+\xi)}, \quad
\Pr(U_i=\bar{s}s\mid -) = \frac{1}{1+(1+\xi)\exp( \bm{\theta}_{ss}^{k\top} \bm{X}_i )/\xi};
$$
if $(Z_i=0,S_i=0)$, then
$$
\Pr(U_i = s\bar{s}\mid -) = \frac{1}{   1+ (1+\xi) \exp( \bm{\theta}_{\bar{s}\bar{s}}^{k\top} \bm{X}_i  ) }, \quad
\Pr( U_i = \bar{s}\bar{s}  \mid -) =  \frac{1}{ 1+ \exp( -  \bm{\theta}_{\bar{s}\bar{s}}^{k\top} \bm{X}_i  )/(1+\xi) }.
$$
We then create a set of weighted samples:
for each $i$ with $(Z_i=1,S_i=1)$, we create two observations $(V_i=s\&\bar{s}, \bm{X}_i)$ and $(V_i=ss, \bm{X}_i) $ with weights $w_i^k=\Pr(U_i=s\bar{s}\mid  - )$ and $w_i^k=\Pr(U_i=ss\mid  - )$, respectively;
for each $i$ with $(Z_i=1, S_i=0)$, we create two observations $(V_i=\bar{s}\bar{s}, \bm{X}_i)$ and $(V_i=s\&\bar{s}, \bm{X}_i)$ with weights $w_i^k=\Pr(U_i=\bar{s}\bar{s} \mid -) $ and $w_i^k = \Pr(U_i = \bar{s} s \mid -)$;
for each $i$ with $(Z_i=0,S_i=1)$, we create two observations $(V_i = ss, \bm{X}_i)$ and $(V_i=s\&\bar{s}, \bm{X}_i)$ with weight $w_i^k=\Pr(U_i = ss\mid -) $ and $\Pr(U_i=\bar{s}s\mid -)$;
for each $i$ with $(Z_i=0,S_i=0)$, we create two observations $(V_i = s\&\bar{s}, \bm{X}_i)$ and $(V_i=\bar{s}\bar{s}, \bm{X}_i)$ with weights $w_i^k = \Pr(U_i = s\bar{s}\mid -) $ and $w_i^k = \Pr( U_i =  ss \mid -)$, respectively.

In the M-step, we fit a Multinomial Logistic model $\Pr(V\mid \bm{X} ) $ based on the weight samples created above, and update the parameters to be
$(\bm{\theta}_{s\bar{s}}^{k+1}, \bm{\theta}_{\bar{s} \bar{s} } ^{k+1} ) $.

\section{Explicit Forms of the Estimators}
\label{sec::weighting-estimators}

We present explicit forms of moment estimators by weighting and model-assisted estimators via covariate adjustment in Section 5. Let $N_1$ and $N_0$ be the treatment and control sample sizes.

%
%
%
%
%
%
%

\subsection{With Strong Monotonicity}

With PI, by Proposition 1 the moment estimators for PCEs are
\begin{eqnarray*}
\widehat{ACE}_{s\bar{s}}  &=&
\frac{1}{n_{11}} \sum_{  \{ i: Z_i=1, S_i=1\}  } Y_i
-   \frac{1}{N_0} \sum_{  \{i:Z_i=0\} }  \widehat{w}_{ s\bar{s} }(\bm{X}_i)    Y_i ,\\
\widehat{ACE}_{\bar{s}\bar{s}}  &=&
\frac{1}{n_{10}} \sum_{  \{ i: Z_i=1, S_i=0\}  } Y_i
-   \frac{1}{N_0}  \sum_{  \{i:Z_i=0\} }  \widehat{w}_{ \bar{s}\bar{s} }(\bm{X}_i)  Y_i,
\end{eqnarray*}
and the model-assisted estimators for PCEs are
\begin{eqnarray*}
\widehat{ACE}_{s\bar{s}}^{\textrm{adj}}  &=&
\frac{1}{n_{11}} \sum_{  \{ i: Z_i=1, S_i=1\}  } (Y_i   - \bm{\beta}_{1,s\bar{s}}^\top \bm{X}_i  )
 -  \frac{1}{N_0} \sum_{  \{i:Z_i=0\} }  \widehat{w}_{ s\bar{s} }(\bm{X}_i)    (Y_i - \bm{\beta}_{0, s\bar{s}}^\top \bm{X} )   \\
&& +  \frac{1}{ n_{11} + N_0  } (\bm{\beta}_{1,s\bar{s}} - \bm{\beta}_{0, s\bar{s}} )^\top   \left(   \sum_{  \{ i: Z_i=1, S_i=1\}  }   \bm{X}_i +  \sum_{  \{i:Z_i=0\} } \widehat{w}_{ s\bar{s} }(\bm{X}_i)  \bm{X}_i \right),
\\ 
\widehat{ACE}_{\bar{s}\bar{s}}^{\textrm{adj}}  &=&
\frac{1}{n_{10}} \sum_{  \{ i: Z_i=1, S_i=0\}  } (Y_i   - \bm{\beta}_{1,\bar{s}\bar{s}}^\top \bm{X}_i  )
 -  \frac{1}{N_0} \sum_{  \{i:Z_i=0\} }  \widehat{w}_{ \bar{s}\bar{s} }(\bm{X}_i)    (Y_i - \bm{\beta}_{0, \bar{s}\bar{s}}^\top \bm{X} )   \\
&& +  \frac{1}{ n_{10} + N_0  } (\bm{\beta}_{1,\bar{s}\bar{s}} - \bm{\beta}_{0, \bar{s}\bar{s}} )^\top   \left(   \sum_{  \{ i: Z_i=1, S_i=0\}  }   \bm{X}_i +  \sum_{  \{i:Z_i=0\} } \widehat{w}_{ \bar{s}\bar{s} }(\bm{X}_i)  \bm{X}_i \right).
\end{eqnarray*}

We choose
$
\bm{\beta}_{1,s\bar{s}}
$
as the linear regression coefficients of $Y_i$ on $\bm{X}_i$ using samples with $(Z_i=1, S_i=1) $,
$
\bm{\beta}_{0,s\bar{s}}
$
as the weighted linear regression coefficients of $Y_i$ on $\bm{X}_i$ using samples with $Z_i=0$ and weights $  w_{s\bar{s}} (\bm{X}_i) $,
$
\bm{\beta}_{1,\bar{s}\bar{s}}
$
as the linear regression coefficients of $Y_i$ on $\bm{X}_i$ using samples with $( Z_i=1, S_i=0) $, and
$
\bm{\beta}_{0,\bar{s}\bar{s}}
$
as the weighted linear regression coefficients of $Y_i$ on $\bm{X}_i$ using samples with $Z_i=0$ and weights $  w_{\bar{s}\bar{s}} (\bm{X}_i) $.

Without PI, we need only to change the estimates of the weights for a fixed sensitivity parameter $\varepsilon$ as in Proposition 3, and obtain estimators of the same forms as above.

\subsection{With Monotonicity}

With GPI, by Proposition 2 the moment estimators for PCEs are
\begin{eqnarray*}
\widehat{ACE}_{s\bar{s}}  &=&
\frac{1}{n_{11}}   \sum_{  \{ i: Z_i=1, S_i=1\}  } \widehat{w}_{1,s\bar{s}} (\bm{X}_i ) Y_i
-  \frac{1}{n_{00}} \sum_{  \{ i: Z_i=0, S_i=0\}  } \widehat{w}_{0,s\bar{s}} (\bm{X}_ i ) Y_i, \\
\widehat{ACE}_{\bar{s}\bar{s}}  &=&
\frac{1}{n_{10}}  \sum_{  \{ i: Z_i=1, S_i=0\}  }  Y_i
- \frac{1}{n_{00}}  \sum_{  \{ i: Z_i=0, S_i=0\}  } \widehat{w}_{0,\bar{s}\bar{s}} (\bm{X}_ i ) Y_i , \\
\widehat{ACE}_{ss}  &=&
\frac{1}{n_{11}} \sum_{  \{ i: Z_i=1, S_i=1\}  } \widehat{w}_{1,ss} (\bm{X}_i ) Y_i
-  \frac{1}{n_{01}}  \sum_{  \{ i: Z_i=0, S_i=1\}  }  Y_i,
\end{eqnarray*}
and the model-assisted estimators for PCEs are
\begin{eqnarray*}
\widehat{ACE}_{s\bar{s}}^{\textrm{adj}}  &=&
\frac{1}{n_{11}} \sum_{  \{ i: Z_i=1, S_i=1\}  } \widehat{w}_{1, s\bar{s} } (\bm{X}_i ) (Y_i   - \bm{\beta}_{1,s\bar{s}}^\top \bm{X}_i  )
 -  \frac{1}{n_{00}} \sum_{  \{i:Z_i=0, S_i=0\} }  \widehat{w}_{0, s\bar{s} }(\bm{X}_i)    (Y_i - \bm{\beta}_{0, s\bar{s}}^\top \bm{X} )   \\
&& +  \frac{1}{ n_{11} + n_{00}  } (\bm{\beta}_{1,s\bar{s}} - \bm{\beta}_{0, s\bar{s}} )^\top   \left(   \sum_{  \{ i: Z_i=1, S_i=1\}  }   \widehat{w}_{1, s\bar{s} }(\bm{X}_i)\bm{X}_i +  \sum_{  \{i:Z_i=0, S_i=0\} } \widehat{w}_{0, s\bar{s} }(\bm{X}_i)  \bm{X}_i \right),
\\ 
\widehat{ACE}_{\bar{s}\bar{s}}^{\textrm{adj}}  &=&
\frac{1}{n_{10}} \sum_{  \{ i: Z_i=1, S_i=0\}  } (Y_i   - \bm{\beta}_{1,\bar{s}\bar{s}}^\top \bm{X}_i  )
 -  \frac{1}{n_{00}} \sum_{  \{i:Z_i=0, S_i=0\} }  \widehat{w}_{0, \bar{s}\bar{s} }(\bm{X}_i)    (Y_i - \bm{\beta}_{0, \bar{s}\bar{s}}^\top \bm{X} )   \\
&& +  \frac{1}{ n_{10} + n_{00}  } (\bm{\beta}_{1,\bar{s}\bar{s}} - \bm{\beta}_{0, \bar{s}\bar{s}} )^\top   \left(   \sum_{  \{ i: Z_i=1, S_i=0\}  }   \bm{X}_i +  \sum_{  \{i:Z_i=0, S_i=0\} } \widehat{w}_{0, \bar{s}\bar{s} }(\bm{X}_i)  \bm{X}_i \right), \\
\widehat{ACE}_{ss}^{\textrm{adj}}  &=&
\frac{1}{n_{11}} \sum_{  \{ i: Z_i=1, S_i=1\}  } \widehat{w}_{1, ss }(\bm{X}_i)(Y_i   - \bm{\beta}_{1, ss}^\top \bm{X}_i  )
 -  \frac{1}{n_{01}} \sum_{  \{i:Z_i=0, S_i=1\} }  (Y_i - \bm{\beta}_{0, ss}^\top \bm{X} )   \\
&& +  \frac{1}{ n_{10} + n_{01} } (\bm{\beta}_{1, ss} - \bm{\beta}_{0, ss} )^\top   \left(   \sum_{  \{ i: Z_i=1, S_i=1\}  } \widehat{w}_{1, ss }  (\bm{X}_i)\bm{X}_i +  \sum_{  \{i:Z_i=0, S_i=1\} } \bm{X}_i \right) .
\end{eqnarray*}

We choose
$
\bm{\beta}_{1,u}
$
for $u=s\bar{s}$ and $ss$ as the weighted linear regression coefficients of $Y_i$ on $\bm{X}_i$ using samples with $(Z_i=1,S_i=1)$ and weights $  w_{1,u} (\bm{X}_i) $,
$
\bm{\beta}_{1,\bar{s}\bar{s}}
$
as the linear regression coefficients of $Y_i$ on $\bm{X}_i$ using samples with $(Z_i=1,S_i=0)$,
$
\bm{\beta}_{0,ss}
$
as the linear regression coefficients of $Y_i$ on $\bm{X}_i$ using samples with $(Z_i=0,S_i=1)$, and
$
\bm{\beta}_{0,u}
$
for $u=s\bar{s}$ and $\bar{s}\bar{s}$ as the weighted linear regression coefficients of $Y_i$ on $\bm{X}_i$ using samples with $(Z_i=0,S_i=0)$ and weights $  w_{0,u} (\bm{X}_i) $.

Without GPI, we need only to change the estimates of the weights for fixed sensitivity parameters $(\varepsilon_1, \varepsilon_0).$

\subsection{Without Monotonicity}
By Proposition 6  the estimators for PCEs are
\begin{eqnarray*}
\widehat{ACE}_{s\bar{s}}  &=&
\frac{1}{n_{11}}   \sum_{  \{ i: Z_i=1, S_i=1\}  } \widehat{w}_{1,s\bar{s}} (\bm{X}_i ) Y_i
-  \frac{1}{n_{00}} \sum_{  \{ i: Z_i=0, S_i=0\}  } \widehat{w}_{0,s\bar{s}} (\bm{X}_ i ) Y_i,
\\
\widehat{ACE}_{\bar{s}\bar{s}}  &=&
\frac{1}{n_{10}}   \sum_{  \{ i: Z_i=1, S_i=0\}  } \widehat{w}_{1,\bar{s}\bar{s}} (\bm{X}_i ) Y_i
-  \frac{1}{n_{00}} \sum_{  \{ i: Z_i=0, S_i=0\}  } \widehat{w}_{0,\bar{s}\bar{s}} (\bm{X}_ i ) Y_i,\\
\widehat{ACE}_{ss}  &=&
\frac{1}{n_{11}}   \sum_{  \{ i: Z_i=1, S_i=1\}  } \widehat{w}_{1,ss} (\bm{X}_i ) Y_i
-  \frac{1}{n_{01}} \sum_{  \{ i: Z_i=0, S_i=1\}  } \widehat{w}_{0,ss} (\bm{X}_ i ) Y_i,
\\
\widehat{ACE}_{\bar{s}s}  &=&
\frac{1}{n_{10}}   \sum_{  \{ i: Z_i=1, S_i=0\}  } \widehat{w}_{1,\bar{s}s} (\bm{X}_i ) Y_i
-  \frac{1}{n_{01}} \sum_{  \{ i: Z_i=0, S_i=1\}  } \widehat{w}_{0,\bar{s}s} (\bm{X}_ i ) Y_i,
\end{eqnarray*}
and the model-assisted estimators for PCEs are
\begin{eqnarray*}
\widehat{ACE}_{s\bar{s}}^{\textrm{adj}}  &=&
\frac{1}{n_{11}} \sum_{  \{ i: Z_i=1, S_i=1\}  } \widehat{w}_{1, s\bar{s} } (\bm{X}_i ) (Y_i   - \bm{\beta}_{1,s\bar{s}}^\top \bm{X}_i  )
 -  \frac{1}{n_{00}} \sum_{  \{i:Z_i=0, S_i=0\} }  \widehat{w}_{0, s\bar{s} }(\bm{X}_i)    (Y_i - \bm{\beta}_{0, s\bar{s}}^\top \bm{X} )   \\
&& +  \frac{1}{ n_{11} + n_{00}  } (\bm{\beta}_{1,s\bar{s}} - \bm{\beta}_{0, s\bar{s}} )^\top   \left(   \sum_{  \{ i: Z_i=1, S_i=1\}  }   \widehat{w}_{1, s\bar{s} }(\bm{X}_i)\bm{X}_i +  \sum_{  \{i:Z_i=0, S_i=0\} } \widehat{w}_{0, s\bar{s} }(\bm{X}_i)  \bm{X}_i \right), \\
\widehat{ACE}_{\bar{s}\bar{s}}^{\textrm{adj}}  &=&
\frac{1}{n_{10}} \sum_{  \{ i: Z_i=1, S_i=0\}  } \widehat{w}_{1, \bar{s}\bar{s} } (\bm{X}_i ) (Y_i   - \bm{\beta}_{1,s\bar{s}}^\top \bm{X}_i  )
 -  \frac{1}{n_{00}} \sum_{  \{i:Z_i=0, S_i=0\} }  \widehat{w}_{0, \bar{s}\bar{s} }(\bm{X}_i)    (Y_i - \bm{\beta}_{0, s\bar{s}}^\top \bm{X} )   \\
&& +  \frac{1}{ n_{10} + n_{00}  } (\bm{\beta}_{1,\bar{s}\bar{s}} - \bm{\beta}_{0, \bar{s}\bar{s}} )^\top   \left(   \sum_{  \{ i: Z_i=1, S_i=0\}  }   \widehat{w}_{1, \bar{s}\bar{s} }(\bm{X}_i)\bm{X}_i +  \sum_{  \{i:Z_i=0, S_i=0\} } \widehat{w}_{0, \bar{s}\bar{s} }(\bm{X}_i)  \bm{X}_i \right), \\
\widehat{ACE}_{ss}^{\textrm{adj}}  &=&
\frac{1}{n_{11}} \sum_{  \{ i: Z_i=1, S_i=1\}  } \widehat{w}_{1, ss }  (\bm{X}_i )  (Y_i   - \bm{\beta}_{1,ss}^\top \bm{X}_i  )
 -  \frac{1}{n_{01}} \sum_{  \{i:Z_i=0, S_i=1\} }  \widehat{w}_{0, ss }(\bm{X}_i)    (Y_i - \bm{\beta}_{0, ss}^\top \bm{X} )   \\
&& +  \frac{1}{ n_{11} + n_{01}  } (\bm{\beta}_{1,ss} - \bm{\beta}_{0, ss} )^\top   \left(   \sum_{  \{ i: Z_i=1, S_i=1\}  }   \widehat{w}_{1, ss }(\bm{X}_i)\bm{X}_i +  \sum_{  \{i:Z_i=0, S_i=1\} } \widehat{w}_{0, ss }(\bm{X}_i)  \bm{X}_i \right), \\
\widehat{ACE}_{\bar{s}s}^{\textrm{adj}}  &=&
\frac{1}{n_{10}} \sum_{  \{ i: Z_i=1, S_i=0\}  } \widehat{w}_{1, \bar{s}s }  (\bm{X}_i )  (Y_i   - \bm{\beta}_{1,\bar{s}s}^\top \bm{X}_i  )
 -  \frac{1}{n_{01}} \sum_{  \{i:Z_i=0, S_i=1\} }  \widehat{w}_{0, \bar{s}s }(\bm{X}_i)    (Y_i - \bm{\beta}_{0, \bar{s}s}^\top \bm{X} )   \\
&& +  \frac{1}{ n_{10} + n_{01}  } (\bm{\beta}_{1,\bar{s}s} - \bm{\beta}_{0, \bar{s}s} )^\top   \left(   \sum_{  \{ i: Z_i=1, S_i=0\}  }   \widehat{w}_{1, \bar{s}s }(\bm{X}_i)\bm{X}_i +  \sum_{  \{i:Z_i=0, S_i=1\} } \widehat{w}_{0, \bar{s}s }(\bm{X}_i)  \bm{X}_i \right) .
\end{eqnarray*}
We choose
$
\bm{\beta}_{1,u}
$
for $u=s\bar{s}$ and $ss$ as the weighted linear regression coefficients of $Y_i$ on $\bm{X}_i$ using samples with $(Z_i=1,S_i=1)$ and weights $  w_{1,u} (\bm{X}_i) $,
$
\bm{\beta}_{1,u}
$
for $u= s\bar{s}$ and $ \bar{s}\bar{s}$ as the weighted linear regression coefficients of $Y_i$ on $\bm{X}_i$ using samples with $(Z_i=1,S_i=0)$ and weights $  w_{1,u} (\bm{X}_i) $,
$
\bm{\beta}_{0,u}
$
for $u=\bar{s}s$ and $ss$ as the weighted linear regression coefficients of $Y_i$ on $\bm{X}_i$ using samples with $(Z_i=0,S_i=1)$ and weights $  w_{0,u} (\bm{X}_i) $, and
$
\bm{\beta}_{0,u}
$
for $u=s\bar{s}$ and $\bar{s}\bar{s}$ as the weighted linear regression coefficients of $Y_i$ on $\bm{X}_i$ using samples with $(Z_i=0,S_i=0)$ and weights $  w_{0,u} (\bm{X}_i) $.

\section{Additional Simulations}\label{sec::additional simulation}

We compare the finite sample performances of our model-assisted estimator with Jo and Stuart (2009)'s linear model-based estimator. The later assumes Strong Monotonicity, and we conduct simulations accordingly.

Let the sample size be $500$. We generate $X_{i1}, \ldots, X_{i4} \iidsim N(0,1)$ and $X_{i5} \sim \mathrm{Bern}(1/2),$ and let $\bm X_i = (1, X_{i1}, \ldots, X_{i5})^\top .$ We consider five cases indexed by the parameter $\theta=-1,-0.5,0,0.5,1$. For each $\theta,$ we generate principal strata by  
$
\mathrm{logit} ~ \mathrm{Pr}(U_i = s\bar{s}\mid \bm X_i)=\bm\theta^\top\bm X_i,
$
where $\bm{\theta}=(0,0.5,0.5,1,1,\theta)^\top$. We generate potential outcomes using a linear model as in Jo and Stuart (2009) by
$
Y_i(1) \mid \bm X_i \sim  N\left(\sum_{j=1}^5 X_{ij} + 2\cdot I_{\{U=s\bar{s}\}}+1, 1\right)
$ 
and
$
Y_i(0) \mid \bm X_i \sim  N\left(\sum_{j=1}^5 X_{ij} + 2, 1\right);
$
using a quadratic model by
$
Y_i(1) \mid \bm X_i \sim  N\left(\sum_{j=1}^5 X_{ij} + \sum_{j=1}^4 X^2_{ij} + 2\cdot I_{\{U=s\bar{s}\}}+1, 1\right)
$
and
$
Y_i(0) \mid \bm X_i \sim  N\left(\sum_{j=1}^5 X_{ij} + \sum_{j=1}^4 X^2_{ij} + 2, 1\right).
$

To examine the performance of the estimators with and without PI, in each scenario we analyze the data with and without the binary covariate $X_{i5},$ and respectively label the results as ``oracle'' and ``obs.'' Without using $X_{i5}$, $\theta$ is a measure of the violation from PI. Figure \ref{fg:simu} presents only the results for $ACE_{s \bar{s}}$, and omits similar results for other principal strata. We report the biases and standard errors of the estimators over $1000$ repeated samplings.

For linear case which favors the estimator in Jo and Stuart (2009), with the binary covariate both estimators has small biases, and without the binary covariate both estimators have similar bias issues. However, in both scenarios our estimator has smaller standard error. For the quadratic case in which the estimator in Jo and Stuart (2009) is mis-specified, with or without the binary covariate the estimator in Jo and Stuart (2009) suffers from severe bias issues, but our estimator has small biases. Again, in both scenarios our estimator has smaller standard error.

\begin{figure}[htbp]
\centering
\includegraphics[height=1.2\linewidth, width=0.8\linewidth]{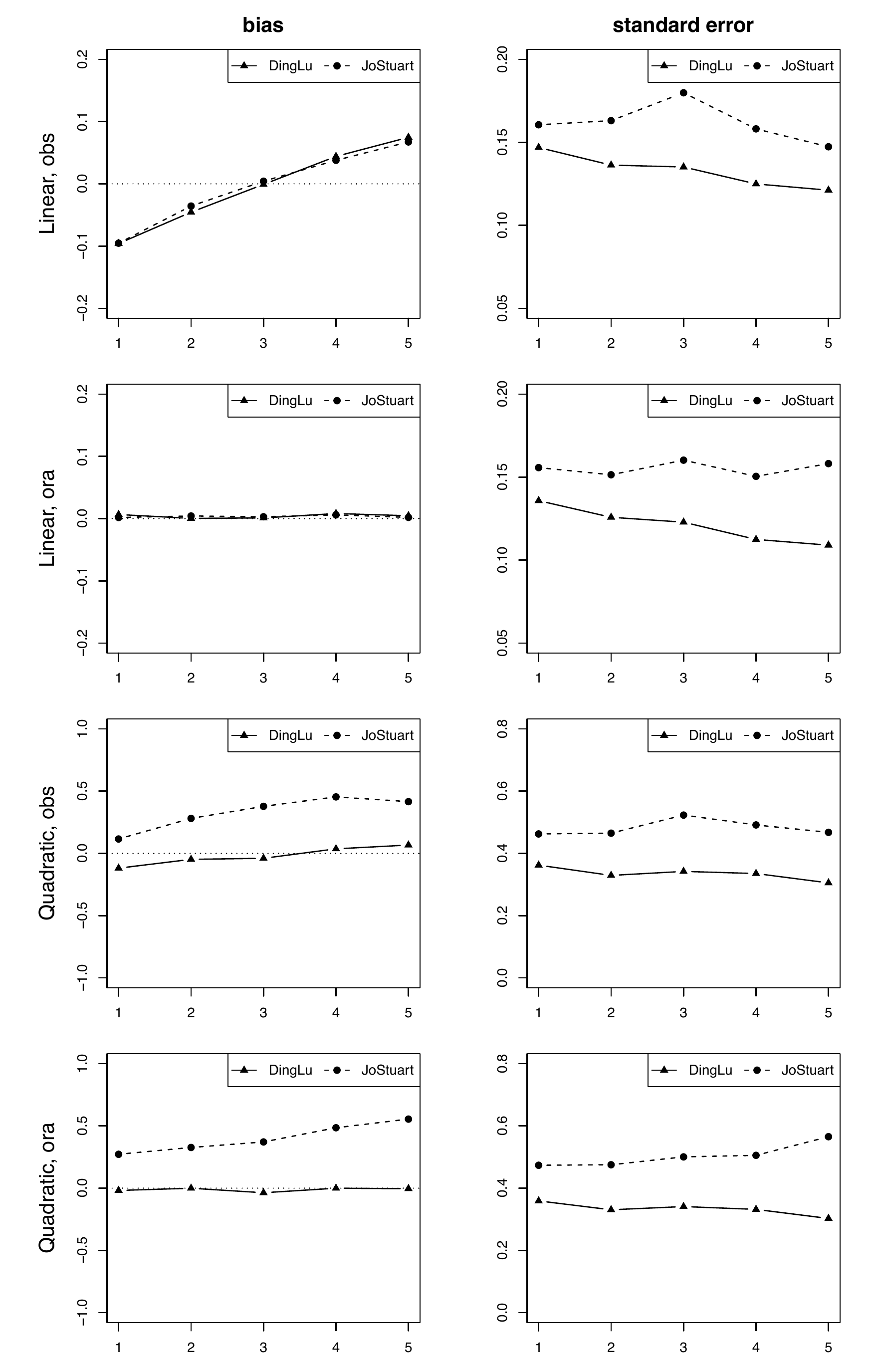}
\caption{
Comparison between our estimator (``DingLu'') and Jo and Stuart (2009)'s estimator (``JoStuart'') in terms of bias and standard error. The first two rows have linear outcome models, the last two rows have quadratic outcome models. The labels ``oracle'' and ``obs'' correspond to analysis with and without the binary covariate that ensures PI with other covariates. All of them are under Strong Monotonicity. The horizontal axis shows the case numbers.}
\label{fg:simu}
\end{figure}

\end{document}